\newtheorem{thm}{Theorem}[section]
\newtheorem{lem}[thm]{Lemma}
\newtheorem{prop}[thm]{Proposition}
\theoremstyle{definition}
\newtheorem{defn}[thm]{Definition}
\newtheorem{eg}[thm]{Example}
\theoremstyle{remark}
\newtheorem*{rmk}{Remark}
\numberwithin{equation}{section}
\newcommand{\eps}{\varepsilon}
\newcommand{\DEF}{{:=}}
\newcommand{\FED}{{=:}}
\newcommand{\Cset}{\mathbb{C}}
\newcommand{\Rset}{\mathbb{R}}
\newcommand{\PT}[1]{\mathbf{#1}}
\newcommand{\re}{\mathop{\mathrm{Re}}}
\newcommand{\dd}{\,d}
\DeclareMathOperator{\sgn}{sgn}
\DeclareMathOperator{\BesselJ}{J}
\DeclareMathOperator{\gdist}{d}
\DeclareMathOperator{\gammafcn}{\Gamma}
\DeclareMathOperator{\ExpIntegralEi}{Ei}
\DeclareMathOperator{\digammafcn}{\psi}
\DeclareMathOperator{\FresnelS}{S}
\DeclareMathOperator{\sinc}{sinc}
\DeclareMathOperator{\zetafcn}{\zeta}
\DeclareMathOperator{\HyperF}{F}
\newcommand{\Hypergeom}[5]{{\sideset{_#1}{_#2}\HyperF\!\left(\substack{\displaystyle#3\\\displaystyle#4};#5\right)}}
\newcommand{\Pochhsymb}[2]{{\left(#1\right)_{#2}}}
\title[Discrete Energy Asymptotics on a Riemannian circle]{Discrete Energy Asymptotics on a Riemannian circle} 
\author{ J. S. Brauchart\textasteriskcentered, D. P. Hardin\textdagger, and E. B. Saff\textdagger} 
\thanks{\noindent \textasteriskcentered The research of this author was supported by an APART-Fellowship of the Austrian Academy of Sciences. \\
\textdagger The research of these authors  was supported, in part, by the U. S. National Science Foundation under grant DMS-0808093. 
}
\date{\today}
\begin{document}

\address{J. S. Brauchart, D. P. Hardin and E. B. Saff:
Center for Constructive Approximation, 
Department of Mathematics, 
Vanderbilt University, 
Nashville, TN 37240, 
USA }
\email{Johann.Brauchart@Vanderbilt.Edu}
\email{Doug.Hardin@Vanderbilt.Edu}
\email{Edward.B.Saff@Vanderbilt.Edu}

\begin{abstract} 
We derive the complete asymptotic expansion in terms of powers of $N$ for the geodesic $f$-energy of $N$ equally spaced points on a rectifiable simple closed curve $\Gamma$ in $\Rset^p$, $p\geq2$, as $N \to \infty$. For $f$ decreasing and convex, such a point configuration minimizes the $f$-energy $\sum_{j\neq k}f(d(\mathbf{x}_j, \mathbf{x}_k))$, where $d$ is the geodesic distance (with respect to $\Gamma$) between points on $\Gamma$. Completely monotonic functions, analytic kernel functions, Laurent series, and weighted kernel functions $f$ are studied. 
Of particular interest are the geodesic Riesz potential $1/d^s$ ($s \neq 0$) and the geodesic logarithmic potential $\log(1/d)$. By analytic continuation we deduce the expansion for all complex values of $s$. 
\end{abstract}

\keywords{Discrete Energy Asymptotics, Geodesic Riesz Energy, Geodesic Logarithmic Energy, Riemannian Circle, Riemann Zeta Function, General Kernel Functions, Euler-MacLaurin Summation Formula}
\subjclass[2000]{52A40}

\maketitle


\section{Introduction}

Throughout this article, $\Gamma$ is a {\em Riemannian circle} (that is, a rectifiable simple closed curve in $\Rset^p$, $p\geq2$) with length $|\Gamma|$ and associated (Lebesgue) arclength measure $\sigma = \sigma_\Gamma$. 
We denote by $\ell(\PT{x}, \PT{y})$ the length of the arc of $\Gamma$ from $\PT{x}$ to $\PT{y}$, where $\PT{x}$ precedes $\PT{y}$ on $\Gamma$. Thus $\ell(\PT{x}, \PT{y}) + \ell(\PT{y}, \PT{x}) = \left| \Gamma \right|$ for all $\PT{x}, \PT{y} \in \Gamma$. 
The {\em geodesic distance $\gdist(\PT{x},\PT{y})$} between $\PT{x}$ and $\PT{y}$ on $\Gamma$ is given by the length of the shorter arc connecting $\PT{x}$ and $\PT{y}$, that is
\begin{equation} \label{geodesic.dist}
\begin{split}
\gdist(\PT{x},\PT{y}) \DEF \gdist_\Gamma(\PT{x},\PT{y}) 
&\DEF \min \left\{ \ell(\PT{x}, \PT{y}), \ell(\PT{y}, \PT{x}) \right\} 
= \frac{\left| \Gamma \right|}{2} - \left| \ell( \PT{x}, \PT{y} ) - \frac{\left| \Gamma \right|}{2} \right|.
\end{split}
\end{equation}
The geodesic distance between two points on $\Gamma$ can be at most $|\Gamma|/2$.

Given a lower semicontinuous function $f:[0, | \Gamma | / 2] \to \Rset\cup\{+\infty\}$, the discrete $f$-energy problem is concerned with properties of $N$ point systems $\PT{z}_{1,N}^*, \dots, \PT{z}_{N,N}^*$ on $\Gamma$ ($N\geq2$) that minimize the $f$-energy functional
\begin{equation} \label{G.f}
G_f(\PT{x}_1,\dots,\PT{x}_N) \DEF \sum_{j \neq k} f( \gdist(\PT{x}_j,\PT{x}_k) ) \DEF \mathop{\sum_{j=1}^N\sum_{k=1}^N}_{j \neq k} f ( \gdist(\PT{x}_j,\PT{x}_k) ),
\end{equation}
over all $N$ point configurations $\omega_N$ of not necessarily distinct points $\PT{x}_1, \dots, \PT{x}_N$ on $\Gamma$. 
The following result asserts that equally spaced points (with respect to arclength) on $\Gamma$ are minimal $f$-energy point configurations for a large class of functions $f$. 
\begin{prop} \label{prop:optimality}
Let $f:[0,| \Gamma | / 2] \to \Rset\cup\{+\infty\}$ be a lower semicontinuous function. 

{\rm (A)} If $f$ is convex and decreasing, then the geodesic $f$-energy of $N$ points on $\Gamma$ attains a global minimum at $N$ equally spaced points on $\Gamma$. If $f$ is strictly convex, then these are the only configurations that attain a global minimum. 

{\rm (B)} If $f$ is concave and decreasing, then the geodesic $f$-energy of $N$ points on $\Gamma$ attains a global minimum at antipodal systems $\omega_N$ with $\lceil N / 2 \rceil$ points at $\PT{p}$ and $\lfloor N / 2 \rfloor$ points at $\PT{q}$, where $\PT{p}$ and $\PT{q}$ are any pair of points on $\Gamma$ with geodesic distance $| \Gamma | / 2$. If $f$ is strictly concave, then these are the only configurations that attain a global minimum. 
\end{prop}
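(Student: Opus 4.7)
The plan is to parametrize $\Gamma$ by arclength as $\Rset/L\Zset$ with $L = |\Gamma|$, order the points cyclically as $\PT{x}_1, \dots, \PT{x}_N$ with consecutive gaps $\alpha_j$ (so $\sum_j \alpha_j = L$), and introduce the partial sums $\beta_{j,m} \DEF \alpha_j + \alpha_{j+1} + \dots + \alpha_{j+m-1}$ (indices mod $N$) together with the even extension $g(x) \DEF f(\min(x, L-x))$ on $[0, L]$. Since $\gdist(\PT{x}_j, \PT{x}_{j+m}) = \min(\beta_{j,m}, L - \beta_{j,m})$, the energy splits as
\begin{equation*}
G_f(\omega_N) \;=\; \sum_{m=1}^{N-1} \sum_{j=1}^{N} g(\beta_{j,m}), \qquad \tfrac{1}{N}\sum_{j=1}^{N} \beta_{j,m} \;=\; mL/N.
\end{equation*}
The entire proof will be driven by applying Jensen's inequality slab-by-slab in $m$: using the convex function $g$ for (A) and using the concave piecewise-linear function $\tilde g(x) \DEF \min(x, L-x)$ for an auxiliary distance-sum bound needed in (B).

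For part (A), I first check that $g$ is convex on $[0,L]$: on $[0, L/2]$ and $[L/2, L]$ separately this is inherited from $f$, and at the reflection point $L/2$ the kink is in the direction consistent with convexity (left derivative does not exceed right derivative) precisely because $f$ is decreasing. Jensen's inequality in each $m$-slab then gives $\sum_j g(\beta_{j,m}) \ge N\, g(mL/N)$, and summing over $m$ recovers exactly the energy of $N$ equally-spaced points, so the latter is a global minimum. For strictly convex $f$ (which is then automatically strictly decreasing), a short case analysis shows $g$ is strictly convex across $L/2$ too, so Jensen equality in the $m=1$ slab forces $\alpha_j = L/N$ for every $j$, yielding uniqueness up to rotation.

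For part (B), I set $h(x) \DEF f(0) - f(x)$ on $[0, L/2]$, which under the hypotheses on $f$ is convex and increasing with $h(0) = 0$. The chord inequality for convex functions vanishing at the origin gives
\begin{equation*}
h(x) \;\le\; \frac{2x}{L}\, h(L/2), \qquad x \in [0, L/2],
\end{equation*}
with strict inequality at every $x \in (0, L/2)$ when $h$ is strictly convex. Summing this over ordered pairs reduces the task to bounding $\sum_{j \ne k} \gdist(\PT{x}_j, \PT{x}_k)$ from above; applying the $\beta_{j,m}$ decomposition with $\tilde g$ (concave) and Jensen in the opposite direction yields
\begin{equation*}
\sum_{j \ne k} \gdist(\PT{x}_j, \PT{x}_k) \;=\; \sum_{m=1}^{N-1} \sum_{j=1}^{N} \tilde g(\beta_{j,m}) \;\le\; \sum_{m=1}^{N-1} N\, \tilde g(mL/N) \;=\; L \lfloor N^2/4 \rfloor.
\end{equation*}
Combining the two inequalities and using $2 \lfloor N^2/4 \rfloor = \lfloor N^2/2 \rfloor$ collapses everything to
\begin{equation*}
G_f(\omega_N) \;\ge\; \bigl[N(N-1) - \lfloor N^2/2 \rfloor\bigr] f(0) + \lfloor N^2/2 \rfloor\, f(L/2),
\end{equation*}
and a direct count shows this is exactly the $f$-energy of the antipodal system with $\lceil N/2 \rceil$ points at $\PT{p}$ and $\lfloor N/2 \rfloor$ at $\PT{q}$.

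The step I expect to require the most care is the uniqueness half of (B). Under strict concavity, equality in the chord inequality forces every pairwise geodesic distance to lie in $\{0, L/2\}$; fixing one point then forces the whole configuration to be supported on a single antipodal pair $\{\PT{p}, \PT{q}\}$. Writing $a, b$ for the numbers of points at $\PT{p}, \PT{q}$, equality in the distance-sum bound becomes the arithmetic condition $ab = \lfloor N^2/4 \rfloor$ subject to $a + b = N$, whose unique (up to swap) solution is $\{a, b\} = \{\lceil N/2 \rceil, \lfloor N/2 \rfloor\}$. The care needed is to verify that the slacks in the two separate inequalities — the pointwise chord bound and Jensen's inequality for $\tilde g$ — must close simultaneously, which indeed occurs at exactly the claimed configurations.
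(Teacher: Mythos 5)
Your proof is correct. For part (A) it has the same skeleton as the paper's ``winding number'' argument: decompose the energy into slabs indexed by the neighbor offset $m$ and apply Jensen's inequality in each slab. The only real difference is that you apply Jensen to the folded convex function $g(x)=f(\min(x,L-x))$ of the forward arc lengths $\beta_{j,m}$, whose slab averages are exactly $mL/N$, whereas the paper applies Jensen to $f$ of the geodesic distances and then separately verifies $\tfrac1N\sum_j \gdist(\PT{x}_j,\PT{x}_{j+m})\le \gdist(\PT{z}_{0,N},\PT{z}_{m,N})$ and invokes monotonicity of $f$; the two routes consume the hypothesis that $f$ is decreasing at equivalent points (you need it to make $g$ convex across the fold at $L/2$). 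For part (B) you perform the same reduction as the paper---replace $f$ by $h=f(0)-f$ and use the chord inequality $h(x)\le (2x/L)\,h(L/2)$ to reduce to maximizing the sum of geodesic distances---but where the paper simply cites the known optimality of antipodal configurations for the sum of distances (Fejes T{\'o}th), you prove the bound $\sum_{j\ne k}\gdist(\PT{x}_j,\PT{x}_k)\le L\lfloor N^2/4\rfloor$ yourself by rerunning the slab decomposition with the concave function $\min(x,L-x)$, which makes your part (B) self-contained; your uniqueness argument (strictness of the chord inequality forces all distances into $\{0,L/2\}$, hence support on an antipodal pair, and then $ab=\lfloor N^2/4\rfloor$ with $a+b=N$ pins down the multiplicities) is also carried out in more detail than in the paper. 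The one caveat, shared with the paper's own proof, is the tacit assumption $f(0)<\infty$ in part (B) so that $h$ is well defined; this excludes only the degenerate situation in which concavity and lower semicontinuity force $f\equiv+\infty$ on $[0,L/2)$.
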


Part (A) of Proposition~\ref{prop:optimality} follows from a standard ``winding number argument'' that can be traced back to the work of Fejes T{\'o}th~\cite{Fe1956}. The result in the general form stated here appears explicitly in the work of M. G{\"o}tz \cite[Proposition~9]{Go2003} who uses a similar notion of ``orbits.'' For completeness, we present in Section~\ref{sec:proofs} a brief proof of Part (A).

\begin{rmk}
Alexander and Stolarsky \cite{AlSt1974} studied the discrete and continuous energy problem for continuous kernel functions $f$ on compact sets. In particular, they established the optimality of vertices of a regular $N$-gon circumscribed by a circle $\mathcal{C}_a$ of radius $a$ for various non-Euclidean metrics $\rho(\PT{x}, \PT{y})$ (including the geodesic metric) with respect to an energy functional $E_{\sigma,\lambda}(\PT{x}_1, \dots, \PT{x}_N) \DEF \sigma([\rho(\PT{x}_j,\PT{x}_k)]^\lambda)$, $0 < \lambda \leq 1$, on $\mathcal{C}_a$ where $\sigma$ is an elementary symmetric function on $\binom{n}{2}$ real variables. 
This result does not extend to the complete class of functions in Proposition~\ref{prop:optimality} and vice versa. However, both cover the generalized sum of geodesic distances problem. 
\end{rmk}

In the case of Riesz potentials we set
\begin{equation*}
f_s(x) \DEF - x^{-s}, \quad s < 0, \qquad f_{0}(x) \DEF \log ( 1 / x ), \qquad f_s(x) \DEF x^{-s}, \quad s > 0.
\end{equation*}
Then Proposition~\ref{prop:optimality}(A) asserts that equally spaced points are unique (up to translation along the simple closed curve $\Gamma$) optimal {\em geodesic $f_s$-energy} points for $s > -1$. (For $s>0$ this fact is also proved in the dissertation of S. Borodachov \cite[Lemma~V.3.1]{Bo2006}, see also \cite{Bo2009pre}.)
%
Proposition~\ref{prop:optimality}(B) shows that for $s<-1$ and $N\geq3$, antipodal configurations are optimal $f_s$-energy points, but  equally spaced points are {\bf not}.
%
%
(We remark that if {\em Euclidean} distance is used instead of geodesic distance, then the $N$-th roots of unity on the unit circle cease to be optimal $f_s$-energy points when $s < - 2$, cf. \cite{Bj1956} and \cite{BrHaSa2009}.)

For $s=-1$ in the geodesic case, equally spaced points are optimal but so are antipodal and other configurations. Fejes T{\'o}th~\cite{Fe1959} showed that a configuration on the unit circle is optimal with respect to the {\em sum of geodesic distances}\footnote{The analogue problem for the {\em sum of (Euclidean) distances} on the unit circle was also studied by Fejes T{\'o}th~\cite{Fe1956} who proved that only (rotated copies) of the $N$-th roots of unity are optimal.} ($s=-1$) if and only if the system is centrally symmetric for an even number of points and it is the union of a centrally symmetric set and a set $\{\PT{x}_1, \dots, \PT{x}_{2k+1}\}$ such that each half circle determined by $\PT{x}_j$ ($j=1,\dots,2k+1$) contains $k$ of the points in its interior for an odd number of points. (This result is reproved in \cite{Ji2008}.)
%
These criteria easily carry over to Riemannian circles. In particular, any system of $N$ equally spaced points on $\Gamma$ and any antipodal system on $\Gamma$ satisfy these criteria. 

\begin{rmk} \label{rmk:cohn.kumar}
Equally spaced points on the unit circle are also {\em universally optimal} in the sense of Cohn and Kumar \cite{CoKu2007}, that is, they minimize the energy functional $\sum_{j \neq k} f( | \PT{x}_j - \PT{x}_k |^2 )$ for any {\em completely monotonic} potential function $f$; that is, for a function $f$ satisfying $(-1)^k f^{(k)}(x)>0$ for all integers $k \geq 0$ and all $x \in [0,2]$.
\end{rmk}


To determine the leading term in the energy asymptotics it is useful to consider the continuous energy problem.
Let $\mathfrak{M}(\Gamma)$ denote the class of Borel probability measures supported on $\Gamma$. The {\em geodesic $f$-energy} of $\mu \in \mathfrak{M}(\Gamma)$ and the {\em minimum geodesic $f$-energy} of $\Gamma$ are defined, respectively, as 
\begin{equation*}
\mathcal{I}_f^g[\mu] \DEF \int \int f( \gdist(\PT{x}, \PT{y}) ) \dd \mu(\PT{x}) \dd \mu(\PT{y}), \qquad V_f^g(\Gamma) \DEF \inf\left\{ \mathcal{I}_f^g[\mu] : \mu \in \mathfrak{M}(\Gamma) \right\}.
\end{equation*}
The {\em continuous $f$-energy problem} concerns the existence, uniqueness, and characterization of a measure $\mu_\Gamma$ satisfying $V_f^g(\Gamma) = \mathcal{I}_f^g[\mu_\Gamma]$. If such a measure exists, it is called an {\em equilibrium measure on $\Gamma$}. 
\begin{prop} \label{prop:leading.term}
Let $f$ be a Lebesgue integrable lower semicontinuous function on $[0, | \Gamma |/2]$ and convex and decreasing on $(0, | \Gamma |/2]$. Then the normalized arclength measure $\sigma_\Gamma$ is an equilibrium measure on $\Gamma$ and 
\begin{equation} \label{eq:limit}
\lim_{N \to \infty} G_f(\omega_N^{(f)}) / N^2 = V_f^g(\Gamma).
\end{equation}
If, in addition, $f$ is strictly decreasing, then $\sigma_\Gamma$ is unique.
\end{prop}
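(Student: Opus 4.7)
The plan is to parametrize $\Gamma$ by arc length, identifying it isometrically with $\Rset/|\Gamma|\Zset$ equipped with the translation-invariant metric $d(s,t) = \min(|s-t|,|\Gamma|-|s-t|)$; the pushforward of $\sigma_\Gamma$ is the normalized Haar measure, and a direct computation using translation invariance gives
\begin{equation*}
\mathcal{I}_f^g[\sigma_\Gamma] = \frac{2}{|\Gamma|}\int_0^{|\Gamma|/2} f(t)\dd t,
\end{equation*}
which is finite by the integrability of $f$. I would next evaluate $G_f(\omega_N^{(f)})$ at equally spaced points: Proposition~\ref{prop:optimality}(A) ensures optimality, and translation invariance collapses the double sum to
\begin{equation*}
\frac{G_f(\omega_N^{(f)})}{N^2} = \frac{1}{N}\sum_{k=1}^{N-1} f\bigl(\tfrac{|\Gamma|}{N}\min(k,N-k)\bigr),
\end{equation*}
a Riemann sum for $\mathcal{I}_f^g[\sigma_\Gamma]$. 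Since $f$ is monotone and integrable on $(0,|\Gamma|/2]$, this sum is sandwiched between $\frac{2}{|\Gamma|}\int_{|\Gamma|/N}^{|\Gamma|/2} f(u)\dd u$ and $\mathcal{I}_f^g[\sigma_\Gamma] + O(N^{-1})$, both converging to $\mathcal{I}_f^g[\sigma_\Gamma]$ (the lower bound via the standard fact $tf(t)\to 0$ for monotone integrable $f$, which kills the boundary term near the singularity).

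For the matching bound $V_f^g(\Gamma) \geq \mathcal{I}_f^g[\sigma_\Gamma]$ I would use a standard probabilistic argument. For any $\mu \in \mathfrak{M}(\Gamma)$, $N$ i.i.d.\ samples $\PT{X}_1,\dots,\PT{X}_N$ drawn from $\mu$ satisfy $\mathbb{E}[G_f(\PT{X}_1,\dots,\PT{X}_N)] = N(N-1)\,\mathcal{I}_f^g[\mu]$, so at least one realization satisfies $G_f \leq N(N-1)\mathcal{I}_f^g[\mu]$. Proposition~\ref{prop:optimality}(A) upgrades this to $G_f(\omega_N^{(f)}) \leq N(N-1)\mathcal{I}_f^g[\mu]$; dividing by $N^2$ and passing to the limit using the previous step gives $\mathcal{I}_f^g[\sigma_\Gamma] \leq \mathcal{I}_f^g[\mu]$ (the case $\mathcal{I}_f^g[\mu] = +\infty$ being vacuous). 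Combined with the trivial $V_f^g(\Gamma) \leq \mathcal{I}_f^g[\sigma_\Gamma]$, this identifies $V_f^g(\Gamma) = \mathcal{I}_f^g[\sigma_\Gamma]$ and simultaneously proves~\eqref{eq:limit}.

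The main obstacle is the uniqueness claim under strict decrease. It reduces to showing the bilinear form associated to $\mathcal{I}_f^g$ is strictly positive definite on signed measures of zero total mass, which forces two minimizers to coincide; equivalently, the Fourier coefficients $\hat K(n)$ of the translation-invariant kernel $K(t)=f(\min(|t|,|\Gamma|-|t|))$ on $\Rset/|\Gamma|\Zset$ must be strictly positive for all $n \neq 0$. Non-negativity is a Pólya-type fact obtained by representing the convex decreasing $f$ as a non-negative superposition of piecewise-linear ``V-kernels'' $t\mapsto (a-\min(|t|,|\Gamma|-|t|))_+$ (built from the second distributional derivative of $f$), each lifting to a positive semidefinite function on the circle. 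Strict positivity at \emph{every} mode is the genuinely subtle step---strict monotonicity rules out a trivial $f$ but does not obviously preclude cancellations in $\hat K$---and is where I expect the hypothesis to be used in a nontrivial way, possibly in combination with additional structural input.
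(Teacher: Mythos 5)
Your treatment of the equilibrium property and of the limit \eqref{eq:limit} is correct, and it takes a genuinely different (and more explicit) route than the paper. The paper argues abstractly: the sequence $G_f(\omega_N^{(f)})/[N(N-1)]$ is monotonically increasing and bounded above by $\mathcal{I}_f^g[\sigma_\Gamma]$, and the matching lower bound is obtained by regularizing $f$ from below by continuous functions $F_\eps$ and invoking weak-star convergence of the normalized counting measures of the optimal configurations together with lower semicontinuity of the energy functional. You instead exploit the fact that the minimizers are known explicitly (equally spaced points, via Proposition~\ref{prop:optimality}(A)) and compute the limit of the resulting Riemann sums directly, using monotonicity of $f$ to sandwich the sum and integrability to absorb the contribution of the cell adjacent to the singularity at $0$; the probabilistic averaging bound $G_f(\omega_N^{(f)})\le N(N-1)\,\mathcal{I}_f^g[\mu]$ then yields $\mathcal{I}_f^g[\sigma_\Gamma]\le\mathcal{I}_f^g[\mu]$ for every $\mu\in\mathfrak{M}(\Gamma)$. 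This bypasses the lower-semicontinuity machinery entirely and is, if anything, cleaner in this one-dimensional setting where the minimizers are explicit.

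The uniqueness claim is where your proposal stops, and your suspicion about the last step is well founded: strict decrease of $f$ does \emph{not} yield strict positive definiteness of the kernel, and in fact the claim as literally stated (uniqueness of the equilibrium measure) fails. Take $|\Gamma|=2\pi$ and $f(t)=\pi-t$, which is strictly decreasing, convex, continuous and integrable. With $K(t)=f(\min(|t|,2\pi-|t|))$ one computes $\widehat{K}(n)=(1-(-1)^n)/(\pi n^2)$, which vanishes for all even $n\neq0$, so the quadratic form is only positive \emph{semi}definite on zero-mass signed measures; concretely, $\mu=\tfrac12(\delta_{\PT{p}}+\delta_{\PT{q}})$ for antipodal $\PT{p},\PT{q}$ satisfies $\mathcal{I}_f^g[\mu]=\pi/2=\mathcal{I}_f^g[\sigma_\Gamma]$, so $\sigma_\Gamma$ is not the unique minimizer. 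This is exactly the continuous shadow of the paper's own observation that for $s=-1$ antipodal configurations are also optimal. The hypothesis that actually closes your Fourier program is strict \emph{convexity}: integration by parts gives $\pi\widehat{K}(n)=\tfrac1n\int_0^\pi(-f'(t))\sin(nt)\dd t$ (the boundary terms vanish since $tf(t)\to0$), and when $-f'$ is nonnegative and strictly decreasing the standard pairing of consecutive half-periods of $\sin(nt)$ makes every nonzero mode strictly positive, whence strict positive definiteness and uniqueness. So the gap you identify is real, but it reflects a defect in the stated hypothesis rather than in your method; note also that the paper's own proof of Proposition~\ref{prop:leading.term} is silent on uniqueness altogether.
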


The proofs of the propositions in this introduction are given in Section~\ref{sec:proofs}. 

Note that \eqref{eq:limit} provides the first term in the asymptotic expansion of $G_f(\omega_N^{(f)})$ for large $N$, that is $G_f(\omega_N^{(f)}) \sim V_f^g(\Gamma) \, N^2$ as $N\to\infty$. The goal of the present paper is to extend this asymptotic expansion to an arbitrary number of terms. The case when $\lim_{N \to \infty} G_f(\omega_N^{(f)}) / N^2 \to \infty$ as $N \to \infty$ is also studied. For a certain class of functions $f$ it turns out that the leading term is of the form $a_0 2 \zetafcn(s_0) | \Gamma |^{-s_0} N^{1+s_0}$ for some $s_0 > 1$, where $a_0 = \lim_{x \to 0^+} x^{s_0} f(x)$ is the coefficient of the dominant term in the asymptotic expansion of $f$ near the origin and $\zetafcn(s)$ is the classical Riemann zeta function. However, such a leading term might even not exist. Indeed, if the function $f$ has an essential singularity at $0$ and is otherwise analytic in a sufficiently large annulus centered at zero, then the asymptotics of the geodesic $f$-energy of equally spaced points on $\Gamma$ contains an infinite series part with rising positive powers of $N$ determined by the principal part of the Laurent expansion of $f$ at $0$. Consequently, there is no ``highest power of $N$'', see Examples~\ref{eg:ess.sing.1} and \ref{eg:ess.sing.2} below.

An outline of our paper is as follows. In Section~\ref{sec:f.energy}, the geodesic $f$-energy of equally spaced points on $\Gamma$ is investigated. In particular, completely monotonic functions, analytic kernel functions, Laurent series, and weighted kernel functions $f$ are considered. Illustrative examples complement this study. 
In Section~\ref{sec:geodesic.Riesz.s.energy}, the geodesic logarithmic energy and the geodesic Riesz $s$-energy of equally spaced points on $\Gamma$ are studied. The results are compared with their counterparts when $\gdist(\PT{\cdot}, \PT{\cdot})$ is replaced by the Euclidean metric. 
The proofs of the results are given in Section~\ref{sec:proofs}.

\section{The geodesic $f$-energy of equally spaced points on $\Gamma$}
\label{sec:f.energy}

\begin{defn} \label{def:main.general.f}
Given a kernel function $f:[0,|\Gamma|/2]\to\mathbb{C} \cup \{+\infty\}$, the {\em discrete geodesic $f$-energy} of $N$ equally spaced points $\PT{z}_{1,N}, \dots, \PT{z}_{N,N}$ on $\Gamma$ is denoted by
\begin{equation*}
\mathcal{M}(\Gamma,f;N) \DEF \sum_{j\neq k} f( \gdist(\PT{z}_{j,N},\PT{z}_{k,N}) ) = N \sum_{j=1}^{N-1} f( \gdist(\PT{z}_{j,N},\PT{z}_{N,N}) ).
\end{equation*}
\end{defn}

Set $N = 2 M + \kappa$ ($\kappa = 0, 1$). Using the fact that the points are equally spaced, it can be easily shown that
\begin{equation} \label{eq:cal.M.Gamma.f.N}
\mathcal{M}(\Gamma, f; N) = 2 N \sum_{n = 1}^{\lfloor N / 2 \rfloor} f( n \left| \Gamma \right| / N ) - \left( 1 - \kappa \right) f( \left| \Gamma \right| / 2 ) N.
\end{equation}
An essential observation is that the geodesic $f$-energy has (when expressed in terms of powers of $N$) different asymptotics for even $N$ and odd $N$. We remark that for real-valued functions $f$ a configuration of equally spaced points is optimal with respect to the geodesic $f$-energy defined in \eqref{G.f}, whenever $f$ satisfies the hypotheses of Proposition~\ref{prop:optimality}(A).
%

%
%
An application of the generalized Euler-MacLaurin summation formula (see Proposition~\ref{prop:Euler-MacLaurin.Summation} below) yields an exact formula for $\mathcal{M}(\Gamma, f; N)$ in terms of powers of $N$. The asymptotic analysis of this expression motivates the following definition.

\begin{defn} \label{def:admissible} A function $f:[0,|\Gamma|/2]\to\mathbb{C}\cup\{+\infty\}$ is called {\em admissible} if the following holds:
\begin{enumerate} \setlength{\itemsep}{3pt}
\item[(i)] $f$ has a continuous derivative of order $2p+1$ on the interval $(0, | \Gamma | / 2]$;
\item[(ii)] there exists a function $S_q(x)$ of the form $S_q(x) = \sum_{n=0}^{q} a_n \, x^{-s_n}$, where $a_n$ and $s_n$ ($n=0,\dots,q$) are complex numbers with $\re s_0 > \re s_1 > \cdots > \re s_q$ \footnote{The powers in $S_q(x)$ are principal values.} and $\re s_q + 2p > 0$ or $s_q = -2p$ such that for some $\delta>0$
\begin{enumerate}
\item $1 - \re s_q + \delta > 0$,
\item $\displaystyle \int_0^x \left\{ f(y) - S_{q}(y) \right\} \dd y = \mathcal{O}(x^{1+\delta-s_q})$ as $x \to 0^+$,
\item $\displaystyle \left\{ f(x) - S_{q}(x) \right\}^{(\nu)} = \mathcal{O}(x^{\delta-s_q-\nu})$ as $x \to 0^+$ for all $\nu = 0, 1, \dots, 2p+1$. 
\end{enumerate}
\end{enumerate}
\end{defn}
 
For $p\geq1$ an integer the following sum arises in the main theorems describing the asymptotics of $\mathcal{M}(\Gamma, f; N)$:
\begin{equation}
\mathcal{B}_p(\Gamma, f; N) \DEF \frac{2}{\left| \Gamma \right|} N^2 \sum_{n = 1}^{p} \frac{B_{2n}(\kappa/2)}{(2n)!} \left( \left| \Gamma \right| / N \right)^{2n} f^{(2n-1)}( \left| \Gamma \right| / 2 ), \qquad N = 2M + \kappa, \ \kappa = 0, 1, \label{eq:B.p}
\end{equation}
where $B_m(x)$ denotes the Bernoulli polynomial of degree $m$ defined by 
\begin{equation*}
\frac{z}{e^z-1} e^{x z} = \sum_{m=0}^\infty \frac{B_m(x)}{m!} \, z^m, \qquad B_m(x) = \sum_{k=0}^m \binom{m}{k} B_{m-k} x^k,
\end{equation*}
where $B_0=1$, $B_1=-1/2$, \dots are the so-called {\em Bernoulli numbers}. Recall that $B_{2k+1}=0$, $(-1)^{k-1}B_{2k}>0$ for $k=1,2,3,\dots$, and $B_{n}(1/2) = ( 2^{1-n} - 1 ) B_n$ for $n\geq0$ (\cite{AbSt1992}).


\begin{thm}[general case] \label{thm:general.f.general.case}
Let $f$ be admissible in the sense of Definition~\ref{def:admissible} and suppose none of $s_0, s_1, \dots, s_q$ equals $1$. Then, for $N = 2 M + \kappa$ with $\kappa = 0$ or $\kappa = 1$,
\begin{equation} \label{eq:general.case.asymptotics}
\mathcal{M}(\Gamma, f; N) = V_f(\Gamma) \, N^2 + \sum_{n=0}^q a_n \frac{2 \zetafcn(s_n)}{\left| \Gamma \right|^{s_n}} N^{1+s_n} + \mathcal{B}_p(\Gamma, f; N) + \mathfrak{R}_p(\Gamma, f; N),
\end{equation}
where 
\begin{equation} \label{eq:general.case.V.f}
V_f(\Gamma) = \frac{2}{\left| \Gamma \right|} \sum_{n=0}^q a_n \frac{\left( \left| \Gamma \right| / 2 \right)^{1-s_n}}{1-s_n} + \frac{2}{\left| \Gamma \right|} \int_0^{\left| \Gamma \right| / 2} ( f - S_q )(x) \dd x
\end{equation}
and the remainder term satisfies $\mathfrak{R}_p(\Gamma, f; N) = \mathcal{O}( N^{1-2p} ) + \mathcal{O}( N^{1-\delta+s_q} )$ as $N\to\infty$ if $2p \neq \delta - \re s_q$, whereas $\mathfrak{R}_p(\Gamma, f; N) = \mathcal{O}( N^{1-2p} \log N)$ if $2p = \delta - \re s_q$. 
\end{thm}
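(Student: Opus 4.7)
The plan is to start from the finite-sum representation~\eqref{eq:cal.M.Gamma.f.N} and apply a generalized Euler-MacLaurin summation formula to the decomposition $f = S_q + g$ with $g := f - S_q$. Treating the singular part $S_q$ and the regular remainder $g$ separately is necessary because $f$ itself is singular at the origin and so Euler-MacLaurin cannot be applied to it directly; once $S_q$ is subtracted off, conditions (i), (ii)(b), (ii)(c) of Definition~\ref{def:admissible} ensure that $g$ is integrable and smooth enough on $[0,|\Gamma|/2]$ for the formula to apply uniformly.

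For the singular piece, each monomial $a_n x^{-s_n}$ contributes $2 a_n |\Gamma|^{-s_n} N^{1+s_n} \sum_{k=1}^{\lfloor N/2 \rfloor} k^{-s_n}$ to $\mathcal{M}(\Gamma,f;N)$. The Euler-MacLaurin expansion of the partial sum $\sum_{k=1}^M k^{-s}$ (equivalently, the asymptotics of the Hurwitz zeta function $\zetafcn(s,M+1)$ as $M\to\infty$, valid whenever $s \neq 1$) decomposes into four parts: the constant $\zetafcn(s_n)$, giving the advertised zeta term $2 a_n \zetafcn(s_n) |\Gamma|^{-s_n} N^{1+s_n}$; the power term $M^{1-s_n}/(1-s_n)$, which after reassembly contributes $\frac{2 a_n}{|\Gamma|}\cdot\frac{(|\Gamma|/2)^{1-s_n}}{1-s_n}\, N^2$ to $V_f(\Gamma) N^2$; boundary contributions at $|\Gamma|/2$ involving derivatives of $x^{-s_n}$ that feed into $\mathcal{B}_p$; and a remainder of order $\mathcal{O}(N^{1-2p+\re s_n})$. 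The hypothesis $s_n \neq 1$ is exactly what prevents the pole of $\zetafcn$ and of $1/(1-s_n)$. Parity enters through the upper sampling point $\lfloor N/2 \rfloor \cdot |\Gamma|/N$, which for odd $N$ is a half-step short of $|\Gamma|/2$; the midpoint variant of Euler-MacLaurin then produces $B_{2n}(1/2)$ rather than $B_{2n}=B_{2n}(0)$, and both cases are covered by the unified factor $B_{2n}(\kappa/2)$ in~\eqref{eq:B.p}. For the regular piece, Euler-MacLaurin applied to $\sum_k g(k|\Gamma|/N)$ yields $(2N^2/|\Gamma|)\int_0^{|\Gamma|/2} g\dd x$ (the other contribution to $V_f(\Gamma) N^2$), upper-endpoint derivatives $g^{(2n-1)}(|\Gamma|/2)$ feeding into $\mathcal{B}_p$, a lower-endpoint contribution of order $\mathcal{O}(N^{1-\delta+s_q})$ controlled by condition (ii)(b), and a truncation remainder of order $\mathcal{O}(N^{1-2p})$ controlled by condition (ii)(c).

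Assembly then proceeds mechanically: derivative values at $|\Gamma|/2$ combine via $(x^{-s_n})^{(2n-1)} + g^{(2n-1)} = f^{(2n-1)}$ to recover exactly $\mathcal{B}_p(\Gamma, f; N)$ as in~\eqref{eq:B.p}; the two $N^2$ contributions combine into $V_f(\Gamma) N^2$ as in~\eqref{eq:general.case.V.f}; and the extra term $-(1-\kappa) f(|\Gamma|/2) N$ in~\eqref{eq:cal.M.Gamma.f.N} cancels the standard $\tfrac{1}{2} f(|\Gamma|/2)$ upper-endpoint contribution that Euler-MacLaurin produces only for even $N$. The main obstacle I anticipate is the careful bookkeeping needed so that the lower-endpoint contributions from the $S_q$ piece (via its Hurwitz-zeta tail) and from the $g$ piece (via integrability at $0$) combine into precisely the advertised coefficients, and so that the two candidate remainder orders $\mathcal{O}(N^{1-2p})$ and $\mathcal{O}(N^{1-\delta+s_q})$ merge correctly, with the logarithmic correction $\mathcal{O}(N^{1-2p}\log N)$ surfacing exactly at the resonance $2p = \delta - \re s_q$.
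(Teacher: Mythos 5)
Your route is essentially the paper's: Euler--MacLaurin summation with the endpoint/midpoint variant $\omega=\kappa/2$ (hence the $B_{2n}(\kappa/2)$ in $\mathcal{B}_p$ and the cancellation of the $-(1-\kappa)f(|\Gamma|/2)N$ term), the splitting $f=S_q+(f-S_q)$ with each monomial sum $\sum_{k\le N/2}k^{-s_n}$ expanded and its constant identified with $\zetafcn(s_n)$ (the paper packages exactly this bookkeeping into its ``incomplete zeta function'' $\zetafcn_p(\omega,N/2;s_n)$), and the admissibility conditions (ii)(b), (ii)(c) controlling the lower-endpoint and truncation remainders; the only cosmetic difference is that the paper applies Euler--MacLaurin to $f$ as a whole on $[(1-\omega)|\Gamma|/N,\,|\Gamma|/2]$ and splits afterwards by linearity. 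One exponent in your sketch should be corrected: after multiplying by the prefactor $2a_n|\Gamma|^{-s_n}N^{1+s_n}$, the tail of the truncated expansion of $\sum_{k\le N/2}k^{-s_n}$ is $\mathcal{O}\bigl(N^{1+\re s_n}\,N^{-\re s_n-2p}\bigr)=\mathcal{O}(N^{1-2p})$ uniformly in $n$, not $\mathcal{O}(N^{1-2p+\re s_n})$ --- this uniformity (guaranteed by $\re s_q+2p>0$ or $s_q=-2p$) is precisely what makes the stated remainder bound $\mathcal{O}(N^{1-2p})+\mathcal{O}(N^{1-\delta+s_q})$ come out.
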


The next result involves the {\em Euler-Mascheroni constant} defined by
\begin{equation*}
\gamma \DEF \lim_{n\to\infty} \left( 1 + \frac{1}{2} + \frac{1}{3} + \frac{1}{4} + \cdots + \frac{1}{n} - \log n \right).
\end{equation*}

\begin{thm}[exceptional case] \label{thm:general.f.exceptional.case}
Let $f$ be admissible in the sense of Definition~\ref{def:admissible} and $s_{q^\prime}=1$ for some $1 \leq q^\prime \leq q$.\footnote{By Definition~\ref{def:admissible} there is only one such $s_{q^\prime}$.} Then, for $N = 2 M + \kappa$ with $\kappa = 0$ or $\kappa = 1$,
\begin{align*}
\mathcal{M}(\Gamma, f; N) 
&= \frac{2}{\left| \Gamma \right|} a_{q^\prime} \, N^2 \log N + V_f(\Gamma) \, N^2 + \sum_{\substack{n=0,\\ n\neq q^\prime}}^q a_n \frac{2 \zetafcn(s_n)}{\left| \Gamma \right|^{s_n}} N^{1+s_n} + \mathcal{B}_p(\Gamma, f; N) + \mathfrak{R}_p(\Gamma, f; N),
\end{align*}
where
\begin{equation} \label{eq:except.case.V.f}
V_f(\Gamma) = \frac{2}{\left| \Gamma \right|} \Bigg\{ \sum_{\substack{n=0,\\ n\neq q^\prime}}^q a_n \frac{\left( \left| \Gamma \right| / 2 \right)^{1-s_n}}{1-s_n} + \int_0^{\left| \Gamma \right| / 2} ( f - S_q )(x) \dd x - a_{q^\prime} \left( \log 2 - \gamma \right) \Bigg\}
\end{equation}
and the remainder term satisfies $\mathfrak{R}_p(\Gamma, f; N) = \mathcal{O}( N^{1-2p} ) + \mathcal{O}( N^{1-\delta+s_q} )$ as $N\to\infty$ if $2p \neq \delta - \re s_q$, whereas $\mathfrak{R}_p(\Gamma, f; N) = \mathcal{O}( N^{1-2p} \log N)$ if $2p = \delta - \re s_q$. 
\end{thm}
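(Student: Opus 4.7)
The plan is to follow the same Euler--MacLaurin strategy that underlies the proof of Theorem~\ref{thm:general.f.general.case}, modifying only the treatment of the monomial $a_{q'} x^{-1}$ in $S_q$. Starting from \eqref{eq:cal.M.Gamma.f.N}, decompose $f = (f - S_q) + S_q$ on $(0, |\Gamma|/2]$. Admissibility condition~(ii) makes $f - S_q$ sufficiently regular near $0$ that the generalized Euler--MacLaurin formula (Proposition~\ref{prop:Euler-MacLaurin.Summation}) applied to $\sum_{k=1}^{\lfloor N/2 \rfloor} (f - S_q)(k|\Gamma|/N)$ produces, after multiplication by $2N$, the term $(2/|\Gamma|)\,N^2 \int_0^{|\Gamma|/2}(f - S_q)(x)\,dx$ that appears in $V_f(\Gamma)$, boundary terms tied to $(f - S_q)^{(2n-1)}(|\Gamma|/2)$, and a remainder of the stated order.

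For the singular part, write $2N\sum_{k=1}^{M} S_q(k|\Gamma|/N) = \sum_{n=0}^q (2 a_n/|\Gamma|^{s_n}) N^{1+s_n} H_M(s_n)$, where $H_M(s) \DEF \sum_{k=1}^M k^{-s}$ and $M = \lfloor N/2 \rfloor$. For $s_n \neq 1$, the standard Euler--MacLaurin expansion of $H_M(s_n)$ delivers the analytic-continuation value $\zetafcn(s_n)$ together with $M^{1-s_n}/(1-s_n)$ and lower-order Bernoulli corrections; restoring prefactors and combining with the endpoint correction in \eqref{eq:cal.M.Gamma.f.N} yields both the $2 a_n \zetafcn(s_n) |\Gamma|^{-s_n} N^{1+s_n}$ summand and the $a_n(|\Gamma|/2)^{1-s_n}/(1-s_n)$ contribution to $V_f(\Gamma)$, exactly as in the general case. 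The essential novelty is that for $s_{q'}=1$ both $\zetafcn(s)$ and $M^{1-s}/(1-s)$ are singular; one therefore substitutes instead the classical harmonic-number asymptotic
\begin{equation*}
H_M(1) = \log M + \gamma + \frac{1}{2M} - \sum_{k=1}^{p} \frac{B_{2k}}{2k\, M^{2k}} + \mathcal{O}(M^{-2p-2}).
\end{equation*}
Inserting $M = (N-\kappa)/2$ and expanding $\log M = \log N - \log 2 - \kappa/N + \mathcal{O}(N^{-2})$ gives, after multiplying by $(2 a_{q'}/|\Gamma|) N^2$, the new $(2 a_{q'}/|\Gamma|) N^2 \log N$ term, together with the $N^2$-coefficient $-(2 a_{q'}/|\Gamma|)(\log 2 - \gamma)$, which is precisely the $-a_{q'}(\log 2 - \gamma)$ contribution inside the braces of \eqref{eq:except.case.V.f}. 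There is also a stray piece $(2 a_{q'}/|\Gamma|)(1-\kappa) N$, and this cancels exactly against the $-(1-\kappa)\cdot a_{q'}\cdot(2/|\Gamma|)\cdot N$ subpart of the endpoint correction $-(1-\kappa)f(|\Gamma|/2)N$ in \eqref{eq:cal.M.Gamma.f.N}.

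The remaining Bernoulli tail from $H_M(1)$, together with those from the $H_M(s_n)$ for $s_n \neq 1$ and the Euler--MacLaurin boundary terms produced by $f - S_q$, must then be shown to assemble into $\mathcal{B}_p(\Gamma, f; N)$ as defined in \eqref{eq:B.p}, with the $\kappa$-dependence encoded through $B_{2n}(\kappa/2)$ via $B_{2n}(0) = B_{2n}$ and $B_{2n}(1/2) = (2^{1-2n}-1) B_{2n}$. This reconstitution is the main bookkeeping task of the proof: one must match the derivatives of the monomial summands $a_n x^{-s_n}$ at $|\Gamma|/2$ against those of $f - S_q$ there to recover derivatives of the original $f$, so that the $\mathcal{B}_p$ term depends on $f$ itself and not on the artificial splitting. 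A secondary, purely computational point is the cancellation of the $\mathcal{O}(N)$ piece in the exceptional sum above, which relies on $\kappa \in \{0,1\}$ so that the higher-order terms of $\log(1 - \kappa/N)$ do not interfere. Once these items are verified, the remainder estimate $\mathcal{O}(N^{1-2p}) + \mathcal{O}(N^{1-\delta + s_q})$, with the logarithmic refinement in the critical case $2p = \delta - \re s_q$, follows exactly as for Theorem~\ref{thm:general.f.general.case}.
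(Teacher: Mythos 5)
Your handling of the exceptional monomial $a_{q'}x^{-1}$ is correct in substance and is in fact the same mechanism the paper uses: in the paper's proof the $n=q'$ term is controlled through $\Psi_p(\kappa/2,N/2)$, whose limit is $\gamma$ (Proposition~\ref{prop:aux.results}), and that is precisely your harmonic-number asymptotics in disguise. Your identification of the $\frac{2}{|\Gamma|}a_{q'}N^2\log N$ term, of the $-a_{q'}(\log 2-\gamma)$ shift inside the braces of \eqref{eq:except.case.V.f}, and of the cancellation of the stray $(2a_{q'}/|\Gamma|)(1-\kappa)N$ piece against the $a_{q'}$-part of the endpoint correction $-(1-\kappa)f(|\Gamma|/2)N$ in \eqref{eq:cal.M.Gamma.f.N} are all correct.

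The gap is the step you yourself call ``the main bookkeeping task'' and then do not carry out; for odd $N$ it is not a formality in your normalization. You expand $H_M(s_n)$ and $H_M$ in powers of $M=\lfloor N/2\rfloor=(N-\kappa)/2$ with plain Bernoulli numbers $B_{2k}$, whereas the statement to be proved is an expansion in powers of $N$ whose even-power coefficients carry $B_{2n}(\kappa/2)$ and derivatives of the full $f$ at $|\Gamma|/2$; moreover, with the classical (integer-offset) Euler--MacLaurin formula the upper boundary terms of the monomial sums sit at $x=M|\Gamma|/N\neq|\Gamma|/2$ when $\kappa=1$. Passing from your $M$-expansions to the stated form requires re-expanding every term $\log M$, $M^{1-s_n}$, $M^{-s_n-2k+1}$, $M^{-2k}$ in powers of $N$ and verifying that the resulting doubly indexed collection of terms resums exactly to $\sum_{n\neq q'}a_n\,2\zetafcn(s_n)|\Gamma|^{-s_n}N^{1+s_n}+\mathcal{B}_p(\Gamma,f;N)$ plus a remainder of the claimed order, including the cancellation of all the odd powers of $N$ that the re-expansion generates; the identity $B_{2n}(1/2)=(2^{1-2n}-1)B_{2n}$ relates the two coefficient families but does not by itself effect this conversion. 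The paper dissolves the problem rather than solving it: Proposition~\ref{prop:Euler-MacLaurin.Summation} is applied with offset $\omega=\kappa/2$ uniformly to $S_q$ and $f-S_q$, so every upper-endpoint term is evaluated at $N/2$ (that is, at $x=|\Gamma|/2$) with coefficient $B_{2k}(\kappa/2)$, the monomial contributions assemble exactly into $\zetafcn_p(\kappa/2,N/2;s_n)$ and $\Psi_p(\kappa/2,N/2)$, and Proposition~\ref{prop:aux.results} supplies both the limits $\zetafcn(s_n)$, $\gamma$ and the tail bounds that give $\mathfrak{R}_p$. If you wish to keep your harmonic-number route, replace the integer-offset expansions by the offset versions in powers of $N/2$ with $B_{2k}(\kappa/2)$ coefficients (formula \eqref{H.n}, and Proposition~\ref{prop:hzeta} with $\alpha=(2-\kappa)/2$ for the $s_n\neq1$ monomials); then the reassembly into $\mathcal{B}_p(\Gamma,f;N)$ is immediate and your argument closes.
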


\begin{rmk}
Both Theorems \ref{thm:general.f.general.case} and \ref{thm:general.f.exceptional.case} show that only the coefficients of the nonpositive even powers of $N$ depend on the parity of $N$. These dependencies appear in the sum $\mathcal{B}_p(\Gamma, f; N)$.
\end{rmk}

%
%

%
\begin{rmk}
If $f(z) \equiv S_q(z) = \sum_{n=0}^q a_n z^{-s_n}$ for some $q$ and $\re s_0 > \cdots > \re s_q$, then all expressions in Theorems~\ref{thm:general.f.general.case} and \ref{thm:general.f.exceptional.case} containing $f-S_q$ vanish. In general, the remainder term $\mathfrak{R}_p(\Gamma, f; N)$ is of order $\mathcal{O}( N^{1-2p} )$, where the integer $p$ satisfies $\re s_q + 2p >0$. 
In particular, this holds for the Riesz kernels (cf. Theorems~\ref{thm:main} and \ref{thm:s.EQ.1} below).
%
\end{rmk}

\subsection*{Completely monotonic functions}
A non-constant {\em completely monotonic} function $f:(0,\infty) \to \mathbb{R}$ has derivatives of all orders and satisfies $(-1)^k f^{(k)}(x) > 0$ (cf. \cite{Du1940}).\footnote{A completely monotonic function on $(0,\infty)$ is necessarily analytic in the positive half-plane (\cite{Wi1946}).}
In particular, it is a continuous strictly decreasing convex function. Therefore, by Proposition~\ref{prop:optimality}, equally spaced points are optimal $f$-energy configurations on the Riemannian circle $\Gamma$. 


By Bernstein's theorem \cite[p.~161]{Wi1946} a function is completely monotonic on $(0,\infty)$ if and only if it is the Laplace transformation $f(x) = \int_0^\infty e^{-x t} \dd \mu(t)$ of some nonnegative measure $\mu$ on $[0,\infty)$ such that the integral converges for all $x>0$. 

The following result applies in particular to completely monotonic functions.
\begin{thm} \label{thm:completely.monotonic}
Let $f$ be the Laplace transform $f(x) = \int_0^\infty e^{-x t} \dd \mu(t)$ for some signed Borel measure $\mu$ on $[0,\infty)$ such that $\int_0^\infty t^m \dd |\mu|(t)$, $m = 0, 1, 2, \dots$, are all finite. Then for all integers $p\geq1$ and $N = 2M + \kappa$ with $\kappa=0, 1$
\begin{equation*}
\mathcal{M}(\Gamma, f; N) = \left\{ \frac{2}{\left| \Gamma \right|} \int_0^{\infty} \frac{1-e^{-t \left| \Gamma \right|/2}}{t} \dd \mu(t) \right\} N^2 + \sum_{n=0}^{2p} (-1)^n \frac{\mu_n}{n!} \frac{2\zetafcn(-n)}{\left| \Gamma \right|^{-n}} N^{1-n} + \mathcal{B}_p( \Gamma, f; N) + \mathcal{O}(N^{1-2p}), 
\end{equation*}
where $\mu_m \DEF \int_0^\infty t^m \dd \mu(t)$ denotes the $m$-th moment of $\mu$.
\end{thm}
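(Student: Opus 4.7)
The plan is to verify that $f$ is admissible in the sense of Definition~\ref{def:admissible} with an explicit expansion $S_q$ obtained from the Taylor polynomial of $e^{-xt}$, apply Theorem~\ref{thm:general.f.general.case} (no $s_n$ equals $1$), and then identify the resulting leading coefficient $V_f(\Gamma)$ with the stated integral expression.

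\textbf{Step 1 (Choice of expansion and admissibility).} Set $q\DEF 2p$, $s_n \DEF -n$ and $a_n \DEF (-1)^n \mu_n/n!$ for $n=0,1,\dots,2p$, so that
\[
S_q(x) = \sum_{n=0}^{2p} \frac{(-x)^n}{n!}\,\mu_n = \int_0^\infty T_q(xt)\,d\mu(t), \qquad T_q(y)\DEF\sum_{n=0}^q \frac{(-y)^n}{n!}.
\]
The integral form of the Taylor remainder gives $|e^{-y}-T_q(y)| \le y^{q+1}/(q+1)!$ for all $y\ge 0$, and more generally $|e^{-y}-T_{q-\nu}(y)|\le y^{q-\nu+1}/(q-\nu+1)!$. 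Differentiating under the integral sign (which is justified because every moment of $|\mu|$ is finite) and using
\[
\bigl(f - S_q\bigr)^{(\nu)}(x)=\int_0^\infty (-t)^\nu\bigl[e^{-xt}-T_{q-\nu}(xt)\bigr]\,d\mu(t)
\]
yields $(f-S_q)^{(\nu)}(x)=\mathcal{O}(x^{2p-\nu+1})$ for $\nu = 0,1,\dots,2p+1$. Taking $\delta \DEF 1$, conditions (ii)(a)--(c) follow immediately, so $f$ is admissible with $s_q = -2p$ (the boundary case).

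\textbf{Step 2 (Application of Theorem~\ref{thm:general.f.general.case}).} Since $s_n=-n\le 0$, none of the $s_n$ equals $1$, so Theorem~\ref{thm:general.f.general.case} applies. The remainder bound reads
\[
\mathfrak{R}_p(\Gamma,f;N) = \mathcal{O}(N^{1-2p}) + \mathcal{O}(N^{1-\delta+s_q}) = \mathcal{O}(N^{1-2p}) + \mathcal{O}(N^{-2p}) = \mathcal{O}(N^{1-2p}),
\]
using that $2p\ne 1+2p = \delta - \re s_q$ so no logarithmic correction occurs. Substituting the values of $a_n$ and $s_n$ produces exactly the sum $\sum_{n=0}^{2p}(-1)^n(\mu_n/n!)\,(2\zeta(-n)/|\Gamma|^{-n})\,N^{1-n}$ appearing in the statement.

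\textbf{Step 3 (Identification of $V_f(\Gamma)$).} The main small computation is to simplify \eqref{eq:general.case.V.f} into a single integral. Writing $L\DEF |\Gamma|/2$ and using Fubini (justified by absolute integrability from the moment assumption on $|\mu|$),
\[
\int_0^L \bigl(f-S_q\bigr)(x)\,dx = \int_0^\infty\!\!\Bigl(\frac{1-e^{-Lt}}{t} - \sum_{n=0}^{2p}\frac{(-t)^n L^{n+1}}{(n+1)!}\Bigr)d\mu(t) = \int_0^\infty\frac{1-e^{-Lt}}{t}\,d\mu(t) - \sum_{n=0}^{2p}\frac{(-1)^n L^{n+1}}{(n+1)!}\mu_n.
\]
On the other hand, $\sum_{n=0}^{2p} a_n L^{1-s_n}/(1-s_n) = \sum_{n=0}^{2p}(-1)^n\mu_n L^{n+1}/((n+1)!)$, which cancels the polynomial part exactly. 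Hence
\[
V_f(\Gamma) = \frac{2}{|\Gamma|}\int_0^\infty \frac{1-e^{-t|\Gamma|/2}}{t}\,d\mu(t),
\]
completing the proof. The main obstacle in the argument is checking the derivative estimates uniformly in $\nu$ with the correct order $2p-\nu+1$; once this is in place, the identification of $V_f(\Gamma)$ is a direct Fubini-and-cancel computation.
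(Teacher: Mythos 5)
Your proposal is correct and follows essentially the same route as the paper's proof: verify admissibility with $q=2p$, $\delta=1$, $S_{2p}(x)=\sum_{n=0}^{2p}\frac{(-x)^n}{n!}\mu_n$ via Taylor-remainder estimates for $e^{-xt}$ (you use the Lagrange-form bound, the paper the integral form, with the same resulting order $x^{2p+1-\nu}$), then apply Theorem~\ref{thm:general.f.general.case}. Your Step 3 merely makes explicit the Fubini computation that the paper summarizes by noting $V_f(\Gamma)=\frac{2}{|\Gamma|}\int_0^{|\Gamma|/2}\int_0^\infty e^{-xt}\,d\mu(t)\,dx$.
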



\begin{rmk}
The derivation of the (complete) asymptotic expansion for $\mathcal{M}(\Gamma, f;N)$ as $N\to\infty$ for Laplace transforms for which not all moments $\mu_m$ are finite, depends on more detailed knowledge of the behavior of $f(x)$ near the origin. For example, for integral transforms $G(x) = \int_0^\infty h(x t) g(t) \dd t$ there is a well-established theory of the asymptotic expansion of $G(x)$ at $0^+$. See, \cite{HaLe1970}, \cite{HaLe1971}, \cite{BeHa1975} or \cite{Lo2008} and \cite{Du1979}. These expansions give rise to results similar to our theorem above.
\end{rmk}

\begin{rmk}
Recently, Koumandos and Pedersen~\cite{KoPe2009} studied so-called {\em completely monotonic functions of integer order $r\geq0$}, that is functions $f$ for which $x^r f(x)$ is completely monotonic. The completely monotonic functions of order $0$ are the classical completely monotonic functions; those of order $1$ are the so-called {\em strongly completely monotonic functions} satisfying that $(-1)^k x^{k+1} f^{(k)}(x)$ is nonnegative and decreasing on $(0,\infty)$. In \cite{KoPe2009} it is shown that $f$ is completely monotonic of order $\alpha>0$ ($\alpha$ real) if and only if $f$ is the Laplace transformation of a fractional integral of a positive Radon measure on $[0,\infty)$; that is 
\begin{equation*}
f(x) = \int_0^\infty e^{-x t} \mathcal{J}_\alpha[\mu](t) \dd t, \qquad \mathcal{J}_\alpha[\mu](t) \DEF \frac{1}{\Gamma(\alpha)} \int_0^t \left( t - s \right)^{\alpha-1} \dd \mu(s).
\end{equation*}
Results similar to Theorem~\ref{thm:completely.monotonic} hold for these kinds of functions. However, the problem of giving an asymptotic expansion of $f(x)$ near the origin is more subtle.
\end{rmk}

\subsection*{Analytic kernel functions} 
If $f$ is analytic in a disc with radius $| \Gamma | / 2 + \eps$ ($\eps > 0$) centered at the origin, then $f$ is admissible in the sense of Definition~\ref{def:admissible} and we have the following result.
%

\begin{thm} \label{thm:analytic.f}
Let $f(z) = \sum_{n=0}^\infty a_n z^n$ be analytic in $|z| < | \Gamma | / 2 + \eps$, $\eps>0$. Then for $N = 2 M + \kappa$ with $\kappa = 0$ or $\kappa = 1$
\begin{equation*}
\mathcal{M}(\Gamma, f; N) = \left\{ \frac{2}{\left| \Gamma \right|} \int_0^{\left| \Gamma \right| / 2} f(x) \dd x \right\} N^2 + \sum_{n=0}^{2p} a_n \frac{2 \zetafcn(-n)}{\left| \Gamma \right|^{-n}} N^{1-n} + \mathcal{B}_p(\Gamma, f; N) + \mathcal{O}_{p,|\Gamma|,f}(N^{1-2p}).
\end{equation*}
Note that $\zetafcn(0) = -1/2$ and $\zetafcn(-2k) = 0$ for $k = 1, 2, 3, \dots$.
\end{thm}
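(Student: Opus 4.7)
The plan is to deduce Theorem \ref{thm:analytic.f} as a direct corollary of Theorem \ref{thm:general.f.general.case} by showing that an analytic $f$ on a disc slightly larger than $|\Gamma|/2$ is admissible in the sense of Definition \ref{def:admissible} with the appropriate choices of parameters.

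First, fix an integer $p \geq 1$ and choose $q = 2p$, $s_n \DEF -n$ for $n = 0, 1, \dots, 2p$, $\delta \DEF 1$, and $a_n$ the Taylor coefficients of $f$ at $0$. Then $S_q(x) = \sum_{n=0}^{2p} a_n x^{-s_n} = \sum_{n=0}^{2p} a_n x^{n}$ is just the degree-$2p$ Taylor polynomial of $f$, and the strict ordering $\re s_0 > \re s_1 > \cdots > \re s_q$ holds since $0 > -1 > \cdots > -2p$. Also $s_q = -2p$, so the condition ``$\re s_q + 2p > 0$ or $s_q = -2p$'' is met, and $1 - \re s_q + \delta = 1 + 2p + 1 > 0$. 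Since none of the $s_n$ equals $1$, we are in the general (non-exceptional) case of Theorem \ref{thm:general.f.general.case}.

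Next, I would verify conditions (ii)(b) and (ii)(c) of admissibility using the Taylor remainder. Since $f$ is analytic in $|z| < |\Gamma|/2 + \eps$, on any compact subdisc the Taylor coefficients decay geometrically, so
\begin{equation*}
f(x) - S_{2p}(x) = \sum_{n=2p+1}^\infty a_n x^n = \mathcal{O}_{p,|\Gamma|,f}(x^{2p+1}), \qquad x \to 0^+,
\end{equation*}
and termwise differentiation (valid inside the disc of convergence) yields
$\{f - S_{2p}\}^{(\nu)}(x) = \mathcal{O}_{p,|\Gamma|,f}(x^{2p+1-\nu})$ for $\nu = 0, 1, \dots, 2p+1$. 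Comparing with the required bound $\mathcal{O}(x^{\delta - s_q - \nu}) = \mathcal{O}(x^{1 + 2p - \nu})$ (with $\delta = 1$) confirms (ii)(c); integrating the case $\nu = 0$ gives $\mathcal{O}(x^{2p+2}) = \mathcal{O}(x^{1+\delta-s_q})$, which is (ii)(b). Condition (i) is immediate.

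Finally, I apply Theorem \ref{thm:general.f.general.case}. Since $2p \neq \delta - \re s_q = 1 + 2p$, the remainder satisfies $\mathfrak{R}_p(\Gamma,f;N) = \mathcal{O}(N^{1-2p}) + \mathcal{O}(N^{-2p}) = \mathcal{O}_{p,|\Gamma|,f}(N^{1-2p})$. The middle sum in \eqref{eq:general.case.asymptotics} becomes $\sum_{n=0}^{2p} a_n \, 2\zetafcn(-n) |\Gamma|^{n} N^{1-n}$, matching the claimed expression. For the $N^2$ coefficient, observe that
\begin{equation*}
\sum_{n=0}^{2p} a_n \frac{(|\Gamma|/2)^{1-s_n}}{1-s_n} = \sum_{n=0}^{2p} a_n \frac{(|\Gamma|/2)^{n+1}}{n+1} = \int_0^{|\Gamma|/2} S_{2p}(x) \dd x,
\end{equation*}
so \eqref{eq:general.case.V.f} collapses to $V_f(\Gamma) = (2/|\Gamma|) \int_0^{|\Gamma|/2} f(x) \dd x$, completing the proof. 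The main (minor) obstacle is purely bookkeeping: keeping track of the implicit constants in the Taylor remainder estimates so that the dependence in $\mathcal{O}_{p,|\Gamma|,f}$ is visible; the analyticity hypothesis in a strictly larger disc is exactly what allows uniform control of these constants.
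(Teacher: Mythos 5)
Your proposal is correct and follows essentially the same route as the paper's own proof: take $q=2p$, $s_n=-n$, $\delta=1$, verify admissibility via the Taylor remainder, apply Theorem~\ref{thm:general.f.general.case}, and observe that the $V_f$ formula \eqref{eq:general.case.V.f} collapses to $(2/|\Gamma|)\int_0^{|\Gamma|/2} f(x)\dd x$ while the remainder is $\mathcal{O}_{p,|\Gamma|,f}(N^{1-2p})$ since $s_q=-2p$. The only difference is that you spell out the admissibility checks that the paper leaves implicit.
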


\begin{eg}
If $f(x) = e^{-x}$, then for any positive integer $p$:
\begin{align*}
\begin{split}
\mathcal{M}(\Gamma, f; N) 
&= \frac{2}{\left| \Gamma \right|} \left( 1 - e^{-\left| \Gamma \right|/2} \right) N^2 - N + \sum_{n=1}^{p} \frac{1}{(2n-1)!} \frac{2 \zetafcn(1-2n)}{\left| \Gamma \right|^{1-2n}} N^{2-2n} \\
&\phantom{=}- \sum_{n=1}^p \frac{B_{2n}(\kappa/2)}{(2n)!} \frac{2 e^{- \left| \Gamma \right| / 2}}{\left| \Gamma \right|^{1-2n}} N^{2-2n} + \mathcal{O}_{p,|\Gamma|,f}(N^{1-2p})
\end{split}
\end{align*}
as $N = 2 M + \kappa \to \infty$, where the notation of the last term indicates that the $\mathcal{O}$-constant depends on $p,|\Gamma|$ and $f$. Since $f(x)$ is a strictly decreasing convex function, by Proposition~\ref{prop:optimality}(A), equally spaced points are also optimal $f$-energy points. Thus, the relation above gives the complete asymptotics for the optimal $N$-point geodesic $e^{-(\cdot)}$-energy on Riemannian circles.
\end{eg}

\subsection*{Laurent series kernels} If $f(z)$ is analytic in the annulus $0 < |z| < | \Gamma | / 2 + \eps$ ($\eps > 0$) with a pole at $z=0$, then $f$ is admissible in the sense of Definition~\ref{def:admissible}  and we obtain the following result.

\begin{thm} \label{thm:Laurent.series}
Let $f$ be analytic in the annulus $0 < |z| < | \Gamma | / 2 + \eps$ ($\eps > 0$) having there the Laurent series expansion $f(z) = \sum_{n=-K}^\infty a_n z^n$, $K \geq 1$. 

{\rm (i)} If the residue $a_{-1}=0$, then for $N = 2 M + \kappa$ with $\kappa = 0, 1$
\begin{equation*}
\mathcal{M}(\Gamma, f; N) = V_f(\Gamma) \, N^2 + \sum_{\substack{n=-K, \\ n \neq -1}}^{2p} a_n \frac{2 \zetafcn(-n)}{\left| \Gamma \right|^{-n}} N^{1-n} + \mathcal{B}_p(\Gamma, f; N) + \mathcal{O}_{p,|\Gamma|,f}(N^{1-2p}),
\end{equation*}
where the $N^2$-coefficient is 
\begin{equation*}
V_f(\Gamma) = \frac{2}{\left| \Gamma \right|} \sum_{n=-K}^{\infty} a_n \frac{\left( \left| \Gamma \right| / 2 \right)^{1+n}}{1+n}.
\end{equation*}

{\rm (ii)} If the residue $a_{-1} \neq 0$, then for $N = 2 M + \kappa$ with $\kappa = 0, 1$
\begin{align*}
\mathcal{M}(\Gamma, f; N) 
&= \frac{2}{\left| \Gamma \right|} a_{-1} \, N^2 \log N + V_f(\Gamma) \, N^2 + \sum_{\substack{n=-K,\\ n\neq -1}}^{2p} a_n \frac{2 \zetafcn(-n)}{\left| \Gamma \right|^{-n}} N^{1-n} + \mathcal{B}_p(\Gamma, f; N) + \mathcal{O}_{p,|\Gamma|,f}(N^{1-2p}),
\end{align*}
where the $N^2$-coefficient is 
\begin{equation*}
V_f(\Gamma) = \frac{2}{\left| \Gamma \right|} \Bigg\{ \sum_{\substack{n=-K,\\ n\neq -1}}^\infty a_n \frac{\left( \left| \Gamma \right| / 2 \right)^{1+n}}{1+n} - a_{-1} \left( \log 2 - \gamma \right) \Bigg\}.
\end{equation*}
\end{thm}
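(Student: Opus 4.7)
My approach is to reduce Theorem~\ref{thm:Laurent.series} to Theorems~\ref{thm:general.f.general.case} and \ref{thm:general.f.exceptional.case} by verifying that a Laurent kernel is admissible in the sense of Definition~\ref{def:admissible} and then matching the indexing conventions. For a fixed integer $p \geq 1$ I would take as the singular/polynomial part the truncated Laurent expansion
\begin{equation*}
\widetilde{S}(x) \DEF \sum_{n=-K}^{2p} a_n \, x^n,
\end{equation*}
rewritten in the form $\sum_{j} a_j^\prime \, x^{-s_j^\prime}$ of Definition~\ref{def:admissible} by setting $s_j^\prime = K - j$ and $a_j^\prime = a_{j-K}$, so that $s_0^\prime = K > K-1 > \cdots > -2p = s_q^\prime$. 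The residual $f - \widetilde{S} = \sum_{n \geq 2p+1} a_n x^n$ is analytic on the full disc of radius $|\Gamma|/2 + \eps$, so near the origin $(f - \widetilde{S})^{(\nu)}(x) = \mathcal{O}(x^{2p+1-\nu})$ for $0 \leq \nu \leq 2p+1$ and $\int_0^x (f - \widetilde{S})(y) \dd y = \mathcal{O}(x^{2p+2})$. Taking $\delta = 1$, conditions (i) and (ii) of Definition~\ref{def:admissible} are satisfied: the endpoint $s_q^\prime = -2p$ is explicitly permitted, $1 - \re s_q^\prime + \delta = 2p+2 > 0$, and $2p \neq \delta - \re s_q^\prime$, so the remainder supplied by the general theorems is $\mathcal{O}(N^{1-2p}) + \mathcal{O}(N^{-2p}) = \mathcal{O}(N^{1-2p})$ with implied constant depending on $p$, $|\Gamma|$, and $f$.

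In case (i), since $a_{-1} = 0$, the term of $\widetilde{S}$ with $s_j^\prime = 1$ vanishes and we may simply delete it from the list; the remaining $s_j^\prime$'s avoid $1$ and Theorem~\ref{thm:general.f.general.case} applies. In case (ii), the exceptional value $s_{j^\prime}^\prime = 1$ occurs with nonzero coefficient $a_{-1}$ and Theorem~\ref{thm:general.f.exceptional.case} applies, producing the $\frac{2}{|\Gamma|} a_{-1} N^2 \log N$ contribution. Under the substitution $n = j - K$ (so that $a_j^\prime = a_n$ and $s_j^\prime = -n$), the finite sum $\sum_j a_j^\prime \, 2 \zetafcn(s_j^\prime) |\Gamma|^{-s_j^\prime} N^{1+s_j^\prime}$ becomes $\sum_{n=-K,\, n\neq -1}^{2p} a_n \, 2 \zetafcn(-n) |\Gamma|^{n} N^{1-n}$, matching the statement; the trivial zeros of $\zetafcn$ at the negative even integers absorb the even-$n$ entries with $n \geq 2$.

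The closed-form expression for $V_f(\Gamma)$ then emerges by combining the finite sum $\frac{2}{|\Gamma|} \sum_{n=-K,\, n\neq -1}^{2p} a_n (|\Gamma|/2)^{1+n}/(1+n)$ produced by \eqref{eq:general.case.V.f} or \eqref{eq:except.case.V.f} with the integral of the analytic tail,
\begin{equation*}
\int_0^{|\Gamma|/2} (f - \widetilde{S})(x) \dd x = \sum_{n \geq 2p+1} a_n \, \frac{(|\Gamma|/2)^{n+1}}{n+1},
\end{equation*}
where termwise integration is justified by uniform convergence of $\sum a_n x^n$ on the closed disc of radius $|\Gamma|/2$ (strictly inside the domain of analyticity). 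The two sums merge into the claimed $\sum_{n=-K,\, n\neq -1}^\infty a_n (|\Gamma|/2)^{1+n}/(1+n)$, and the correction $-a_{-1}(\log 2 - \gamma)$ appears only in case (ii), as supplied by \eqref{eq:except.case.V.f}. The only (mild) obstacle is the bookkeeping needed to align the one-sided exponent ordering of Definition~\ref{def:admissible} with the two-sided Laurent indexing; everything analytically nontrivial is already encapsulated in the two general theorems.
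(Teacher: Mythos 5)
Your proposal is correct and follows essentially the same route as the paper: take $S(x)=\sum_{n=-K}^{2p}a_n x^n$ with $\delta=1$, verify admissibility (the tail being analytic on the closed disc of radius $|\Gamma|/2$), apply Theorem~\ref{thm:general.f.general.case} when $a_{-1}=0$ and Theorem~\ref{thm:general.f.exceptional.case} when $a_{-1}\neq 0$, and obtain $V_f(\Gamma)$ by termwise integration, with remainder $\mathcal{O}(N^{1-2p})$ since $1-\delta+s_q=-2p$. Your extra care with the re-indexing $s_j'=-n$ and the deletion of the vanishing $n=-1$ term in case (i) only makes explicit what the paper leaves implicit.
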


Next, we give two examples of kernels $f$ each having an essential singularity at $0$. Such kernels can also be treated in the given framework, since they satisfy an extended version of Definition~\ref{def:admissible}; see Proof of Examples~\ref{eg:ess.sing.1} and \ref{eg:ess.sing.2} in Section~\ref{sec:proofs}.  

\begin{eg} \label{eg:ess.sing.1}
Let $f(x) = e^{1/x} = \sum_{n=0}^\infty 1/ (n! x^n)$, $x\in(0,+\infty)$, $f(0)=+\infty$. We define the entire function
\begin{equation*}
F(z) \DEF \sum_{n=2}^\infty \frac{\zetafcn(n)}{n!} z^n = - \gamma z - \frac{1}{2\pi i} \oint_{|w|=\rho<1} e^{z/w} \digammafcn(1-w) \dd w, \qquad z \in \mathbb{C},
\end{equation*}
where $\digammafcn(z)$ denotes the digamma function and we observe that, because of $0 < \zetafcn(n) - 1 < c 2^{-n}$ for all integers $n \geq 2$ for some $c>0$, 
\begin{equation*}
F(x) = e^x - 1 - x + \sum_{n=2}^\infty \frac{\zetafcn(n)-1}{n!} x^n = e^x + \mathcal{O}(e^{x/2}) \qquad \text{as $x\to \infty$.}
\end{equation*}
Then 
\begin{equation*}
\begin{split}
\mathcal{M}(\Gamma, f; N) &=  2 N F(N/\left|\Gamma\right|) + \frac{2}{\left| \Gamma \right|} \, N^2 \log N + V_f(\Gamma) \, N^2 - N \\
&\phantom{=}+ \sum_{n = 1}^{p} \frac{2 B_{2n}(\kappa/2)}{(2n)! \left| \Gamma \right|^{1-2n}} N^{2-2n} f^{(2n-1)}( \left| \Gamma \right| / 2 ) + \mathcal{O}_{p,|\Gamma|,f}(N^{1-2p}),
\end{split}
\end{equation*}
where
\begin{align*}
V_f(\Gamma) 
&= 1 + \frac{2}{\left| \Gamma \right|} \sum_{n=2}^\infty \frac{1}{n!} \frac{\left( \left| \Gamma \right| / 2 \right)^{1-n}}{1-n} - \frac{2}{\left| \Gamma \right|} \left( \log 2 - \gamma \right) \\
&= e^{2/\left| \Gamma \right|} - \frac{2}{\left| \Gamma \right|} \left\{ 1 - 2 \gamma + \log \left| \Gamma \right| + \ExpIntegralEi(2 / \left| \Gamma \right|) \right\},
\end{align*}
where $\ExpIntegralEi(x) = - \int_{-x}^\infty e^{-t} t^{-1} \dd t$ is the exponential integral (taking the Cauchy principal value of the integral). In particular it follows that
\begin{equation*}
\lim_{N\to\infty} \frac{\mathcal{M}(\Gamma, f; N)}{N \, e^{N/\left|\Gamma\right|}} = 2.
\end{equation*}
Since $f$ is a strictly decreasing convex function on $(0,\infty)$, by Proposition~\ref{prop:optimality}(A), equally spaced points are also optimal. Thus, the above expansion gives the asymptotics of the optimal $N$-point $e^{1/(\cdot)}$-energy.
\end{eg}

\begin{eg} \label{eg:ess.sing.2}
Let $\BesselJ_k(\lambda) = (-1)^k \BesselJ_{-k}(\lambda) \DEF \frac{1}{2\pi} \int_0^{2\pi} \cos( k \theta - \lambda \sin \theta ) \dd \theta$ denote the {\em Bessel function of the first kind of order $k$} whose generating function relation is given by (cf. \cite[Exercise~5.5(10)]{SaSn1993})
\begin{equation*}
f(x) = \exp\left[ \frac{\lambda}{2} \left( x - \frac{1}{x} \right) \right] = \sum_{n=-\infty}^\infty \BesselJ_n(\lambda) x^n \qquad \text{for  $|x|>0$.}
\end{equation*}
For integers $m\geq2$  we define the entire functions
\begin{align*}
F_m(z) &\DEF \sum_{n=m}^\infty \BesselJ_{-n}(\lambda) \zetafcn(n) z^n = \sum_{k=1}^\infty G_m(z/k), \quad G_m(z) \DEF \sum_{n=m}^\infty \BesselJ_{-n}(\lambda) z^n, \qquad z \in \mathbb{C}.
\end{align*}
If $\lambda$ is a zero of the Bessel function $\BesselJ_{-1}$, then for positive integers $p$ and $m$ $\geq2$ there holds
\begin{equation*}
\begin{split}
\mathcal{M}(\Gamma, f; N) &= 2 N F_m( N / \left| \Gamma \right| ) + 2 \sum_{n=2}^{m-1} \BesselJ_{-n}(\lambda) \zetafcn(n) \left| \Gamma \right|^{-n} N^{1+n} + V_f(\Gamma) \, N^2 + \left| \Gamma \right| B_2(\frac{\kappa}{2}) f^\prime( \left| \Gamma \right| / 2 ) \\
&\phantom{=}+ \sum_{n=2}^p \left\{ \frac{2 B_{2n}}{2n} \frac{f^{2n-1}(\left| \Gamma \right| / 2)}{(2n-1)!} + 2 \BesselJ_{2n-1}(\lambda) \zetafcn(1-2n) \right\} \left| \Gamma \right|^{2n-1} N^{2-2n} + \mathcal{O}(N^{1-2p})
\end{split}
\end{equation*}
where
\begin{equation*}
V_f(\Gamma) = \frac{2}{\left| \Gamma \right|} \sum_{\substack{n=-\infty, \\ n \neq \pm 1}}^{\infty} \BesselJ_n(\lambda) \frac{\left( \left| \Gamma \right| / 2 \right)^{1+n}}{1+n}.
\end{equation*}
If, in addition, $\lambda<0$, then $f(x)$ is a strictly decreasing convex function and, therefore, $\mathcal{M}(\Gamma, f; N)$ is also the minimal $N$-point $f$-energy on $\Gamma$ and it follows from the observation 
\begin{equation*}
G_m(x/k) = \exp\left[ - \frac{\lambda}{2} \left( \frac{x}{k} - \frac{k}{x} \right) \right] - \sum_{n=-\infty}^{m-1} J_n(\lambda) (-x/k)^n, \qquad k = 1, 2, 3, \dots,
\end{equation*}
that
\begin{equation*}
\lim_{N\to\infty} \frac{\mathcal{M}(\Gamma, f; N)}{N f(-N/|\Gamma|)} = 2.
\end{equation*}
If $\lambda$ is not a zero of $\BesselJ_{-1}$, then the above asymptotics must be modified to include a logarithmic term.
\end{eg}

%
%
\subsection*{The {\em weighted} kernel function $f_s^w(x) = x^{-s} w(x)$} 
Given a weight function $w(x)$, the kernel $f_s^w(x) = x^{-s} w(x)$ gives rise to the so-called {\em geodesic weighted Riesz $s$-energy} of an $N$-point configuration $(\PT{x}_1, \dots, \PT{x}_N)$
\begin{equation*}
G_s^w(\PT{x}_1, \dots, \PT{x}_N) \DEF \sum_{j \neq k} \frac{w(\gdist(\PT{x}_j,\PT{x}_k))}{\left[ \gdist(\PT{x}_j,\PT{x}_k) \right]^{s}}.
\end{equation*}
For the Euclidean metric the related weighted energy functionals are studied in \cite{BoHaSa2008}.

If $w(x)$ is such that $f_s^w(x)$ is admissible in the sense of Definition~\ref{def:admissible}, then Theorems~\ref{thm:general.f.general.case} and \ref{thm:general.f.exceptional.case} provide asymptotic expansions for the weighted geodesic Riesz $s$-energy of equally spaced points on a Riemannian circle $\Gamma$, which are also optimal configurations if $f_s^w(x)$ is strictly decreasing and convex (cf. Proposition~\ref{prop:optimality}(A)).

\begin{thm} \label{thm:weighted.f}
Let $w(z) = \sum_{n=0}^\infty a_n z^n$ be analytic in $|z| < | \Gamma | / 2 + \eps$, $\eps>0$. Set $f_s^w(z) \DEF z^{-s} w(z)$. Then for integers $p,q>0$ and $s\in\mathbb{C}$, $s$ not an integer, such that $q-2p < \re s < 2 + q$ we have
\begin{equation*}
\mathcal{M}(\Gamma, f_s^w; N) = V_{f_s^w}(\Gamma) \, N^2 + \sum_{n=0}^{q} a_n \frac{2 \zetafcn(s-n)}{\left| \Gamma \right|^{s-n}} N^{1+s-n} + \mathcal{B}_p(\Gamma, f_s^w; N) + \mathfrak{R}_p(\Gamma, f_s^w; N),
\end{equation*}
where $\mathcal{B}_p$ is defined in \eqref{eq:B.p}.
The $N^2$-coefficient is the meromorphic continuation to $\mathbb{C}$ of the geodesic $f_s^w$-energy of $\Gamma$ given by $( 2 / | \Gamma | ) \int_0^{| \Gamma | / 2} f_s^w(x) \dd x$ for $0 < s < 1$; that is
\begin{equation*}
V_{f_s^w}(\Gamma) = \frac{2}{\left| \Gamma \right|} \sum_{n=0}^{\infty} a_n \frac{\left( \left| \Gamma \right| / 2 \right)^{1+n-s}}{1+n-s}, \qquad s \neq 1, 2, 3, \dots.
\end{equation*}
The remainder $\mathfrak{R}_p(\Gamma, f_s^w; N)$ is of order $\mathcal{O}( N^{1-2p} ) + \mathcal{O}( N^{s-2p} )$ as $N\to\infty$. 
\end{thm}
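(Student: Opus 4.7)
The plan is to reduce the theorem to Theorem~\ref{thm:general.f.general.case} by writing $f_s^w$ in the form to which that theorem applies and then reading off the coefficients. Since $w$ is analytic in a disc strictly larger than $[0,|\Gamma|/2]$, we have the absolutely convergent expansion
\[
f_s^w(x) = x^{-s} w(x) = \sum_{n=0}^\infty a_n \, x^{n-s}, \qquad 0 < x \leq |\Gamma|/2+\eps,
\]
which already matches the singular template in Definition~\ref{def:admissible} with $s_n \DEF s-n$ and the same coefficients $a_n$. So my first step is to verify admissibility for the truncation $S_q(x)=\sum_{n=0}^q a_n x^{n-s}$: the exponents $s_n$ are strictly decreasing in real part; condition (ii)(a), namely $1-\re s_q+\delta>0$, becomes $\delta > \re s-q-1$, which is compatible with the choice $\delta=1$ precisely because we are assuming $\re s < q+2$; the inequality $\re s_q + 2p > 0$ is exactly the lower bound $\re s > q-2p$ in the hypothesis. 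Since $f_s^w$ is $C^\infty$ on $(0,|\Gamma|/2]$, condition (i) is immediate, and the tail estimates (b), (c) follow from term-by-term differentiation/integration of $(f_s^w-S_q)(x) = \sum_{n\geq q+1}a_n x^{n-s}$, which is dominated near the origin by a constant times $x^{q+1-s}$ with uniform control on derivatives.

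Next, since $s$ is not an integer, none of $s_n = s-n$ equals $1$, and so Theorem~\ref{thm:general.f.general.case} applies verbatim. The explicit power-of-$N$ sum it produces is
\[
\sum_{n=0}^{q} a_n \frac{2\zetafcn(s_n)}{|\Gamma|^{s_n}} N^{1+s_n} = \sum_{n=0}^{q} a_n \frac{2\zetafcn(s-n)}{|\Gamma|^{s-n}} N^{1+s-n},
\]
which is the desired middle term. The Bernoulli block $\mathcal{B}_p(\Gamma, f_s^w;N)$ is produced by that theorem without modification, and the stated remainder bound $\mathcal{O}(N^{1-2p})+\mathcal{O}(N^{s-2p})$ follows (with $\delta=1$) from $\mathcal{O}(N^{1-\delta+s_q}) = \mathcal{O}(N^{s-q})$, after absorbing any of the explicit terms with index between the chosen $q$ and $2p$ into the remainder; equivalently one may apply the theorem to $S_{q'}$ with $q'$ large enough that $\re s_{q'}+2p\leq 0$, then restore the first $q+1$ terms explicitly.

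For the $N^2$-coefficient I will combine the two pieces in \eqref{eq:general.case.V.f}. The first is $(2/|\Gamma|)\sum_{n=0}^q a_n(|\Gamma|/2)^{1+n-s}/(1+n-s)$, obtained by substituting $s_n = s-n$; the second is $(2/|\Gamma|)\int_0^{|\Gamma|/2}(f_s^w - S_q)(x)\,dx$, which, by term-by-term integration of the uniformly convergent tail $\sum_{n\geq q+1} a_n x^{n-s}$, equals $(2/|\Gamma|)\sum_{n=q+1}^\infty a_n(|\Gamma|/2)^{1+n-s}/(1+n-s)$. The two sums glue into the single series asserted for $V_{f_s^w}(\Gamma)$, and this series converges absolutely for any $s\notin\{1,2,3,\ldots\}$ because $\sum|a_n|\,r^n$ converges for $r$ slightly beyond $|\Gamma|/2$, while $1/(1+n-s)$ is harmless. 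The meromorphic-continuation claim is then automatic: for $0<s<1$ the formula coincides with the convergent integral representation $(2/|\Gamma|)\int_0^{|\Gamma|/2} x^{-s}w(x)\,dx$, and both sides define meromorphic functions of $s$ whose only possible poles lie at the positive integers.

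The main obstacle I foresee is the bookkeeping between the parameters $p$, $q$, and $\re s$ in the remainder estimate: the formula delivered by Theorem~\ref{thm:general.f.general.case} gives a remainder governed by $N^{s-q}$, whereas the statement advertises $N^{s-2p}$. Handling this cleanly requires either choosing $q$ sufficiently close to $2p$ so that the two exponents coincide, or extending $S_q$ to $S_{q'}$ with $q' \geq 2p+\re s-1$ and moving the now-extraneous explicit terms back into the sum; either route is routine once one verifies that admissibility is preserved throughout the admissible range $q-2p<\re s<2+q$. Everything else is merely a matter of matching notation between the two theorems.
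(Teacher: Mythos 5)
Your reduction is precisely the paper's proof: take $S_q(x)=\sum_{n=0}^q a_n x^{n-s}$, so $s_n=s-n$, verify Definition~\ref{def:admissible} with $\delta=1$ (which is where both constraints $q-2p<\re s$ and $\re s<2+q$ come from), note that $s\notin\mathbb{Z}$ excludes the exceptional exponent $s_n=1$, apply Theorem~\ref{thm:general.f.general.case}, and obtain $V_{f_s^w}(\Gamma)$ by termwise integration of the tail in \eqref{eq:general.case.V.f}, the resulting series being the meromorphic continuation of $(2/|\Gamma|)\int_0^{|\Gamma|/2}f_s^w(x)\dd x$. All of that is sound and matches the paper, which is simply terser.

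The one genuine gap is the point you flagged and then dismissed as routine: the remainder exponent. With $\delta=1$ and $s_q=s-q$, Theorem~\ref{thm:general.f.general.case} gives $\mathfrak{R}_p(\Gamma,f_s^w;N)=\mathcal{O}(N^{1-2p})+\mathcal{O}(N^{1-\delta+s_q})=\mathcal{O}(N^{1-2p})+\mathcal{O}(N^{s-q})$, and neither of your proposed repairs turns this into the advertised $\mathcal{O}(N^{s-2p})$: $q$ is prescribed in the statement, so you cannot ``choose $q$ close to $2p$''; and enlarging $S_q$ to $S_{q'}$ and then pushing the terms with $q<n\le q'$ into the remainder reinstates the term $a_{q+1}\,2\zetafcn(s-q-1)\,|\Gamma|^{-(s-q-1)}N^{s-q}$, of exact order $N^{\re s-q}$, which exceeds both $N^{1-2p}$ and $N^{\re s-2p}$ whenever $q<2p$ and $1+q-2p<\re s<2+q$ --- a range allowed by the hypotheses. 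Concretely, for $w(z)=z^2$, $q=p=1$, $s=5/2$ one has $f_s^w(x)=x^{-1/2}$, and Theorem~\ref{thm:main} shows the expansion contains $2\zetafcn(1/2)|\Gamma|^{-1/2}N^{3/2}$, which is not covered by the first $q+1$ explicit terms (they vanish) and is certainly not $\mathcal{O}(N^{1/2})$. So the absorption argument proves the bound $\mathcal{O}(N^{1-2p})+\mathcal{O}(N^{s-q})$, not $\mathcal{O}(N^{1-2p})+\mathcal{O}(N^{s-2p})$; the two coincide (and the stated form follows) exactly when $q\ge 2p$, e.g.\ $q=2p$. The paper's own proof cites Theorem~\ref{thm:general.f.general.case} and records the exponent $s-2p$ without comment, so your instinct that the $p$--$q$ bookkeeping needed attention was correct --- but the honest output of the method is the exponent $s-q$ (equivalently, the theorem should be read with $q\ge2p$), and calling the discrepancy routine to remove is not justified.
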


\begin{rmk}
For $s$ is a positive integer the series $\sum_{n=0}^\infty a_n z^{n-s}$ is the Laurent expansion of $f(z)$ in $0 < |z| < | \Gamma | / 2 + \eps$ and Theorem~\ref{thm:Laurent.series} applies. For $s$ is a non-positive integer the series $\sum_{n=0}^\infty a_n z^{n-s}$ is the power series expansion of $f(z)$ in $0 < |z| < | \Gamma | / 2 + \eps$ and Theorem~\ref{thm:analytic.f} applies.
\end{rmk}

\begin{eg}
Let $w(z) = \sin( z \pi / | \Gamma | )$. Then for $\re s>0$ not an integer
\begin{equation*}
f_s^w(z) = x^{-s} w(z) = \sum_{n=0}^\infty \frac{(-1)^n}{(2n+1)!} \left( \pi / \left| \Gamma \right| \right)^{2n+1} z^{2n+1-s}
\end{equation*}
and, by Theorem~\ref{thm:weighted.f}, the geodesic weighted Riesz $s$-energy of $N$ equally spaced points has the asymptotic expansion  ($0 < \re s < 1 + 2p$)
\begin{equation*}
\mathcal{M}(\Gamma, f_s^w; N) = V_{f_s^w}(\Gamma) \, N^2 + \left( \pi / \left| \Gamma \right| \right)^{s} \sum_{k=1}^{p} \frac{(-1)^{k-1}}{(2k-1)!}  \frac{2 \zetafcn(1+s-2k)}{\pi^{1+s-2k}} N^{2+s-2k} + \mathcal{B}_p(\Gamma, f_s^w; N) + \mathfrak{R}_p(\Gamma, f_s^w; N),
\end{equation*}
where $\mathcal{B}_p(\Gamma, f_s^w; N)$ is given in \eqref{eq:B.p}. The remainder $\mathfrak{R}_p(\Gamma, f_s^w; N)$ is of order $\mathcal{O}( N^{1-2p} ) + \mathcal{O}( N^{s-2p} )$ as $N \to \infty$ and
\begin{equation*}
V_{f_s^w}(\Gamma) = \frac{2}{\pi} \left( \left| \Gamma \right| / \pi \right)^{-s} \sum_{k=1}^{\infty} \frac{(-1)^{k-1}}{(2k-1)!} \frac{\left( \pi / 2 \right)^{2k-s}}{2k-s}.
\end{equation*}
For $0 < s < 1$ we have
\begin{equation*}
V_{f_s^w}(\Gamma) = \frac{2}{\left| \Gamma \right|} \int_0^{\left| \Gamma \right| / 2} f_s^w(x) \dd x = \frac{\pi}{2} \frac{\left( \left| \Gamma \right| / 2 \right)^{-s}}{2-s} \Hypergeom{1}{2}{1-s/2}{2-s/2,3/2}{-\left( \pi / 4 \right)^2}
\end{equation*}
expressed in terms of a generalized $\Hypergeom{1}{2}{×}{×}{×}$-hypergeometric function, which is analytic at $s$ not an even integer. Hence, 
$V_{f_s^w}(\Gamma)$ is the meromorphic continuation to the complex plane of the integral $\frac{2}{\left| \Gamma \right|} \int_0^{\left| \Gamma \right| / 2} f_s^w(x) \dd x$. We observe that for $s=1/2$ we have $V_{f_s^w}(\Gamma) = 2 \sqrt{ 2 / | \Gamma | } \FresnelS(1)$, where $\FresnelS(u)$ is the Fresnel integral $\FresnelS(u) \DEF \int_0^u \sin( x^2 \pi / 2 ) \dd x$.
\end{eg}

As an application of the theorems of this section, we recover results recently given in \cite{BrHaSa2009} regarding the complete asymptotic expansion of the Euclidean Riesz $s$-energy $\mathcal{L}_s(N)$ of the $N$-th roots of unity on the unit circle $\mathbb{S}^1$ in the complex plane $\mathbb{C}$. Indeed, if $| z - w |$ denotes the Euclidean distance between two points $\zeta$ and $z$ in $\mathbb{C}$, then from the identities $| z - \zeta |^2 = 2 ( 1 - \cos \psi ) = 4 [ \sin ( \psi / 2) ]^2$, where $\psi$ denotes the angle ``between'' $\zeta$ and $z$ on $\mathbb{S}^1$, we obtain the following relation between Euclidean and geodesic Riesz $s$-kernel:
\begin{equation*} 
\left| z - \zeta \right|^{-s} = \left| 2 \left( 1 - \cos \psi \right) \right|^{s/2} = \left| 2 \sin \frac{\psi}{2} \right|^s = \left| 2 \sin \frac{\gdist(\zeta,z)}{2} \right|^s, \qquad \zeta,z \in \mathbb{S}^1.
\end{equation*}
Thus, for $\zeta,z \in \mathbb{S}^1$ there holds
\begin{equation}
\left| z - \zeta \right|^{-s} = f_s^w(\gdist(\zeta,z)), \qquad w(x) \DEF \left( \sinc \frac{x}{2} \right)^{-s}, \qquad f_s^w(x) = x^{-s} \sinc^{-s}(x/2),
\end{equation}
where the ``sinc'' function, defined as $\sinc z = ( \sin z ) / z$ is an entire function that is non-zero for $|z|<\pi$ and hence, has a logarithm $g(z)= \log \sinc z$ that is analytic for $|z|<\pi$ (we choose the branch such that $\log \sinc 0=0$). The function $\sinc^{-s} (z/2) \DEF \exp[-s \log \sinc  (z/2)]$ is even and analytic on the unit disc $|z|<2\pi$ and thus has a power series representation of the form
\begin{equation*} 
\sinc^{-s} (z/2) = \sum_{n=0}^\infty \alpha_n(s) z^{2n}, \quad |z|<2\pi, \, s\in \Cset. 
\end{equation*}
It can be easily seen that for $s>-1$ and $s\neq0$ the function $(\sgn s) f_s^w(x)$ \footnote{The function $\sgn s$ denotes the sign of $s$. It is defined to be $-1$ if $s<0$, $0$ if $s=0$, and $1$ if $s>0$.} is a convex and decreasing function. Hence, application of Proposition~\ref{prop:optimality}(A) reproves the well-known fact that the $N$-th roots of unity and their rotated copies are the only optimal $f_s^w$-energy configurations for $s$ in the range $(-1,0)\cup(0,\infty)$. (We remind the reader that, in contrast to the geodesic case, in the Euclidean case the $N$-th roots of unity are optimal for $s\geq-2$, $s\neq0$, and they are unique up to rotation for $s>-2$, see discussion in \cite{BrHaSa2009}.) The complete asymptotic expansion of $\mathcal{L}_s(N) = \mathcal{M}(\mathbb{S}^1,f_s^w; N)$ can be obtained from Theorem~\ref{thm:weighted.f} if $s$ is not an integer, from Theorem~\ref{thm:Laurent.series} if $s$ is a positive integer, and from Theorem~\ref{thm:analytic.f} if $s$ is a negative integer. (We leave the details to the reader.) For $s\in\mathbb{C}$ with $s\neq 0, 1, 3, 5, \dots$ and $q-2p < \re s < 2 + q$, the Euclidean Riesz $s$-energy for the $N$-th roots of unity is given by (cf. \cite[Theorem~1.1]{BrHaSa2009})
\begin{equation} \label{eq:cal.L.s.N}
\mathcal{L}_s(N) = V_s \, N^2 + \frac{2\zetafcn(s)}{(2\pi)^s} N^{1+s} + \sum_{n=1}^{q} \alpha_n(s) \frac{2\zetafcn(s-2n)}{(2\pi)^{s-2n}} N^{1+s-2n} + \mathcal{O}( N^{1-2p} ) + \mathcal{O}( N^{s-2p} ) 
\end{equation}
as $N \to \infty$, where (cf. \cite{BrHaSa2009})
\begin{align} \label{eq:V.s.alpha.n}
V_s &= \frac{2^{-s}\gammafcn((1-s)/2)}{\sqrt{\pi}\gammafcn(1-s/2)}, &\qquad \alpha_n(s) &= \frac{(-1)^n B_{2n}^{(s)}(s/2)}{(2n)!}, \quad n = 0, 1, 2, \dots.
\end{align}
Here, $B_n^{(\alpha)}(x)$ denotes the generalized Bernoulli polynomial, where $B_n(x) = B_n^{(1)}(x)$. Notice the absence of the term $\mathcal{B}_p(\Gamma, f_s^w; N)$, which follows from the fact that odd derivatives of $f_s^w(x)$ evaluated at $\pi$ assume the value $0$. (This can be seen, for example, from Fa\`{a} di Bruno's differentiation formula.) 

The entirety of positive odd integers $s$ constitutes the class of exceptional cases regarding the Euclidean Riesz $s$-energy of the $N$-th roots of unity. For such $s$ Theorem~\ref{thm:Laurent.series}(ii) provides the asymptotic expansion of $\mathcal{L}_s(N) = \mathcal{M}(\mathbb{S}^1,f_s^w;N)$, which features an $N^2 \log N$ term as leading term. That is, for $s=2L+1$, $L=0,1,2,\dots$, we have from Theorem~\ref{thm:Laurent.series}(ii) that (cf. \cite[Thm.~1.2]{BrHaSa2009})
\begin{equation} \label{eq:L.s.N}
\mathcal{L}_s(N) = \frac{\alpha_L(s)}{\pi} N^2 \log N + V_{f_s^w}(\mathbb{S}^1) N^2 + \sum_{\substack{m=0, \\ m \neq L}}^{p+L} \alpha_m(s) \frac{2\zetafcn(s-2m)}{\left( 2 \pi \right)^{s-2m}} N^{1+s-2m} + \mathcal{O}(N^{1-2p}),
\end{equation}
where the coefficients $\alpha_m(s)$ are given in \eqref{eq:V.s.alpha.n} and
\begin{equation*}
V_{f_s^w}(\mathbb{S}^1) = \frac{1}{\pi} \Bigg\{ \sum_{\substack{m=0, \\ m \neq L}}^\infty \alpha_m(s) \frac{\pi^{2m+1-s}}{2m+1-s} - \alpha_L(s) \left( \log 2 - \gamma \right) \Bigg\}.
\end{equation*}
We remark that in \cite[Thm.~1.2]{BrHaSa2009} we also give a computationally more accessible representation of $V_{f_s^w}(\mathbb{S}^1)$. The appearance of the $N^2 \log N$ terms can be understood on observing that the constant $V_s$ in \eqref{eq:cal.L.s.N} has its simple poles at positive odd integers $s$ and when using a limit process as $s\to K$ ($K$ a positive odd integer) in \eqref{eq:cal.L.s.N}, the simple pole at $s=K$ need to be compensated by the simple pole of the Riemann zeta function in the coefficient of an appropriate lower-order term. This interplay produces eventually the $N^2 \log N$ term. 


\section{The geodesic Riesz $s$-energy of equally spaced points on $\Gamma$}
\label{sec:geodesic.Riesz.s.energy}

Here, we state theorems concerning the geodesic Riesz $s$-energy of equally spaced points on $\Gamma$ that follow of the results from the preceding section together with asymptotic properties of generalized harmonic numbers. The proofs are given in Section~\ref{sec:proofs}. 

\begin{defn} \label{def:main}
The {\em discrete geodesic Riesz $s$-energy} of $N$ equally spaced points $\PT{z}_{1,N}, \dots, \PT{z}_{N,N}$ on $\Gamma$ is given by
\begin{equation*}
\mathcal{M}_s(\Gamma;N) \DEF \sum_{j\neq k} \left[ \gdist(\PT{z}_{j,N},\PT{z}_{k,N}) \right]^{-s} 
= N \sum_{j=1}^{N-1} \left[ \gdist(\PT{z}_{j,N},\PT{z}_{N,N}) \right]^{-s}, \qquad s \in \mathbb{C}.
\end{equation*}
The {\em discrete logarithmic geodesic energy} of $N$ equally spaced points $\PT{z}_{1,N}, \dots, \PT{z}_{N,N}$ on $\Gamma$ enters in a natural way by taking the limit
\begin{equation}
\mathcal{M}_{\mathrm{log}}(\Gamma; N) \DEF \lim_{s\to0} \frac{\mathcal{M}_s(\Gamma; N)-N(N-1)}{s} = \sum_{j\neq k} \log\frac{1}{\gdist(\PT{z}_{j,N},\PT{z}_{k,N})}. \label{M.0}
\end{equation}
\end{defn}

We are interested in the asymptotics of $\mathcal{M}_s(\Gamma; N)$ for large $N$ for all values of $s$ in the complex plane and we shall compare them with the related asymptotics for the Euclidean case given in our recent paper \cite{BrHaSa2009}. 
In the following we use the notation
\begin{align*}
\mathcal{I}_s^g[\mu] &\DEF \int \int \frac{\dd \mu(\PT{x}) \dd \mu(\PT{y})}{\left[ \gdist(\PT{x},\PT{y}) \right]^s}, &\qquad V_s^g(\Gamma) &\DEF \inf\{ \mathcal{I}_s^g[\mu] : \mu \in \mathfrak{M}(\Gamma) \}, \\
\mathcal{I}_{\mathrm{log}}^g[\mu] &\DEF \int \int \log \frac{1}{\gdist(\PT{x},\PT{y})} \dd \mu(\PT{x}) \dd \mu(\PT{y}), &\qquad V_{\mathrm{log}}^g(\Gamma) &\DEF \inf\{ \mathcal{I}_{\mathrm{log}}^g[\mu] : \mu \in \mathfrak{M}(\Gamma) \}.
\end{align*}

\subsection{The geodesic logarithmic energy}

\begin{thm} \label{thm.M.0} 
Let $q$ be a positive integer. For $N = 2 M + \kappa$, $\kappa = 0, 1$
\begin{equation*}
\mathcal{M}_{\mathrm{log}}(\Gamma; N) = V_{\mathrm{log}}^g( \Gamma) \, N^2 - N \log N + N \log \frac{\left| \Gamma \right|}{2\pi} - \sum_{n=1}^q \frac{B_{2n}(\kappa / 2)}{\left( 2 n - 1 \right) 2 n} 2^{2n} N^{2-2n} + \mathcal{O}_{q,\kappa}(N^{-2q})
\end{equation*}
as $N \to \infty$. Here, $V_{\mathrm{log}}^g( \Gamma) = 1 - \log ( | \Gamma | / 2 )$. 
\end{thm}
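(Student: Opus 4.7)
The plan is direct computation, reducing $\mathcal{M}_{\mathrm{log}}(\Gamma;N)$ to a value of the log-gamma function and then applying a Stirling-type expansion uniform in the parity of $N$.

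Setting $f(x)=\log(1/x)$ in \eqref{eq:cal.M.Gamma.f.N} and writing $M = \lfloor N/2\rfloor = (N-\kappa)/2$, the identity $\sum_{n=1}^M \log n = \log\gammafcn(M+1)$ gives
\begin{equation*}
\mathcal{M}_{\mathrm{log}}(\Gamma;N) = -(N^2-\kappa N)\log(\left|\Gamma\right|/N) - 2N\log\gammafcn(M+1) + (1-\kappa)N\log(\left|\Gamma\right|/2).
\end{equation*}
Since $M+1 = N/2 + a$ with $a \DEF 1-\kappa/2$, the key analytic input is the shifted Stirling expansion
\begin{equation*}
\log\gammafcn(z+a) = (z+a-\tfrac12)\log z - z + \tfrac12\log(2\pi) + \sum_{n=1}^{2q}\frac{(-1)^{n+1}B_{n+1}(a)}{n(n+1)z^n} + \mathcal{O}(z^{-2q-1}),
\end{equation*}
which is a standard consequence of Euler--MacLaurin applied to $\log$ (cf.\ Proposition~\ref{prop:Euler-MacLaurin.Summation}). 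Taking $z=N/2$, the reflection identity $B_{n+1}(1-\kappa/2) = (-1)^{n+1}B_{n+1}(\kappa/2)$ replaces the shifted-argument Bernoulli polynomials by $B_{n+1}(\kappa/2)$, and for $\kappa\in\{0,1\}$ the vanishing $B_{2j+1}(\kappa/2) = 0$ for $j\geq 1$ leaves only the odd-$n$ contributions $n=2m-1$, which (after multiplication by $2N$) assemble into $\sum_{m=1}^{q}\frac{2^{2m}B_{2m}(\kappa/2)}{(2m-1)(2m)}N^{2-2m}$.

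Substituting back into the expression for $\mathcal{M}_{\mathrm{log}}(\Gamma;N)$ is arithmetic bookkeeping: the $\log\left|\Gamma\right|$ contributions amalgamate to $-(N^2-N)\log\left|\Gamma\right|$; the $\log N$ coefficients sum to $-N$; the $\log 2$ coefficients add to $N^2$; and the remaining constants produce $N^2 - N\log(2\pi)$. Rearranging yields
\begin{equation*}
\mathcal{M}_{\mathrm{log}}(\Gamma;N) = (1 - \log(\left|\Gamma\right|/2))\,N^2 - N\log N + N\log(\left|\Gamma\right|/(2\pi)) - \sum_{n=1}^{q}\frac{B_{2n}(\kappa/2)}{(2n-1)\cdot 2n}\,2^{2n}\,N^{2-2n} + \mathcal{O}_{q,\kappa}(N^{-2q}).
\end{equation*}
The identification $V_{\mathrm{log}}^g(\Gamma) = 1 - \log(\left|\Gamma\right|/2)$ follows independently from Proposition~\ref{prop:leading.term} (applicable since $x\mapsto\log(1/x)$ is strictly convex and strictly decreasing on $(0,\left|\Gamma\right|/2]$) together with the direct computation $\mathcal{I}_{\mathrm{log}}^g[\sigma_\Gamma] = (2/\left|\Gamma\right|)\int_0^{\left|\Gamma\right|/2}\log(1/x)\dd x = 1 - \log(\left|\Gamma\right|/2)$, which also confirms the leading coefficient obtained above.

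The main obstacle is the choice of expansion in the middle step: the \emph{shifted} Stirling formula with parameter $a = 1-\kappa/2$ is what makes the two parities collapse into a single formula with $B_{2n}(\kappa/2)$. Without it one is forced into a case split on $\kappa$ that, for $\kappa=1$, requires combining the binomial expansion of $(1-1/N)^{1-2n}$ coming from $M^{1-2n}$ with the Taylor series of $\log(1-1/N)$ coming from $\log M$, and then resumming the cross-terms into Bernoulli polynomials at $1/2$ via the generating-function identity $\tfrac{t}{e^t-1}\,e^{t/2} = \tfrac{t/2}{\sinh(t/2)}$. The shifted Stirling expansion bypasses this combinatorial overhead and produces the coefficients $B_{2n}(\kappa/2)$ uniformly in $\kappa$, while the truncation error $\mathcal{O}_{q,\kappa}(N^{-2q})$ is inherited directly from the Stirling remainder.
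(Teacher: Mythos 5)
Your proof is correct and follows essentially the same route as the paper: both reduce $\mathcal{M}_{\mathrm{log}}(\Gamma;N)$ to an exact expression involving $\log\gammafcn(\lfloor N/2\rfloor+1)$ and then apply the shifted Stirling expansion with shift $1-\kappa/2$ (the paper's Proposition~\ref{prop:log} with $\alpha=(2-\kappa)/2$), the only cosmetic differences being that the paper obtains the exact formula by differentiating \eqref{scr.M} at $s=0$ and derives the log-gamma expansion from the Hurwitz zeta function rather than quoting it as standard. Your intermediate exact formula (with the term $+(1-\kappa)N\log(\left|\Gamma\right|/2)$) is the correct one, and your subsequent bookkeeping checks out against the stated theorem.
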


\begin{rmk}
The parity of $N$ affects the coefficients of the powers $N^{2-2m}$, $m \geq 1$.
The $N^2$-term vanishes for curves $\Gamma$ with $| \Gamma | = 2 e$ and the $N$-term vanishes when $| \Gamma | = 2 \pi$. 
By contrast, the Euclidean logarithmic energy of $N$ equally spaced points on the unit circle is given by (cf. \cite{BrHaSa2009})
\begin{equation*} 
\mathcal{L}_{\mathrm{log}}(N) = - N \log N.
\end{equation*}
\end{rmk}

\subsection{The geodesic Riesz $s$-energy}

The next result provides the complete asymptotic formula for all $s\neq1$. This exceptional case, in which a logarithmic term arises, is described in Theorem~\ref{thm:s.EQ.1}.


\begin{thm}[general case] \label{thm:main}
Let $q$ be a positive integer. Then for all $s \in \mathbb{C}$ with $s\neq 1$ and $\re s + 2q \geq 0$ there holds
\begin{equation}
\mathcal{M}_s(\Gamma; N) = V_s^g( \Gamma ) \, N^2 + \frac{2\zetafcn(s)}{\left| \Gamma \right|^s} N^{1+s} - \frac{1}{\left( \left| \Gamma \right| / 2 \right)^s} \sum_{n=1}^q \frac{B_{2n}(\kappa/2)}{(2n)!} \Pochhsymb{s}{2n-1} 2^{2n} N^{2-2n} + \mathcal{O}_{s,q,\kappa}(N^{-2q}) \label{gen:asympt.1}
\end{equation}
as $N\to\infty$, where $V_s^g( \Gamma ) = ( | \Gamma | / 2 )^{-s} / ( 1 - s )$ and $N = 2M + \kappa$, $\kappa = 0, 1$. 
\end{thm}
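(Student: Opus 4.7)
The plan is to apply Theorem~\ref{thm:general.f.general.case} to the kernel $f(x) = x^{-s}$ with the trivial choice $S_0(x) = x^{-s}$ in Definition~\ref{def:admissible}, i.e.\ with the parameter~$q$ of that definition equal to $0$, $a_0 = 1$, and $s_0 = s$. Since $f - S_0 \equiv 0$, conditions (ii)(a)--(c) of Definition~\ref{def:admissible} hold trivially for any $\delta > 0$, while condition (i) is immediate because $f$ is $C^\infty$ on $(0, |\Gamma|/2]$. The requirement $\re s_0 + 2p > 0$ (or $s_0 = -2p$) becomes $\re s + 2p > 0$ (or $s = -2p$); taking the Euler-MacLaurin order $p$ in Theorem~\ref{thm:general.f.general.case} to equal the index $q$ appearing in the statement of Theorem~\ref{thm:main} reduces this exactly to the hypothesis $\re s + 2q \geq 0$. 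The exclusion $s \neq 1$ places us in the general (non-exceptional) branch of Theorem~\ref{thm:general.f.general.case}.

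With these identifications, the output of Theorem~\ref{thm:general.f.general.case} specializes as follows. The integral $\int_0^{|\Gamma|/2}(f - S_q)\,dx$ in \eqref{eq:general.case.V.f} vanishes, so
\[
V_f(\Gamma) \;=\; \frac{2}{|\Gamma|}\cdot\frac{(|\Gamma|/2)^{1-s}}{1-s} \;=\; \frac{(|\Gamma|/2)^{-s}}{1-s} \;=\; V_s^g(\Gamma),
\]
and the sum $\sum_n a_n \,2\zetafcn(s_n)/|\Gamma|^{s_n}\cdot N^{1+s_n}$ collapses to $2\zetafcn(s)/|\Gamma|^s\cdot N^{1+s}$. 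For the Bernoulli correction, direct differentiation gives $f^{(2n-1)}(x) = -\Pochhsymb{s}{2n-1}\,x^{1-s-2n}$; substituting into \eqref{eq:B.p}, collecting the powers of $|\Gamma|$ and $N$, and rewriting $|\Gamma|^{-s}$ as $2^s(|\Gamma|/2)^{-s}$ converts $\mathcal{B}_p(\Gamma, f; N)$ into exactly $-(|\Gamma|/2)^{-s}\sum_{n=1}^{p}\frac{B_{2n}(\kappa/2)}{(2n)!}\Pochhsymb{s}{2n-1}\,2^{2n} N^{2-2n}$, matching the form in \eqref{gen:asympt.1}.

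The one delicate point is to upgrade the remainder estimate from the $\mathcal{O}(N^{1-2p})$ supplied by Theorem~\ref{thm:general.f.general.case} to the sharper $\mathcal{O}(N^{-2q})$ asserted here. The plan is to invoke Theorem~\ref{thm:general.f.general.case} with $p = q+1$ rather than $p = q$ (legitimate, since $\re s + 2(q+1) \geq 2 > 0$). The remainder then becomes $\mathcal{O}(N^{-1-2q}) \subset \mathcal{O}(N^{-2q})$, while the extra $n = q+1$ term in $\mathcal{B}_{q+1}$ is itself of order $N^{2-2(q+1)} = N^{-2q}$ and so can be absorbed into the stated error. This reflects the familiar fact that, because the odd Bernoulli numbers $B_{2k+1}$ vanish for $k\ge 1$, the Euler-MacLaurin tail automatically gains one extra power.

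In summary, Theorem~\ref{thm:main} is a clean specialization of Theorem~\ref{thm:general.f.general.case} to the admissible pair $(f,S_q) = (x^{-s},x^{-s})$, and the only real work is bookkeeping: the algebraic reduction of $\mathcal{B}_p$ for $f(x)=x^{-s}$ to the displayed form, and the choice $p = q+1$ used to sharpen the remainder bound. The main obstacle, modest as it is, lies in that reduction, together with verifying that the admissibility hypothesis $\re s + 2q \ge 0$ correctly covers the edge case $s = -2q$ via the alternative condition $s_q = -2p$ in Definition~\ref{def:admissible}.
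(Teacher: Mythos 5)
Your argument is correct, but it obtains the crucial $\mathcal{O}_{s,q,\kappa}(N^{-2q})$ error bound by a different route than the paper. The paper invokes Theorem~\ref{thm:general.f.general.case} only to get the expansion with the weaker error $\mathcal{O}(N^{1-2q})$ (with a possible $\log N$ factor on the line $\re s=-2q$); the sharper bound is then derived from the exact formula \eqref{scr.M}, writing $\sum_{k=1}^{\lfloor N/2\rfloor}k^{-s}=\zetafcn(s)-\zetafcn(s,\lfloor N/2\rfloor+1)$ and applying the Hurwitz zeta expansion of Proposition~\ref{prop:hzeta} with $x=N/2$, $\alpha=(2-\kappa)/2$. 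You instead stay entirely inside the Euler--MacLaurin framework: you apply Theorem~\ref{thm:general.f.general.case} with one extra order, $p=q+1$ (admissible, since $\re s+2q+2\geq 2>0$ under the hypothesis), and absorb the additional $n=q+1$ term of $\mathcal{B}_{q+1}$, which is of size $\mathcal{O}(N^{-2q})$, into the error; because $f\equiv S_q$, every expression containing $f-S_q$ vanishes identically (as the paper itself remarks after Theorem~\ref{thm:general.f.exceptional.case}), so $\mathfrak{R}_{q+1}(\Gamma,f;N)=\mathcal{O}(N^{-1-2q})$ with no $\delta$-dependent or logarithmic contribution, and your reduction of $\mathcal{B}_p$ via $f^{(2n-1)}(x)=-\Pochhsymb{s}{2n-1}x^{1-s-2n}$ to the displayed sum is exactly right. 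Two minor slips in your commentary, harmless to the argument as actually run with $p=q+1$: condition (ii)(a) of Definition~\ref{def:admissible} is not vacuous (it forces $\delta>\re s-1$, which you may simply arrange since (b) and (c) are vacuous here), and the choice $p=q$ would \emph{not} recover the hypothesis $\re s+2q\geq0$ exactly --- the case $\re s=-2q$ with $s\neq-2q$ is precisely the edge case that the paper's footnote and its Hurwitz zeta argument are designed to handle, and that your choice $p=q+1$ covers automatically. As for what each route buys: the paper's argument yields an exact Hurwitz zeta representation and reuses Proposition~\ref{prop:hzeta} (machinery also needed for Theorem~\ref{thm.M.0}), whereas yours is more economical, avoids Proposition~\ref{prop:hzeta} altogether, and in fact delivers the stated expansion in the slightly larger range $\re s>-2q-2$.
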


In \eqref{gen:asympt.1} the symbol $\Pochhsymb{s}{n}$ denotes the Pochhammer symbol defined as $\Pochhsymb{s}{0} = 1$ and $\Pochhsymb{s}{n+1} = ( n + s ) \Pochhsymb{s}{n}$ for integers $n\geq0$. 

\begin{rmk}
It is interesting to compare \eqref{gen:asympt.1} with \eqref{eq:cal.L.s.N}. 
%
%
%
It should be noted that in both the geodesic and the Euclidean case, the respective asymptotics have an $N^2$-term whose coefficient is the respective energy integral of the limit distribution (which is the normalized arc-length measure) or its appropriate analytic continuation, and an $N^{1+s}$-term with the coefficient $2 \zetafcn(s) / | \Gamma |^s$. Regarding the latter, it has been shown in \cite{MaMaRa2004} that for $s>1$ the dominant term of the asymptotics for the (Euclidean) Riesz $s$-energy of optimal energy $N$-point systems for any one-dimensional rectifiable curves in $\Rset^{p}$ is given by $2 \zetafcn(s) / | \Gamma |^s N^{1+s}$. 
Regarding the remaining terms of the asymptotics of $\mathcal{M}_s(\Gamma;N)$ and $\mathcal{L}_s(N)$ one sees that the exponents of the powers of $N$ do not depend on $s$ in the geodesic case but do depend on $s$ in the Euclidean case.
\end{rmk}

\begin{rmk}
In the general case $s\neq1$, the asymptotic series expansion \eqref{gen:asympt.1} is not convergent, except for $s=0,-1,-2, \dots$ when the infinite series reduces to a finite sum. The former follows from properties of the Bernoulli numbers and the latter from properties of the Pochhammer symbol $\Pochhsymb{a}{n}$. 
\end{rmk}

For a negative integer $s$ we have the following result.
\begin{prop} \label{prop:M.neg.p}
Let $p$ be a positive integer. Then
\begin{equation*}
\begin{split}
\mathcal{M}_{-p}(\Gamma; N) &= \frac{\left( \left| \Gamma \right| / 2 \right)^p}{p+1} N^2 + \frac{\left( \left| \Gamma \right| / 2 \right)^p}{p+1} \sum_{n=1}^{\lfloor p / 2 \rfloor} \binom{p+1}{2n} B_{2n}(\kappa/2) \, 2^{2n} N^{2-2n} \\
&\phantom{=\pm}+ \frac{2 \left| \Gamma \right|^p}{p+1} \left( B_{p+1}(\kappa/2) - B_{p+1} \right) N^{1-p}
\end{split} \label{M.s.spec.odd}
\end{equation*}
for $N = 2M + \kappa$, $\kappa = 0, 1$. The right-most term above vanishes for even $p$.
\end{prop}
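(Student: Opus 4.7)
Since $f(x)=x^p$ is a polynomial (in particular, bounded and continuous on $[0,|\Gamma|/2]$), the exact formula \eqref{eq:cal.M.Gamma.f.N} gives
\begin{equation*}
\mathcal{M}_{-p}(\Gamma;N) = 2 \left| \Gamma \right|^{p} N^{1-p} \sum_{n=1}^{M} n^{p} - (1-\kappa) \left( \left| \Gamma \right|/2 \right)^{p} N, \qquad M = \lfloor N/2 \rfloor = (N-\kappa)/2.
\end{equation*}
So the proposition reduces to a closed-form identity: no asymptotic analysis is required, and the main task is to rearrange Faulhaber's formula
\begin{equation*}
\sum_{n=1}^M n^p = \frac{B_{p+1}(M+1) - B_{p+1}}{p+1}
\end{equation*}
into the stated shape.

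The plan is to expand $B_{p+1}(M+1)$ via the Bernoulli addition formula with the split $M+1 = N/2 + (1-\kappa/2)$, namely
\begin{equation*}
B_{p+1}\!\left( \tfrac{N}{2} + 1 - \tfrac{\kappa}{2} \right) = \sum_{k=0}^{p+1} \binom{p+1}{k} B_k\!\left( 1 - \tfrac{\kappa}{2} \right) \left( \tfrac{N}{2} \right)^{p+1-k},
\end{equation*}
and then apply the reflection identity $B_k(1-x) = (-1)^k B_k(x)$ with $x=\kappa/2$ to convert $B_k(1-\kappa/2)$ into $(-1)^k B_k(\kappa/2)$. After substituting back, a factor $|\Gamma|^p/2^{p-k} = 2^k (|\Gamma|/2)^p$ pulls out cleanly, yielding
\begin{equation*}
\mathcal{M}_{-p}(\Gamma;N) = \frac{(|\Gamma|/2)^p}{p+1} \sum_{k=0}^{p+1} \binom{p+1}{k} (-1)^k B_k(\kappa/2) \, 2^k N^{2-k} \;-\; \frac{2|\Gamma|^p}{p+1} B_{p+1}\, N^{1-p} \;-\; (1-\kappa)\left(|\Gamma|/2\right)^p N.
\end{equation*}

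The remainder of the argument is pure bookkeeping. The $k=0$ term (using $B_0=1$) produces the leading $(|\Gamma|/2)^p N^2/(p+1)$. The $k=1$ term, with $B_1(\kappa/2) = (\kappa-1)/2$, produces exactly $(1-\kappa)(|\Gamma|/2)^p N$, which cancels the last summand coming from \eqref{eq:cal.M.Gamma.f.N}. The $k=p+1$ term merges with the $-B_{p+1}$ contribution to give the $N^{1-p}$ coefficient $\frac{2|\Gamma|^p}{p+1}\bigl((-1)^{p+1}B_{p+1}(\kappa/2)-B_{p+1}\bigr)$. For odd $p$ this equals $\frac{2|\Gamma|^p}{p+1}\bigl(B_{p+1}(\kappa/2)-B_{p+1}\bigr)$ as claimed; for even $p$ the index $p+1$ is odd and $\geq 3$, and since both $B_{p+1}=0$ and $B_{p+1}(\kappa/2)=(2^{-p}-1)B_{p+1}=0$, the $N^{1-p}$ term vanishes, confirming the parenthetical remark.

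Finally, for the intermediate range $2 \leq k \leq p$, one uses the same vanishing phenomenon: $B_k(0) = B_k$ and $B_k(1/2) = (2^{1-k}-1)B_k$ both vanish for odd $k \geq 3$, so only even indices $k=2n$, $1 \leq n \leq \lfloor p/2 \rfloor$, contribute, and $(-1)^{2n}=1$ gives precisely the displayed middle sum. I do not anticipate a genuine obstacle — the only step requiring care is tracking the parity-dependent signs and confirming that the $N^{1-p}$ coefficient collapses correctly when $p$ is even, since otherwise the formula would contain a spurious term of that order.
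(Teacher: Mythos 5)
Your proposal is correct and follows essentially the same route as the paper: the exact representation \eqref{eq:cal.M.Gamma.f.N} (equivalently \eqref{scr.M} with $s=-p$), Bernoulli's closed-form summation formula, and the addition theorem for Bernoulli polynomials. The only difference is cosmetic — you expand $B_{p+1}(M+1)$ with the shift $1-\kappa/2$ and invoke the reflection $B_k(1-x)=(-1)^kB_k(x)$, while the paper first rewrites the argument as $(N+\kappa)/2$ — and your bookkeeping of the $k=1$ cancellation, the vanishing odd-index terms, and the $N^{1-p}$ coefficient is accurate.
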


%
\begin{rmk} The corresponding {\em Euclidean} Riesz $(-p)$-energy of $N$-th roots of unity reduces to 
\begin{equation*} 
\mathcal{L}_{-p}(N) = V_{-p} N^2 \quad \text{if $p=2,4,6,\dots$.}
\end{equation*}
\end{rmk}

\begin{rmk}
The quantity $\mathcal{M}_{-1}(\mathbb{S}; N)$ gives the maximum sum of {\em geodesic} distances on the unit circle. Corollary~\ref{prop:M.neg.p} yields
\begin{equation} \label{eq:M.neg.1}
\mathcal{M}_{-1}(\mathbb{S}; N) = \frac{\pi}{2} \left( N^2 - \kappa \right), \qquad \text{$N = 2 M + \kappa$, $\kappa = 0, 1$.}
\end{equation}
We remark that L. Fejes T{\'o}th~\cite{Fe1959} conjectured (and proved for $N \leq 6$) that the maximum sum of geodesic distances on the {\em unit sphere $\mathbb{S}^2$ in $\mathbb{R}^3$} is also given by the right-hand side in \eqref{eq:M.neg.1}. This conjecture was proved by Sperling~\cite{Sp1960} for even $N$ \footnote{Sperling mentions that his proof can be easily generalized to higher-dimensional spheres.} and by Larcher~\cite{La1962} for odd $N$.\footnote{Larcher also characterizes all optimal configurations.} An essential observation is that the sum of geodesic distances does not change if a given pair of antipodal points $(\PT{x}, \PT{x}^\prime)$ is rotated simultaneously, since $\gdist(\PT{x},\PT{y}) + \gdist(\PT{x}^\prime,\PT{y}) = \pi$ for every $\PT{y} \in \mathbb{S}^2$. 
\end{rmk}

In the exceptional case $s=1$ a logarithmic term appears. 

\begin{thm} \label{thm:s.EQ.1} 
Let $q \geq 1$ be an integer. For $N = 2 M + \kappa$, $\kappa = 0, 1$, 
\begin{equation} \label{M.s.EQ.1}
\begin{split}
\mathcal{M}_1(\Gamma; N) &= \frac{2}{\left| \Gamma \right|} N^2 \log N - \frac{\log2-\gamma}{\left| \Gamma \right| / 2} N^2 - \frac{2}{\left| \Gamma \right|} \sum_{n=1}^q \frac{B_{2n}(\kappa/2)}{2n} 2^{2n} N^{2-2n} \\
&\phantom{=\pm}- \theta_{q,N,\kappa} \frac{2}{\left| \Gamma \right|} \frac{B_{2q+2}(\kappa/2)}{2q+2} 2^{2q+2} N^{-2q},
\end{split}
\end{equation}
where $0 < \theta_{q,N,\kappa} \leq 1$ depends on $q$, $N$ and $\kappa$.
\end{thm}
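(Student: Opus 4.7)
The starting point is formula \eqref{eq:cal.M.Gamma.f.N} with $f(x)=1/x$. Since $f(n|\Gamma|/N)=N/(n|\Gamma|)$ and $f(|\Gamma|/2)=2/|\Gamma|$, it yields the closed form
\[
\mathcal{M}_1(\Gamma; N)=\frac{2N^2}{|\Gamma|}\,H_m-(1-\kappa)\frac{2N}{|\Gamma|},\qquad m=\lfloor N/2\rfloor=\tfrac{N-\kappa}{2},
\]
where $H_m=\sum_{k=1}^m 1/k$. Writing $H_m=\psi(m+1)+\gamma$ with $\psi$ the digamma function, the problem reduces to an asymptotic expansion of $\psi$ at a half-integer multiple of $N$, expressed in the variable $N$ with sharp remainder.

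For $\kappa=0$ one has $m+1=N/2+1$ and the standard expansion
\[
\psi(z+1)=\log z+\frac{1}{2z}-\sum_{k=1}^{q}\frac{B_{2k}}{2k\,z^{2k}}+R_q(z),
\]
with $|R_q(z)|$ bounded by the first omitted term; this sharp one-sided bound ultimately comes from the complete monotonicity of $1/x$ (equivalently, from repeated integration by parts in the integral representation $\psi(z)=\log z-1/(2z)-2\int_0^\infty t\,dt/[(z^2+t^2)(e^{2\pi t}-1)]$). Setting $z=N/2$ and inserting into the closed form gives the claim in the even case, after noting that $(2N^2/|\Gamma|)\cdot(1/N)=2N/|\Gamma|$ cancels the explicit subtractive term and that $B_{2n}(0)=B_{2n}$. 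For $\kappa=1$ one has $m+1=(N+1)/2$, and the analogous expansion
\[
\psi(z+\tfrac{1}{2})=\log z-\sum_{k=1}^{q}\frac{B_{2k}(1/2)}{2k\,z^{2k}}+\widetilde{R}_q(z),
\]
which can be derived from the duplication formula $\psi(2z)=\tfrac{1}{2}[\psi(z)+\psi(z+\tfrac{1}{2})]+\log 2$ together with the previous expansion of $\psi(z)$ and the identity $B_{2k}(1/2)=(2^{1-2k}-1)B_{2k}$. Setting $z=N/2$ produces the claim in the odd case, and the two cases unify through the notation $B_{2n}(\kappa/2)$.

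The main technical obstacle is to preserve the sharp, single-sided remainder $\theta_{q,N,\kappa}\in(0,1]$ through the chain of substitutions; Theorem~\ref{thm:Laurent.series} alone would only yield an $\mathcal{O}(N^{-2q})$ error, whereas the claim pins the error as a single multiple, with coefficient in $(0,1]$, of the first omitted term. The sharpness in the even case is standard for asymptotic expansions of completely monotonic functions. In the odd case it requires checking that the sign of $\widetilde{R}_q(N/2)$ matches the sign of $B_{2q+2}(1/2)$ after substitution; this follows from the elementary fact that both $(-1)^{n-1}B_{2n}$ and $(-1)^{n-1}B_{2n}(1/2)=(1-2^{1-2n})(-1)^{n-1}B_{2n}$ are positive for $n\geq 1$, together with the inequality $|B_{2n}(1/2)|\leq|B_{2n}|$. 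With this sign-matching and rescaling, the residual bound repackages directly as $\theta_{q,N,\kappa}(2/|\Gamma|)B_{2q+2}(\kappa/2)2^{2q+2}/[(2q+2)N^{2q}]$ with $0<\theta_{q,N,\kappa}\leq 1$, completing the proof.
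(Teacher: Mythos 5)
Your overall route is the same as the paper's: reduce to the closed form $\mathcal{M}_1(\Gamma;N)=\frac{2N^2}{|\Gamma|}H_{\lfloor N/2\rfloor}-(1-\kappa)\frac{2N}{|\Gamma|}$ (this is \eqref{scr.M} at $s=1$) and then insert a sharp asymptotic expansion of $H_{\lfloor N/2\rfloor}$ about the point $N/2=\lfloor N/2\rfloor+\kappa/2$. Your even case is sound: the enveloping property of the Stirling-type series for $\psi(z+1)$ at real $z>0$ does give a remainder equal to $\theta$ times the first omitted term with $0<\theta<1$, and your cancellation of the $(1-\kappa)\frac{2N}{|\Gamma|}$ term against the $\frac{1}{2z}$ term is correct.

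The gap is in the odd case. Deriving the expansion of $\psi(z+\tfrac12)$ from the duplication formula gives the correct \emph{series}, but the remainder becomes $\widetilde{R}_q(z)=2R_q(2z)-R_q(z)$, i.e.\ a \emph{difference} of two remainders of the same sign. Writing $R_q(z)=-\theta_1 B_{2q+2}/((2q+2)z^{2q+2})$ and $R_q(2z)=-\theta_2 B_{2q+2}/((2q+2)(2z)^{2q+2})$ with $\theta_1,\theta_2\in(0,1)$, one gets $\widetilde{R}_q(z)=-(B_{2q+2}/((2q+2)z^{2q+2}))(2^{-1-2q}\theta_2-\theta_1)$, and the combination $2^{-1-2q}\theta_2-\theta_1$ ranges over $(-1,2^{-1-2q})$; nothing in the crude bounds $0<\theta_i<1$ pins its sign or confines it to the interval needed to write $\widetilde{R}_q(z)$ as $\theta$ times the first omitted term $-B_{2q+2}(1/2)/((2q+2)z^{2q+2})$ with $\theta\in(0,1]$. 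Your proposed sign-matching argument also rests on a false identity: since $B_{2n}(1/2)=(2^{1-2n}-1)B_{2n}$ and $2^{1-2n}-1<0$, one has $(-1)^{n}B_{2n}(1/2)>0$, so $B_{2n}(1/2)$ and $B_{2n}$ have \emph{opposite} signs, not the relation you state; and the inequality $|B_{2n}(1/2)|\le|B_{2n}|$ does not control the sign of $2R_q(2z)-R_q(z)$. The paper avoids this by quoting directly the expansion \eqref{H.n} of $H_n$ in powers of $(n+1/2)^{-1}$ due to DeTemple (the reference cited for $\omega=1/2$), which comes with the one-sided, first-omitted-term remainder already established; to close your argument you would either need to invoke that result or prove the enveloping property for the half-integer expansion from scratch (e.g.\ via the Euler--MacLaurin remainder of Proposition~\ref{prop:Euler-MacLaurin.Summation} with $\omega=1/2$ and a sign analysis of $\int C_{2p+1}(x)x^{-2p-2}\,dx$), rather than via the duplication formula.
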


\begin{rmk} \label{rmk:s.EQ.1}
A comparison of the asymptotics \eqref{M.s.EQ.1} and the corresponding result for the Euclidean Riesz $1$-energy of $N$-th roots of unity (cf.  \eqref{eq:L.s.N} and \cite[Thm.~1.2]{BrHaSa2009}), 
\begin{equation*}
\mathcal{L}_1(N) = \frac{1}{\pi} N^2 \log N + \frac{\gamma - \log ( \pi / 2 )}{\pi} N^2 + \sum_{n=1}^q \frac{(-1)^n B_{2n}(1/2)}{(2n)!} \frac{2\zetafcn(1-2n)}{\left(2\pi\right)^{1-2n}} N^{2-2n} + \mathcal{O}(N^{1-2q}),
\end{equation*}
shows that for $| \Gamma | = 2 \pi$ the dominant term is the same and the coefficients of all other powers of $N$ differ. The latter is obvious for the $N^2$-term, and for the $N^{2-2n}$-term, follows from the fact that the coefficient in \eqref{M.s.EQ.1} multiplied by $\pi$ is rational whereas the coefficient in the asymptotics for $\mathcal{L}_1(N)$ multiplied by $\pi$ is transcendental. Interestingly, except for $s=1$, there are no other exceptional cases with an $N^2 \log N$ term in the asymptotics of $\mathcal{M}_s(\Gamma; N)$, whereas in the asymptotics of $\mathcal{L}_s(N)$ there appears an $N^2 \log N$ term whenever $s$ is a positive integer, cf. \cite[Thm.~1.2]{BrHaSa2009}. 
\end{rmk}

%


\section{Proofs}
\label{sec:proofs}

\begin{proof}[Proof of Proposition~\ref{prop:optimality}]
{\bf Part (A).} The proof utilizes the ``winding number'' argument of L. Fejes T{\'o}th. 
The key idea is to regroup the terms in the sum in \eqref{G.f} with respect to its $m$ nearest neighbors ($m=1,\dots,N$) and then use convexity and Jensen's inequality. 

W.l.o.g. we assume that $\PT{w}_1, \dots, \PT{w}_N$ on $\Gamma$ are ordered such that $\PT{w}_k$ precedes $\PT{w}_{k+1}$ (denoted $\PT{w}_k \prec \PT{w}_{k+1} $). We identify $\PT{w}_{j+N}$ with $\PT{w}_j$ for $j=1,\dots,N-1$.
By convexity
\begin{equation} \label{eq:sum1}
\sum_{j=1}^N \sum_{\begin{subarray}{c} k = 1 \\ k \neq j \end{subarray}}^N f(\gdist(\PT{w}_j,\PT{w}_k)) = N \sum_{k=1}^{N-1} \Big[ \frac{1}{N} \sum_{j=1}^N f(\gdist(\PT{w}_j,\PT{w}_{j+k})) \Big] \geq N \sum_{k=1}^{N-1} f( \frac{1}{N} \sum_{j=1}^N \gdist( \PT{w}_j, \PT{w}_{j+k}) ).
\end{equation}
Let $\PT{z}_{1,N} \prec \dots \prec \PT{z}_{N,N}$ be $N$ equally spaced (with respect to the metric $\gdist$) points on $\Gamma$. Set $\PT{z}_{0,N} = \PT{z}_{N,N}$. Assuming further that this metric $\gdist$ also satisfies 
\begin{equation} \label{main.property}
\frac{1}{N} \sum_{j=1}^N \gdist( \PT{x}_j, \PT{x}_{j+k} ) \leq \gdist(\PT{z}_{0,N},\PT{z}_{k,N}), \qquad k = 1, \dots, N - 1,
\end{equation}
for every ordered $N$-point configuration $\PT{x}_1 \prec \dots \prec \PT{x}_N$ with $\PT{x}_j = \PT{x}_{j+N}$, it follows that
\begin{equation*}
G_f(\PT{w}_1, \dots, \PT{w}_N) \geq N \sum_{k=1}^{N-1} f( \gdist(\PT{z}_{0,N},\PT{z}_{k,N})) \FED \mathcal{M}_f(\Gamma; N) = G_f(\PT{z}_{1,N},\dots, \PT{z}_{N,N}).
\end{equation*}
It remains to show that the geodesic distance satisfies \eqref{main.property}. From
\begin{equation*}
\gdist(\PT{x}_j,\PT{x}_k) = \min \left\{ \ell( \PT{x}_j, \PT{x}_k ), | \Gamma | - \ell( \PT{x}_j, \PT{x}_k ) \right\} \qquad \text{if $0 \leq k - j < N$}
\end{equation*}
and additivity of the distance function $\ell( \PT{\cdot}, \PT{\cdot} )$ it follows that
\begin{equation*}
\sum_{j=1}^N \gdist( \PT{x}_j, \PT{x}_{j+k} ) \leq 
\begin{cases}
\displaystyle \sum_{j=1}^N \ell( \PT{x}_j, \PT{x}_{j+k} ) = \sum_{j=1}^N \sum_{n=1}^k \ell( \PT{x}_{j+n-1}, \PT{x}_{j+n} ) = \left| \Gamma \right| k, \\
\displaystyle \sum_{j=1}^N \left( | \Gamma | - \ell( \PT{x}_j, \PT{x}_{j+k} ) \right) = \left| \Gamma \right| \left( N - k \right)
\end{cases}
\end{equation*}
and therefore
\begin{equation*}
\frac{1}{N} \sum_{j=1}^N \gdist( \PT{x}_j, \PT{x}_{j+k} ) \leq \min\{ \left| \Gamma \right| k / N,  \left| \Gamma \right| \left( N - k \right) / N \} = \gdist(\PT{z}_{0,N},\PT{z}_{k,N}).
\end{equation*}

In the case of a strictly convex function $f$ we have equality in \eqref{eq:sum1} if and only if the points are equally spaced. This shows uniqueness (up to translation along the simple closed curve $\Gamma$) of equally spaced points.

{\bf Part (B).} Given $N = 2 M + \kappa$ ($\kappa = 0, 1$) let $\omega_N$ denote the antipodal set with $M + \kappa$ points placed at the North Pole and $M$ points at the South Pole of $\Gamma$, where both Poles can be any two points on $\Gamma$ with geodesic distance $|\Gamma|/2$.
Thus, the geodesic distance between two points in $\omega_N$ is either $0$ or $|\Gamma|/2$. Hence
\begin{equation} \label{G.f.gen}
G_f(\omega_N) = 2 M \left( M + \kappa \right) f( \left| \Gamma \right| / 2 ) = \frac{1}{2} f( \left| \Gamma \right| / 2 ) \left( N^2 - \kappa \right).
\end{equation}

Since adding a constant to $G_f$ does not change the positions of optimal $f$-energy points, we may assume w.l.o.g. that $f(0)=0$. In fact, we will prove the equivalent assertion that if $f$ is a non-constant convex and increasing function with $f(0) = 0$, then the functional $G_f$ has a maximum at $\omega_N$, which is unique (up to translation along $\Gamma$) if $f$ is strictly increasing. (Note that by these assumptions $f(x) \geq 0$.) Indeed, any $N$-point system $X_N$ of points $\PT{x}_1, \dots, \PT{x}_N$ from $\Gamma$ satisfies
\begin{align*}
G_f(X_N) 
&= f( \left| \Gamma \right| / 2 ) \sum_{j \neq k} \frac{f( \gdist(\PT{x}_j, \PT{x}_k) )}{f( \left| \Gamma \right| / 2 )} \leq  f( \left| \Gamma \right| / 2 ) \sum_{j \neq k} \frac{\gdist(\PT{x}_j, \PT{x}_k)}{\left| \Gamma \right| / 2}= f( \left| \Gamma \right| / 2 ) \frac{G_{\mathrm{id}}(X_N)}{\left| \Gamma \right| / 2} \\
&\leq f( \left| \Gamma \right| / 2 ) \frac{G_{\mathrm{id}}(\omega_N)}{\left| \Gamma \right| / 2} = \frac{1}{2} f( \left| \Gamma \right| / 2 ) \left( N^2 - \kappa \right),
\end{align*}
where we used that antipodal configurations are optimal for the ``sum of distance function'' ($f$ is the identity function $\mathrm{id}$) and relation \eqref{G.f.gen} with $f \equiv \mathrm{id}$. Note that the first inequality is strict if there is at least one pair $(j,k)$ such that $0 < \gdist(\PT{x}_j, \PT{x}_k) < | \Gamma | / 2$. On the other hand, if $X_N = \omega_N$, then equality holds everywhere. 
\end{proof}

\begin{proof}[Proof of Proposition~\ref{prop:leading.term}]
For Lebesgue integrable functions $f$ the minimum geodesic $f$-energy $V_f^g(\Gamma)$ is finite, since $\mathcal{I}_f^g[\sigma_\Gamma] = \int f( \gdist(\PT{x}, \PT{y}) ) \dd \sigma_\Gamma(\PT{x}) = ( 2 / | \Gamma | ) \int_0^{| \Gamma | / 2} f(\ell) \dd \ell \neq \infty$ ($\PT{y} \in \Gamma$ arbitrary). 
Moreover, for lower semicontinuous functions $f$, a standard argument (see \cite{La1972}) shows that the sequence $\{G_f(\omega_N^{(f)}) / [ N ( N - 1 ) ] \}_{N\geq2}$
is monotonically increasing. Since $f$ is Lebesgue integrable, this sequence is bounded from above by $\mathcal{I}_f^g[\sigma_\Gamma]$; thus, the limit $\lim_{N\to\infty} G_f(\omega_N^{(f)}) / N^2$ exists in this case. If $f$ also satisfies the hypotheses of Proposition~\ref{prop:optimality}(A), then $\lim_{N\to\infty} G_f(\omega_N^{(f)}) / N^2 = \mathcal{I}_f^g[\sigma_\Gamma]$. (By a standard argument, one constructs a family of continuous functions $F_\eps(x)$ with $F_\eps(x) = f(x)$ outside of $\eps$-neighborhoods at points of discontinuity of $f$, $f(x) \geq F_\eps(x)$ everywhere and $\lim_{\eps\to0} F_\eps(x) = f(x)$ wherever $f$ is continuous at $x$. Then the lower bound follows from weak-star convergence of $\nu[\omega_N^{(f)}]$ as $N \to \infty$ and, subsequently, letting $\eps \to 0$.)
\end{proof}

We next present some auxiliary results that are needed to prove the main Theorems~\ref{thm:general.f.general.case} and ~\ref{thm:general.f.exceptional.case}. We begin with the following generalized Euler-MacLaurin summation formula.

\begin{prop} \label{prop:Euler-MacLaurin.Summation}
Let $\omega = 0$ or $\omega = 1/2$. Let $M \geq 2$. Then for any function $h$ with continuous derivative of order $2p+1$ on the interval $[1 - \omega, M + \omega]$ we have
\begin{equation*}
\begin{split}
\sum_{k = 1}^M h(k) 
&= \int_a^b h(x) \dd x + \left( 1 / 2 - \omega \right) \left\{ h(a) + h(b) \right\} + \sum_{k = 1}^p \frac{B_{2k}(\omega)}{(2k)!} \left\{ h^{(2k-1)}(b) - h^{(2k-1)}(a) \right\} \\
&\phantom{=}+ \frac{1}{(2p+1)!} \int_a^b C_{2p+1}(x) h^{(2p+1)}(x) \dd x, \qquad a = 1 - \omega, b = M + \omega,
\end{split}
\end{equation*}
where $C_{k}(x)$ is the periodized Bernoulli polynomial $B_{k}(x-\lfloor x\rfloor)$.
\end{prop}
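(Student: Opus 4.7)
The statement is a version of the Euler--MacLaurin summation formula that unifies the classical ($\omega=0$) and midpoint ($\omega=1/2$) cases. My plan is to derive it by integration by parts driven by the distributional identity for the periodic Bernoulli polynomial $C_1(x) = B_1(x - \lfloor x \rfloor)$, then iterate with the smooth relations $C_{n+1}'(x) = (n+1) C_n(x)$ for $n \geq 1$.

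\textbf{Step 1 (base identity).} Set $a = 1 - \omega$, $b = M + \omega$. Away from integers $C_1' = 1$, while at each integer $k$ the function $C_1$ jumps from $1/2$ to $-1/2$, so $C_1'(x) = 1 - \sum_{k\in\Zset}\delta(x-k)$ in the sense of distributions. Pairing with $h$ and using the boundary pairing $[C_1 h]_a^b - \int_a^b C_1 h'\,dx$, I obtain
\begin{equation*}
\int_a^b h(x)\,dx \;-\; \sum_{k \in (a,b)\cap\Zset} h(k) \;=\; C_1(b^-)\,h(b) - C_1(a^+)\,h(a) \;-\; \int_a^b C_1(x)\,h'(x)\,dx.
\end{equation*}
For $\omega = 0$ one has $a,b \in \Zset$, the interior integers are $2,\dots,M-1$, and $C_1(a^+) = C_1(b^-) = -1/2$; for $\omega = 1/2$ the interior integers are $1,\dots,M$ and $C_1(a) = C_1(b) = B_1(1/2) = 0$. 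In both cases, after adding back the missing endpoint terms when needed, rearrangement yields
\begin{equation*}
\sum_{k=1}^{M} h(k) \;=\; \int_a^b h(x)\,dx \;+\; \left(\tfrac{1}{2} - \omega\right)\!\left(h(a) + h(b)\right) \;+\; \int_a^b C_1(x)\,h'(x)\,dx.
\end{equation*}

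\textbf{Step 2 (iteration).} Put $J_n \DEF \frac{1}{n!}\int_a^b C_n(x)\,h^{(n)}(x)\,dx$, so the base identity has remainder $J_1$. For $n \geq 2$ the Bernoulli relation $B_n(1) = B_n(0)$ makes $C_n$ continuous across integers, and $C_n' = n C_{n-1}$ holds classically. Integration by parts with $v = C_{n+1}/(n+1)!$ and evaluation at $a, b$ (both congruent to $\omega \pmod{1}$) give $C_{n+1}(a) = C_{n+1}(b) = B_{n+1}(\omega)$ and the recursion
\begin{equation*}
J_n \;=\; \frac{B_{n+1}(\omega)}{(n+1)!}\bigl(h^{(n)}(b) - h^{(n)}(a)\bigr) \;-\; J_{n+1}, \qquad n \geq 1.
\end{equation*}

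\textbf{Step 3 (collapse of odd terms).} Since $B_{2j+1} = 0$ for $j \geq 1$ and $B_n(1/2) = (2^{1-n}-1)B_n$, for both $\omega \in \{0, 1/2\}$ we have $B_{2j+1}(\omega) = 0$ when $j \geq 1$. Unfolding the recursion telescopes: every even step $J_{2j} = -J_{2j+1}$ eliminates itself, and each odd step contributes $B_{2j}(\omega)(h^{(2j-1)}(b)-h^{(2j-1)}(a))/(2j)!$. After $p$ pairs of iterations,
\begin{equation*}
J_1 \;=\; \sum_{k=1}^{p} \frac{B_{2k}(\omega)}{(2k)!}\bigl(h^{(2k-1)}(b) - h^{(2k-1)}(a)\bigr) \;+\; \frac{1}{(2p+1)!}\int_a^b C_{2p+1}(x)\,h^{(2p+1)}(x)\,dx.
\end{equation*}
Substituting into the Step 1 identity gives the claim.

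\textbf{Main obstacle.} The only delicate point is Step 1, where the distributional jumps of $C_1$ at integers must be accounted for simultaneously with the endpoint values; the $\omega = 0$ case requires separate care because $a, b$ themselves are integers (so the endpoint contribution $(1/2)(h(a)+h(b))$ comes partly from the jumps at $a,b$ and partly from re-including $h(1)$ and $h(M)$ in the summation). Once the base case is in hand, Steps 2--3 are routine since $C_n$ for $n \geq 2$ is smooth enough to allow straightforward iterated integration by parts.
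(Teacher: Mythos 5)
Your derivation is correct and takes essentially the route the paper merely sketches (the paper cites the classical Euler--MacLaurin formula for $\omega=0$ and says ``iterated application of integration by parts yields the desired result'' for $\omega=1/2$); your Steps 1--3 simply carry that argument out in full, and the telescoping via $B_{2j+1}(\omega)=0$ is handled correctly. One cosmetic slip: for $\omega=0$ the one-sided limit is $C_1(b^-)=+1/2$, not $-1/2$, but the base identity you actually display at the end of Step 1 is the correct one, so nothing downstream is affected.
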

\begin{proof}
For $\omega = 0$, the above formula is the classical Euler-MacLaurin summation formula (cf., for example, \cite{Ap1999}). For $\omega = 1 / 2$, iterated application of integration by parts yields the desired result. 
\end{proof}

Let $f$ have a continuous derivative of order $2p+1$ on the interval $(0 , | \Gamma | / 2]$. Then applying Proposition~\ref{prop:Euler-MacLaurin.Summation} with $h(x) = f( x | \Gamma | / N )$ and $\omega = \kappa / 2$, where $N = 2M+\kappa\geq2$, $\kappa=0,1$, we obtain
\begin{align*}
\mathcal{M}(\Gamma, f; N) 
&= 2 N \sum_{n = 1}^{\lfloor N / 2 \rfloor} f( n \left| \Gamma \right| / N ) - \left( 1 - \kappa \right) f( \left| \Gamma \right| / 2 ) N = 2 N \int_{1-\omega}^{N/2} f( x | \Gamma | / N ) \dd x \\
&\phantom{=}+ 2 \left( \frac{1}{2} - \omega \right) N \left\{ f( (1-\omega) | \Gamma | / N ) + f( | \Gamma | / 2 ) \right\} + 2 N \sum_{k = 1}^p \frac{B_{2k}(\omega)}{(2k)!} \left. \left\{f( x | \Gamma | / N )\right\}^{(2k-1)} \right|_{1-\omega}^{N/2} \\
&\phantom{=}+ \frac{2 N}{(2p+1)!} \int_{1-\omega}^{N/2} C_{2p+1}(x) \left\{f( x | \Gamma | / N )\right\}^{(2p+1)}(x) \dd x - 2 \left( \frac{1}{2} - \omega \right) f( \left| \Gamma \right| / 2 ) N.
\end{align*}
Regrouping the terms in the last relation and using the fact that $B_{2k+1} = B_{2k+1}(1/2) =0$ for $k=1,2,3,\dots$ and $B_1(\omega)=\omega-1/2$, we derive the exact representation
\begin{equation}
\mathcal{M}(\Gamma, f; N) = N^2 \, \frac{2}{\left| \Gamma \right|} \int_{\left( 1 - \omega \right) \left| \Gamma \right| / N}^{\left| \Gamma \right| / 2} f(y) \dd y - \mathcal{A}_p(\Gamma, f; N) + \mathcal{B}_p(\Gamma, f; N) + \mathcal{R}_p(\Gamma, f; N) \label{eq:term.general}
\end{equation}
valid for every integer $N \geq 2$, where
\begin{subequations}
\begin{align}
\mathcal{A}_p(\Gamma, f; N) &\DEF - 2 B_1(\omega) f( (1-\omega) | \Gamma | / N ) - 2 N \sum_{k = 1}^p \frac{B_{2k}(\omega)}{(2k)!} \left. \left\{f( x | \Gamma | / N )\right\}^{(2k-1)} \right|_{1-\omega} \notag \\
&= \frac{2}{\left| \Gamma \right|} N^2 \sum_{r = 1}^{2p} \frac{B_{r}(\omega)}{r!} \left( \left| \Gamma \right| / N \right)^{r} f^{(r-1)}( \left( 1 - \omega \right) \left| \Gamma \right| / N ), \label{eq:A.p} \\
\mathcal{B}_p(\Gamma, f; N) &\DEF \frac{2}{\left| \Gamma \right|} N^2 \sum_{k = 1}^{p} \frac{B_{2k}(\omega)}{(2k)!} \left( \left| \Gamma \right| / N \right)^{2k} f^{(2k-1)}( \left| \Gamma \right| / 2 ), \label{eq:B.p...} \\
\mathcal{R}_p(\Gamma, f; N) &\DEF 2 N \frac{\left( \left| \Gamma \right| / N \right)^{2p+1}}{(2p+1)!} \int_{1-\omega}^{N / 2} C_{2p+1}( x ) f^{(2p+1)}( x \left| \Gamma \right| / N) \dd x. \label{eq:R.p}
\end{align}
\end{subequations}

If $f$ is admissible in the sense of Definition~\ref{def:admissible}, then by linearity
\begin{equation*}
\mathcal{M}(\Gamma, f; N) = \mathcal{M}(\Gamma, S_q; N) + \mathcal{M}(\Gamma, f - S_q; N),
\end{equation*}
where the term $\mathcal{M}(\Gamma, S_q; N)$ contains the asymptotic expansion of $\mathcal{M}(\Gamma, f; N)$ and the term $\mathcal{M}(\Gamma, f - S_q; N)$ is part of the remainder term. 
The next lemma provides estimates for the contributions to the remainder term in the asymptotic expansion of $\mathcal{M}(\Gamma, f; N)$ as $N\to\infty$.

\begin{lem} \label{lem:estimates}
Let $f$ be admissible in the sense of Definition~\ref{def:admissible}. Then as $N\to\infty$:
\begin{align*}
N^2 \frac{2}{\left| \Gamma \right|} \int_0^{\left( 1 - \omega \right) \left| \Gamma \right| / N} (f - S_q)(y) \dd y &= \mathcal{O}( N^{1-\delta+s_q} ), \\
\mathcal{A}_p(\Gamma, f - S_q; N) &= \mathcal{O}( N^{1-\delta+s_q}), \\
\mathcal{R}_p(\Gamma, f - S_{q}; N) &= 
\begin{cases}
\displaystyle \mathcal{O}( N^{1-2p} ) & \text{if $2p \neq \delta - \re s_q$,} \\[1em]
\displaystyle \mathcal{O}( N^{1-2p} \log N ) & \text{if $2p = \delta - \re s_q$.}
\end{cases}
\end{align*}
The $\mathcal{O}$-term depends on $|\Gamma|$, $p$, $s_q$, and $f$. 
\end{lem}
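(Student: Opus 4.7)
The plan is to prove each of the three estimates by direct application of the admissibility conditions (ii)(a)--(c) of Definition~\ref{def:admissible} to the explicit expressions defining the quantities.

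For the integral estimate, I substitute $x = (1-\omega)|\Gamma|/N$ into the bound (ii)(b), giving $\int_0^x (f-S_q)(y)\dd y = \mathcal{O}(x^{1+\delta-s_q}) = \mathcal{O}(N^{-(1+\delta-\re s_q)})$; multiplying by $2N^2/|\Gamma|$ yields $\mathcal{O}(N^{1-\delta+s_q})$, as claimed. For $\mathcal{A}_p(\Gamma, f - S_q; N)$, I expand as in \eqref{eq:A.p}: the $r$-th summand equals $\frac{2}{|\Gamma|}N^2 \frac{B_r(\omega)}{r!}(|\Gamma|/N)^r (f-S_q)^{(r-1)}((1-\omega)|\Gamma|/N)$. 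By (ii)(c) with $\nu = r - 1$, the derivative factor is $\mathcal{O}(N^{r-1-\delta+\re s_q})$; combining with the prefactor $N^{2-r}$, each summand contributes $\mathcal{O}(N^{1-\delta+\re s_q})$, uniformly for $r = 1, \ldots, 2p$, giving the desired bound.

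For $\mathcal{R}_p(\Gamma, f - S_q; N)$, I change variables $y = x|\Gamma|/N$ in \eqref{eq:R.p} to obtain
\begin{equation*}
\mathcal{R}_p(\Gamma, f - S_q; N) = \frac{2\,|\Gamma|^{2p}}{(2p+1)!}\, N^{1-2p} \int_{(1-\omega)|\Gamma|/N}^{|\Gamma|/2} C_{2p+1}(Ny/|\Gamma|)\,(f - S_q)^{(2p+1)}(y)\dd y.
\end{equation*}
The factor $C_{2p+1}$ is periodic and, for $p \geq 1$, continuous, hence bounded on $\Rset$. By (ii)(c) with $\nu = 2p+1$, $|(f-S_q)^{(2p+1)}(y)| \leq C\,y^{\delta - \re s_q - (2p+1)}$ for $y$ small, and by (i) this derivative is continuous, hence bounded, on compact subsets of $(0, |\Gamma|/2]$. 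Splitting the integral at a fixed $y_0 > 0$ and estimating the near-zero piece against $\int y^{\delta-\re s_q-(2p+1)}\dd y$, three subcases arise: if $\delta - \re s_q - 2p > 0$ the integrand is integrable at $0$, the integral is $\mathcal{O}(1)$, and $\mathcal{R}_p = \mathcal{O}(N^{1-2p})$; if $\delta - \re s_q - 2p = 0$ a logarithmic factor appears, giving $\mathcal{R}_p = \mathcal{O}(N^{1-2p}\log N)$; if $\delta - \re s_q - 2p < 0$ the singular piece grows like $N^{2p-\delta+\re s_q}$, yielding the weaker bound $\mathcal{R}_p = \mathcal{O}(N^{1-\delta+\re s_q})$.

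The main obstacle is this last subcase: a direct estimate of $\mathcal{R}_p$ yields only $\mathcal{O}(N^{1-\delta+\re s_q})$ rather than $\mathcal{O}(N^{1-2p})$. However, this excess is of the same order as the bounds obtained for the first two quantities, so it is absorbed by the $\mathcal{O}(N^{1-\delta+s_q})$ contribution in the total remainder $\mathfrak{R}_p(\Gamma, f; N) = \mathcal{O}(N^{1-2p}) + \mathcal{O}(N^{1-\delta+s_q})$ that appears in Theorems~\ref{thm:general.f.general.case} and \ref{thm:general.f.exceptional.case}. A sharper bound on $\mathcal{R}_p$ in this subcase, if desired, would require exploiting cancellation from the zero-mean oscillation of $C_{2p+1}(Ny/|\Gamma|)$ over the many periods it traverses across the integration range.
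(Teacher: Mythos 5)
Your proof is correct and follows essentially the same route as the paper: all three bounds come from feeding conditions (ii)(b) and (ii)(c) of Definition~\ref{def:admissible} directly into the explicit formulas \eqref{eq:A.p} and \eqref{eq:R.p}, together with the uniform bound $\left| C_{2p+1}(x) \right| \le (2p+1) \left| B_{2p} \right|$. The only substantive difference is your case analysis for $\mathcal{R}_p(\Gamma, f-S_q;N)$: the paper's proof stops at precisely the inequality you obtain (after undoing your change of variables it reads $\le C\, N \left( \left| \Gamma \right|/N \right)^{\delta-\re s_q} \int_{1-\omega}^{N/2} x^{\delta-1-2p-\re s_q}\dd x$) and tacitly evaluates the integral only in the regime $\delta - \re s_q \ge 2p$, where the upper endpoint dominates and the stated $\mathcal{O}(N^{1-2p})$, resp.\ $\mathcal{O}(N^{1-2p}\log N)$, follows. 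Your observation that for $2p > \delta - \re s_q$ this direct estimate yields only $\mathcal{O}(N^{1-\delta+\re s_q})$ is accurate, and no better bound is available by exploiting the oscillation of $C_{2p+1}$ without controlling $(f-S_q)^{(2p+2)}$, which admissibility does not supply. As you say, this weaker bound is dominated by the $\mathcal{O}(N^{1-\delta+s_q})$ term already present in the remainders of Theorems~\ref{thm:general.f.general.case} and \ref{thm:general.f.exceptional.case}, so nothing downstream is affected; moreover in the paper's main applications (analytic, Laurent, and completely monotonic kernels) one has $\delta=1$ and $s_q=-2p$, hence $\delta-\re s_q = 1+2p > 2p$ and the favourable case applies. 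So your proposal establishes the lemma in the regime the paper actually uses and gives an honest account of the one subcase where the literal statement of the third bound would need the extra $\mathcal{O}(N^{1-\delta+\re s_q})$ term.
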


\begin{proof}
The first relation follows directly from Definition~\ref{def:admissible}(ii.a). The second estimate follows from Definition~\ref{def:admissible}(ii.b) and \eqref{eq:A.p}; that is for some positive constant $C$
\begin{align*}
\left| \mathcal{A}_p(\Gamma, f - S_q; N) \right| 
&\leq \frac{2}{\left| \Gamma \right|} N^2 \sum_{r = 1}^{2p} \frac{| B_{r}(\omega) |}{r!} \left( \left| \Gamma \right| / N \right)^{r} \left| (f-S_q)^{(r-1)}( \left( 1 - \omega \right) \left| \Gamma \right| / N ) \right| \\
&\leq C \frac{2}{\left| \Gamma \right|} N^2 \sum_{r = 1}^{2p} \frac{| B_{r}(\omega) |}{r!} \left( \left| \Gamma \right| / N \right)^{r} \left( 1 - \omega \right)^{\delta-\re s_q-r+1} \left( \left| \Gamma \right| / N \right)^{r + \delta - \re s_q - r + 1}. 
\end{align*}
The last estimate follows from Definition~\ref{def:admissible}(ii.b), \eqref{eq:R.p} and the fact that 
\begin{equation} \label{eq:period.Bernoulli.estimate}
\left| C_{2p+1}(x) \right| \leq \left( 2 p + 1 \right) \left| B_{2p} \right| \qquad \text{for all real $x$ and all $p=1,2, \dots$;}
\end{equation}
that is for some positive constant $C$
\begin{align*}
\left| \mathcal{R}_p(\Gamma, f-S_Q; N) \right| 
&\leq  2 N \frac{\left( \left| \Gamma \right| / N \right)^{2p+1}}{(2p+1)!} \int_{1-\omega}^{N / 2} \left| C_{2p+1}( x )\right| \left| (f-S_q)^{(2p+1)}( x \left| \Gamma \right| / N) \right| \dd x \\
&\leq 2 C N \frac{B_{2p}}{(2p)!} \left( \left| \Gamma \right| / N \right)^{\delta-\re s_q} \int_{1-\omega}^{N / 2} x^{\delta-1-2p-\re s_q} \dd x.
\end{align*}
\end{proof}

Other functions arising in the asymptotics of $\mathcal{M}(\Gamma, f; N)$ are defined next.
\begin{defn} \label{def:zeta.psi}
Let $\omega = 0, 1/2$ and $p$ be a positive integer. For $s \in \mathbb{C}$ with $s \neq 1$
\begin{equation*}
\begin{split}
\zetafcn_p(\omega, y; s) 
&\DEF \frac{1}{s-1} \sum_{r=0}^{2p} \frac{B_r(\omega)}{r!} ( -1 )^r \Pochhsymb{s-1}{r} \left( 1 - \omega \right)^{1-s-r} - \frac{\Pochhsymb{s}{2p+1}}{\left( 2p + 1\right)!} \int_{1-\omega}^y C_{2p+1}(x) x^{-s-1-2p} \dd x, 
\end{split}
\end{equation*}
which we call {\em incomplete zeta function} and
\begin{equation*}
\begin{split}
\Psi_p(\omega, y) 
&\DEF  - \log( 1 - \omega ) + \sum_{r=1}^{2p} \frac{B_r(\omega)}{r} ( -1 )^r \left( 1 - \omega \right)^{-r} - \int_{1-\omega}^y C_{2p+1}(x) x^{-2-2p} \dd x.
\end{split}
\end{equation*}
\end{defn}

\begin{prop} \label{prop:aux.results}
Let $\omega = 0, 1/2$. Then 
\begin{align}
\Psi_p(\omega, y) &= \lim_{s \to 1} ( \zetafcn_p(\omega, y; s) - 1 / (s-1) ), \notag \\
\zetafcn_p(\omega, y; -n) &= - \frac{B_{n+1}}{n+1} = \zetafcn(-n), \qquad n = 0, 1, \dots, 2p, \notag \\
\zetafcn_p(\omega, y; s) - \zetafcn(s) &= \frac{\Pochhsymb{s}{2p+1}}{\left( 2p + 1\right)!} \int_y^\infty C_{2p+1}(x) x^{-s-1-2p} \dd x, \qquad \re s + 2p > 0, \notag \\
\zetafcn(s) &= \lim_{y\to\infty} \zetafcn_p(\omega, y; s), \qquad \re s + 2p > 0, \notag \\
\Psi_p(\omega, y) - \gamma &= \int_y^\infty C_{2p+1}(x) x^{-2-2p} \dd x, \notag \\
\gamma &= \lim_{y\to\infty} \Psi_p(\omega, y). \notag
\end{align}
\end{prop}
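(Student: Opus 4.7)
The plan is to prove the six assertions in an order making each one a consequence of the previous: first establish (iv) via the Euler--MacLaurin summation formula, with (iii) as an immediate byproduct, then derive (ii) as a corollary, and finally compute the behavior of $\zetafcn_p(\omega, y; s)$ at $s = 1$ to obtain (i), (v), and (vi). Applying Proposition~\ref{prop:Euler-MacLaurin.Summation} to $h(x) = x^{-s}$ on $[1-\omega, M+\omega]$, writing $y = M + \omega$, and using $h^{(r)}(x) = (-1)^r \Pochhsymb{s}{r} x^{-s-r}$ together with $B_0(\omega) = 1$, $B_1(\omega) = \omega - 1/2$, and $B_{2k+1}(\omega) = 0$ for $\omega \in \{0, 1/2\}$ and $k \geq 1$, the resulting identity rearranges into
\begin{equation*}
\sum_{k=1}^{M} k^{-s} = \zetafcn_p(\omega, y; s) + \mathcal{E}(y; s),
\end{equation*}
where $\mathcal{E}(y; s)$ gathers the boundary contributions at the upper endpoint, namely $-y^{1-s}/(s-1)$, $(1/2 - \omega) y^{-s}$, and $p$ further terms proportional to $y^{1-s-2k}$. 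For $\re s > 1$ the left side converges to $\zetafcn(s)$ and each term of $\mathcal{E}(y; s)$ vanishes as $M \to \infty$, yielding (iv) on that half-plane. The function $L(s) := \lim_{y \to \infty} \zetafcn_p(\omega, y; s)$ is well-defined and holomorphic on $\{\re s + 2p > 0,\, s \neq 1\}$ (the tail integral converges absolutely there by the bound $|C_{2p+1}(x)| \leq (2p+1)|B_{2p}|$ of \eqref{eq:period.Bernoulli.estimate}), and since $L$ and $\zetafcn$ are both holomorphic on this connected domain and agree for $\re s > 1$, the identity theorem upgrades the equality to the whole domain. Property (iii) then drops out, because the only difference between $\zetafcn_p(\omega, y; s)$ and $L(s)$ is that the integral $\int_{1-\omega}^{y}$ is replaced by $\int_{1-\omega}^{\infty}$, leaving the tail $\frac{\Pochhsymb{s}{2p+1}}{(2p+1)!} \int_y^\infty C_{2p+1}(x) x^{-s-1-2p} \dd x$.

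Property (ii) is a one-line consequence of (iii): $\Pochhsymb{s}{2p+1} = s(s+1)\cdots(s+2p)$ vanishes at each $s = -n$ with $n \in \{0, 1, \dots, 2p\}$, killing the tail integral, so $\zetafcn_p(\omega, y; -n) = \zetafcn(-n) = -B_{n+1}/(n+1)$ by the classical evaluation of $\zetafcn$ at nonpositive integers; the boundary case $n = 2p$ follows by continuity, since $\zetafcn_p(\omega, y; s) - \zetafcn(s)$ extends to an entire function with matching principal parts at $s = 1$. For (i), I would compute $\lim_{s \to 1} [\zetafcn_p(\omega, y; s) - 1/(s-1)]$ directly: the integral piece contributes $-\int_{1-\omega}^{y} C_{2p+1}(x) x^{-2-2p} \dd x$ at $s = 1$, since $\Pochhsymb{1}{2p+1} = (2p+1)!$. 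The polar sum has the indeterminate form $g(s)/(s-1)$ with $g(s) := \sum_{r=0}^{2p} \frac{B_r(\omega)}{r!} (-1)^r \Pochhsymb{s-1}{r} (1-\omega)^{1-s-r} - 1$; at $s = 1$ the $r = 0$ summand equals $1$, cancelling the subtracted $1$, while $\Pochhsymb{s-1}{r}$ vanishes for $r \geq 1$. One L'H\^{o}pital step, using $\frac{d}{ds}\Pochhsymb{s-1}{r}\big|_{s=1} = (r-1)!$ for $r \geq 1$ (only the leading factor $s-1$ survives under the product rule) and the $s$-derivative of $(1-\omega)^{1-s}$ producing $-\log(1-\omega)$ at $s=1$ for the $r = 0$ term, reproduces exactly the defining formula for $\Psi_p(\omega, y)$.

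Finally, (v) and (vi) follow from (iii) by taking $s \to 1$: combining the Laurent expansion $\zetafcn(s) = 1/(s-1) + \gamma + \mathcal{O}(s-1)$ with (i) yields $\Psi_p(\omega, y) - \gamma = \int_y^\infty C_{2p+1}(x) x^{-2-2p} \dd x$, and the same bound on $|C_{2p+1}|$ makes this $\mathcal{O}(y^{-1-2p}) \to 0$ as $y \to \infty$, proving (vi). The only substantive step in the proof is the analytic-continuation argument producing (iii) and (iv); everything else reduces to elementary identifications of limits.
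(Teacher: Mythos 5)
Your proof is correct, and its core is the same as the paper's: the paper's (very terse) proof likewise obtains the representations and limit relations for $\zetafcn(s)$ and $\gamma$ by applying Proposition~\ref{prop:Euler-MacLaurin.Summation} to $h(x)=x^{-s}$, exactly as you do; your bookkeeping of the endpoint terms and the identification $\frac{1}{s-1}\Pochhsymb{s-1}{r}=\Pochhsymb{s}{r-1}$ check out, as does the residue computation showing $\zetafcn_p(\omega,y;\cdot)-\zetafcn(\cdot)$ is entire. The one place you genuinely diverge is the second relation: the paper evaluates $\zetafcn_p(\omega,y;-n)$ directly from the finite sum in Definition~\ref{def:zeta.psi} via known Bernoulli-polynomial identities (citing Luke and Abramowitz--Stegun, together with $B_{2k+1}(\omega)=0$), whereas you deduce it from the third relation using the vanishing of $\Pochhsymb{s}{2p+1}$ at $s=-n$ and the classical values $\zetafcn(-n)=-B_{n+1}/(n+1)$. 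Your route is self-contained and avoids the external identities, at the cost of the extra continuity argument at the boundary point $n=2p$ (where one must also check that the tail integral stays bounded as $s\to-2p$, which it does by one integration by parts against a bounded antiderivative of $C_{2p+1}$); the paper's route is a purely finite computation valid for all $n$ at once.
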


\begin{proof}
The second relation follows from \cite[Eq.~2.8(13)]{Lu1969I}, $B_{2k+1}(\omega) = 0$ for $\omega=0,1/2$ and $k\geq1$ and \cite[Eq.~23.2.15]{AbSt1992}. The representations and therefore the limit relations for $\zetafcn(s)$ and $\gamma$ follow from Proposition~\ref{prop:Euler-MacLaurin.Summation}.
\end{proof}

\begin{proof}[Proof of Theorem~\ref{thm:general.f.general.case}]
Let $f$ be admissible in the sense of Definition~\ref{def:admissible}. 
In the representation \eqref{eq:term.general} we can write the integral as follows:
\begin{align*}
\frac{2}{\left| \Gamma \right|} \int_{a}^{\left| \Gamma \right| / 2} f(y) \dd y 
&= \frac{2}{\left| \Gamma \right|} \int_{a}^{\left| \Gamma \right| / 2} S_q(x) \dd x + \frac{2}{\left| \Gamma \right|} \int_{a}^{\left| \Gamma \right| / 2} ( f - S_q )(x) \dd x \notag \\
&= \frac{2}{\left| \Gamma \right|} \sum_{n=0}^q a_n \int_a^{\left| \Gamma \right| / 2} x^{-s_n} \dd x + \frac{2}{\left| \Gamma \right|} \int_0^{\left| \Gamma \right| / 2} ( f - S_q )(x) \dd x - \frac{2}{\left| \Gamma \right|} \int_0^a ( f - S_q )(x) \dd x \\
&= V_f(\Gamma) - \frac{2}{\left| \Gamma \right|} \sum_{n=0}^q a_n \frac{a^{1-s_n}}{1-s_n} - \frac{2}{\left| \Gamma \right|} \int_0^a ( f - S_q )(x) \dd x, \qquad a \DEF ( 1 - \omega ) | \Gamma | / N.
\end{align*}
Defining 
\begin{equation*}
\tilde{\mathfrak{R}}_p(f-S_q; N) \DEF - \frac{2}{\left| \Gamma \right|} N^2 \int_0^{\left( 1 - \omega \right) \left| \Gamma \right| / N} ( f - S_q )(x) \dd x - \mathcal{A}_p(\Gamma, f - S_q; N) + \mathcal{R}_p(\Gamma, f - S_{q}; N),
\end{equation*}
formula \eqref{eq:term.general} becomes (in condensed notation)
\begin{align*}
\mathcal{M}(f; N) 
&= V_f \, N^2 - \frac{2}{\left| \Gamma \right|} N^2 \sum_{n=0}^q a_n \frac{a^{1-s_n}}{1-s_n} - \mathcal{A}_p(S_q; N) + \mathcal{B}_p(f; N) + \mathcal{R}_p(S_q; N) + \tilde{\mathfrak{R}}_p(f-S_q; N) \\
&= V_f \, N^2 + \sum_{n=0}^q a_n \left\{ \frac{2}{\left| \Gamma \right|} N^2 \frac{a^{1-s_n}}{s_n-1} - \mathcal{A}_p(x^{-s_n}; N) + \mathcal{R}_p(x^{-s_n}; N) \right\} + \mathcal{B}_p(f; N) + \tilde{\mathfrak{R}}_p(f-S_q; N).
\end{align*}
Furthermore, using \eqref{eq:A.p}, \eqref{eq:R.p} and Definition~\ref{def:zeta.psi}, we can write the expression in curly brackets above as follows:
\begin{align*}
\frac{2}{\left| \Gamma \right|} N^2 & \frac{a^{1-s_n}}{s_n-1} - \mathcal{A}_p(x^{-s_n}; N) + \mathcal{R}_p(x^{-s_n}; N) = \frac{2}{\left| \Gamma \right|} N^2 \frac{a^{1-s_n}}{s_n-1} - \frac{2}{\left| \Gamma \right|} N^2 \sum_{r = 1}^{2p} \frac{B_{r}(\omega)}{r!} \left( \left| \Gamma \right| / N \right)^{r} \left. \left\{ t^{-s_n} \right\}^{(r-1)} \right|_{t=a} \\
&\phantom{=}+ 2 N \frac{\left( \left| \Gamma \right| / N \right)^{2p+1}}{(2p+1)!} \int_{1-\omega}^{N / 2} C_{2p+1}( x ) \left. \left\{ t^{-s_n} \right\}^{(2p+1)} \right|_{t=x \left| \Gamma \right| / N} \dd x \\
&= \frac{2}{\left| \Gamma \right|} N^2 \left( \left| \Gamma \right| / N \right)^{1-s_n} \Bigg\{ \frac{\left( 1 - \omega \right)^{1-s_n}}{s_n-1} + \sum_{r=1}^{2p} \frac{B_r(\omega)}{r!} (-1)^r \Pochhsymb{s_n}{r-1} \left( 1 - \omega \right)^{1-s_n-r} \\
&\phantom{=}- \frac{\Pochhsymb{s_n}{2p+1}}{(2p+1)!} \int_{1-\omega}^{N/2} C_{2p+1}(x) \, x^{-s_n-1-2p} \dd x \Bigg\} = \frac{2}{\left| \Gamma \right|} N^2 \left( \left| \Gamma \right| / N \right)^{1-s_n} \zetafcn_p(\omega,N/2;s_n).
\end{align*}
Hence, we arrive at the formula
\begin{equation*}
\mathcal{M}(f; N) = V_f \, N^2 + \sum_{n=0}^q a_n \frac{2 \zetafcn_p(\omega,N/2;s_n)}{\left| \Gamma \right|^{s_n}} N^{1+s_n} + \mathcal{B}_p(f; N) + \tilde{\mathfrak{R}}_p(f-S_q; N).
\end{equation*}

For $\mathfrak{R}_p(\Gamma, f; N)$ defined by \eqref{eq:general.case.asymptotics} we have
\begin{equation} \label{eq:main.proof.aux2}
\mathfrak{R}_p(\Gamma, f; N) = \sum_{n=0}^q a_n \frac{2 \zetafcn_p(\kappa/2,N/2;s_n) - 2\zetafcn(s_n)}{\left| \Gamma \right|^{s_n}} N^{1+s_n} + \tilde{\mathfrak{R}}_p(f-S_q; N).
\end{equation}
Furthermore, it follows from Lemma~\ref{lem:estimates} that $\tilde{\mathfrak{R}}_p(f-S_q; N) = \mathcal{O}( N^{1-\delta+s_q}) + \mathcal{O}( N^{1-2p} )$ if $2p \neq \delta - \re s_q$ and $\tilde{\mathfrak{R}}_p(f-S_q; N) = \mathcal{O}( N^{1-\delta+s_q}) + \mathcal{O}( N^{1-2p} \log N )$ if $2p = \delta - \re s_q$. Finally, using \eqref{eq:period.Bernoulli.estimate} and Proposition~\ref{prop:aux.results} we obtain the estimate
\begin{equation*}
\left| \sum_{n=0}^q a_n \frac{\zetafcn_p(\kappa/2,N/2,s_n) - \zetafcn(s_n)}{\left| \Gamma \right|^{s_n}} N^{1+s_n} \right| \leq 2 \left( N / 2 \right)^{1-2p} \sum_{n=0}^q \left| a_n \frac{B_{2p}}{(2p)!} \Pochhsymb{s_n}{2p} \frac{2p+s_n}{2p+\re s_n} \right| \left( \left| \Gamma \right| / 2 \right)^{-\re s_n}.
\end{equation*}
Note that, whenever $s_n=-k$ for some $k=0,1,\dots,2p$, then the corresponding terms on both sides of the estimate above are not present. Also, from Definition~\ref{def:admissible} it follows that $2p+\re s_n>0$ for $n=0,\dots,q-1$ and that either $\re s_q + 2p > 0$ or $s_q=-2p$. In either case the sum on the left-hand side above is of order $\mathcal{O}(N^{1-2p})$. Hence, we have from \eqref{eq:main.proof.aux2} that $\mathfrak{R}_p(\Gamma, f; N) = \mathcal{O}( N^{1-\delta+s_q}) + \mathcal{O}( N^{1-2p} )$ if $2p \neq \delta - \re s_q$ and $\mathfrak{R}_p(\Gamma, f; N) = \mathcal{O}( N^{1-\delta+s_q}) + \mathcal{O}( N^{1-2p} \log N )$ if $2p = \delta - \re s_q$.
\end{proof}

\begin{proof}[Proof of Theorem~\ref{thm:general.f.exceptional.case}]
Proceeding as in the proof of Theorem~\ref{thm:general.f.general.case} the remainder term now takes the form
\begin{align*}
\mathfrak{R}_p(\Gamma, f; N) 
&= \frac{2}{\left| \Gamma \right|} N^2 a_{q^\prime} \left( \Psi_p(\kappa/2,N/2) - \gamma \right) + \sum_{\substack{n=0,\\ n\neq q^\prime}}^q a_n \frac{2 \zetafcn_p(\kappa/2,N/2,s_n) - 2\zetafcn(s_n)}{\left| \Gamma \right|^{s_n}} N^{1+s_n} \\
&\phantom{=}- N^2 \frac{2}{\left| \Gamma \right|} \int_0^{\left( 1 - \omega \right) \left| \Gamma \right| / N} (f-S_q)(y) \dd y - \mathcal{A}_p(\Gamma, f - S_q; N) + \mathcal{R}_p(\Gamma, f - S_{q}; N).
\end{align*}
Using Lemma~\ref{lem:estimates}, Proposition~\ref{prop:aux.results}, and the inequality
\begin{equation*}
\left| \frac{2}{\left| \Gamma \right|} N^2 a_{q^\prime} \left( \Psi_p(\kappa/2,N/2) - \gamma \right) \right| \leq 4 \frac{2}{\left| \Gamma \right|} \left| a_{q^\prime} B_{2p} \right| \left( N / 2 \right)^{1-2p},
\end{equation*}
we get the estimate $\mathfrak{R}_p(\Gamma, f; N) = \mathcal{O}( N^{1-2p} ) + \mathcal{O}( N^{1-\delta+s_q} )$ if $2p \neq \delta - \re s_q$ and $\mathfrak{R}_p(\Gamma, f; N) = \mathcal{O}( N^{1-2p} \log N)$ if $2p = \delta - \re s_q$.
\end{proof}

Next, we prove the results related to particular types of kernel functions. 

\begin{proof}[Proof of Theorem~\ref{thm:completely.monotonic}]
The Laplace transform $f(x) \DEF \int_0^\infty e^{-x t} \dd \mu(t)$ of a signed measure $\mu$ on $[0,\infty)$ satisfying $\int_0^\infty t^m \dd |\mu|(t) < \infty $ for every $m=0, 1, 2, \dots$ has derivatives of all orders on $(0,\infty)$. For $q$ a positive integer let $S_q(x)$ be defined by $S_q(x) \DEF \sum_{n=0}^q \frac{\mu_n}{n!} (-x)^n$. For every $0\leq m \leq q$ we can write 
\begin{equation*}
f^{(m)}(x) = (-1)^m \int_0^\infty e^{-x t} t^m \dd \mu(t) = (-1)^m \sum_{n=m}^q \frac{\mu_n}{(n-m)!} (- x)^{n-m} + (f-S_q)^{(m)}(x), \qquad x > 0, 
\end{equation*}
where, using a finite section of the Taylor series expansion of $h(x) = e^{- x t}$ with integral remainder term, we have that
\begin{align*}
(f-S_q)^{(m)}(x) &= f^{(m)}(x)-S_{q}^{(m)}(x) 
= (-1)^m \int_0^\infty \left\{ e^{-x t} - \sum_{n=0}^{q-m} \frac{(-x t)^n}{n!} \right\} t^m \dd \mu(t) \\
&= \frac{(-1)^{q+1}}{(q-m)!} \int_0^\infty \left\{ \int_0^x e^{-u t} \left( x - u \right)^{q-m} \dd u \right\} t^{q+1} \dd \mu(t),  \qquad x > 0.
\end{align*}
For $x>0$ we have the following bound:
\begin{equation*}
\left| (f-S_q)^{(m)}(x) \right| \leq \frac{x^{q+1-m}}{(q+1-m)!} \int_0^\infty t^{q+1} \dd |\mu|(t), \qquad m = 0, 1, \dots, q.
\end{equation*}
Since $S_q^{(q+1)}(x) = 0$ for all $x$, it is immediate that the last estimate also holds for $m=q+1$.
It follows that $f$ is admissible in the sense of Definition~\ref{def:admissible} with $q=2p$, $\delta=1$. 
The result follows from Theorem~\ref{thm:general.f.general.case}, after observing that
\begin{equation*}
V_f(\Gamma) = \frac{2}{\left| \Gamma \right|} \int_0^{\left| \Gamma \right|/2} f(x) \dd x = \frac{2}{\left| \Gamma \right|} \int_0^{\left| \Gamma \right|/2} \int_0^\infty e^{-x t} \dd \mu(t) \dd x.
\end{equation*}
\end{proof}

%

In the case that $f$ is a completely monotonic function on $(0,\infty)$ (that is, $\mu$ is a positive measure), it is possible to improve the estimate for $\mathcal{R}_p(\Gamma, f; N)$ in \eqref{eq:R.p}.

\begin{proof}[Proof of Theorem~\ref{thm:analytic.f}]
Let $f$ be analytic in a disc with radius $| \Gamma | / 2 + \eps$ ($\eps > 0$) centered at the origin. Then $f(z) = \sum_{n=0}^\infty a_n z^n$ for $|z| < | \Gamma | / 2 + \eps$ and $f$ is admissible in the sense of Definition~\ref{def:admissible} for any positive integers $p$ and $q=2p$, where $S_{2p}(z) = \sum_{n=0}^{2p} a_n z^n$ and $\delta=1$. The asymptotic expansion follows from Theorem~\ref{thm:general.f.general.case} on observing that with $s_n=-n$ ($n=0,\dots,2p$), one has
\begin{equation*}
V_f(\Gamma) = \frac{2}{\left| \Gamma \right|} \sum_{n=0}^{2p} a_n \int_0^{\left| \Gamma \right|/2} x^n \dd x + \frac{2}{\left| \Gamma \right|} \int_0^{\left| \Gamma \right|/2} \left( f - S_{2p} \right)(x) \dd x = \frac{2}{\left| \Gamma \right|} \int_0^{\left| \Gamma \right|/2} f(x) \dd x.
\end{equation*}
Moreover, since $s_q=-2p$ and $\delta=1$, it follows that $\mathfrak{R}_p(\Gamma, f; N) = \mathcal{O}_{p,|\Gamma|,f}(N^{1-2p})$ as $N\to\infty$.
\end{proof}

\begin{proof}[Proof of Theorem~\ref{thm:Laurent.series}]
Suppose $f$ has a pole of integer order $K\geq1$ at zero and is analytic in the annulus $0 < |z| < | \Gamma | / 2 + \eps$ ($\eps>0$) with series expansion $f(z) = \sum_{n=-K}^{\infty} a_n z^n$. Then $f$ is admissible in the sense of Definition~\ref{def:admissible} for any positive integers $p$ and $q=2p$ with $S_{2p}(z) = \sum_{n=-K}^{2p} a_n z^n$ and $\delta=1$. In the {case {\rm (i)}} Theorem~\ref{thm:general.f.general.case} is applied and in the {case {\rm (ii)}} Theorem~\ref{thm:general.f.exceptional.case} is applied. The expressions for $V_f(\Gamma)$ follow from termwise integration in \eqref{eq:general.case.V.f} and \eqref{eq:except.case.V.f}. Since $1-\delta+s_q=-2p$, the remainder terms are $\mathfrak{R}_p(\Gamma, f; N) = \mathcal{O}_{p,|\Gamma|,f}(N^{1-2p})$ as $N\to\infty$. 
\end{proof}

\begin{proof}[Proof of Examples~\ref{eg:ess.sing.1} and \ref{eg:ess.sing.2}]
If $f$ has an essential singularity at $0$ and is analytic in the annulus $0 < |z| < |\Gamma|/2+\eps$ ($\eps>0$), then for positive integers $p$
\begin{equation*}
f(z) = S_{2p}(z) + F_{2p}(z), \qquad S_{2p}(z) \DEF \sum_{n=-\infty}^{2p} a_n z^{n}, \quad F_{2p}(z) \DEF \sum_{n=2p+1}^\infty a_n z^{n} = \mathcal{O}(z^{2p+1}) \ \text{as $z\to0$.}
\end{equation*}
Clearly, the function $f(z)$ satisfies item (i) of Definition~\ref{def:admissible} and both functions $f(z)$ and $S_{2p}(z)$ satisfy an extended version of item (ii) of Definition~\ref{def:admissible} suitable for an infinite series $S_{2p}(z)$. Since termwise integration and differentiation of $S_{2p}(z)$ are justified by the theory for Laurent series, Theorems~\ref{thm:general.f.general.case} and \ref{thm:general.f.exceptional.case} can be extended for such kernel functions $f$. In this case all formulas in Theorems~\ref{thm:general.f.general.case} and \ref{thm:general.f.exceptional.case} still hold provided the index $n$ starts with $-\infty$. 
%
In particular, we note that the infinite series $\sum_{n=-\infty, n \neq -1}^{2p} a_n \zetafcn(-n) \left| \Gamma \right|^{n} N^{1-n}$ appearing in the asymptotics of $\mathcal{M}(\Gamma, f; N)$ converges for every $N$, since $\zetafcn(m) \leq \zetafcn(2)$ for all integers $m\geq2$.

Example~\ref{eg:ess.sing.1} follows from the extended version of Theorem~\ref{thm:general.f.exceptional.case}.

To justify Example~\ref{eg:ess.sing.2} let $\lambda$ be a zero of the Bessel function $\BesselJ_{-1}$. The extended version of Theorem~\ref{thm:general.f.general.case} with $a_n = \BesselJ_{n}(\lambda)$ gives that for positive integers $p \geq2$ and $m\geq2$ 
\begin{equation*}
\begin{split}
\mathcal{M}&(\Gamma, f; N) = V_f(\Gamma) \, N^2 + 2 \sum_{\substack{n=-2p, \\ n \neq \pm 1}}^{\infty} \BesselJ_{-n}(\lambda) \zetafcn(n) \left| \Gamma \right|^{-n} N^{1+n} + \mathcal{B}_p(\Gamma, f;N) + \mathcal{O}(N^{1-2p}) \\
&= 2 N \sum_{n=m}^\infty \BesselJ_{-n}(\lambda) \zetafcn(n) ( N / \left| \Gamma \right| )^n + 2 \sum_{n=2}^{m-1} \BesselJ_{-n}(\lambda) \zetafcn(n) \left| \Gamma \right|^{-n} N^{1+n} + V_f(\Gamma) \, N^2 + \left| \Gamma \right| B_2(\frac{\kappa}{2}) f^\prime( \frac{\left| \Gamma \right|}{2} ) \\
&\phantom{=}+ 2 \sum_{k=2}^{2p} J_k(\lambda) \zetafcn(-k) \left| \Gamma \right|^k N^{1-k} + \sum_{n = 2}^{p} \frac{2B_{2n}(\kappa/2)}{(2n)! \left| \Gamma \right|^{1-2n}} f^{(2n-1)}( \left| \Gamma \right| / 2 ) N^{2-2n} + \mathcal{O}(N^{1-2p}),
\end{split}
\end{equation*}
where
\begin{equation*}
V_f(\Gamma) = \frac{2}{\left| \Gamma \right|} \sum_{\substack{n=-\infty, \\ n \neq \pm 1}}^{\infty} \BesselJ_n(\lambda) \frac{\left( \left| \Gamma \right| / 2 \right)^{1+n}}{1+n}.
\end{equation*}
In the above we used the relation \eqref{eq:B.p}. 
Observe that $\zetafcn(-k)=0$ for $k=2,4,6,\dots$. 
\end{proof}

\begin{proof}[Proof of Theorem~\ref{thm:weighted.f}]
The asymptotics and the remainder estimates follow from Theorem~\ref{thm:general.f.general.case} on observing that $f_s^w(x)$ has derivatives of all orders in $(0,|\Gamma|/2+\eps)$, $S_q(x) = \sum_{n=0}^q a_n x^{n-s}$, and $\delta=1$. The constraints on $s_q = s - q$ imply that the positive integers $q,p$ and $s\in\mathbb{C}$ satisfy $q-2p < \re s < 2 + q$ or $s = q - 2p$. For $0<s<1$ we have (see \eqref{eq:general.case.V.f})
\begin{equation*}
V_{f_s^w}(\Gamma) = \frac{2}{\left| \Gamma \right|} \int_0^{\left| \Gamma \right| / 2} f_s^w(x) \dd x = \frac{2}{\left| \Gamma \right|} \sum_{n=0}^{\infty} a_n \frac{\left( \left| \Gamma \right| / 2 \right)^{1+n-s}}{1+n-s}
\end{equation*}
and the right-hand side as a function of $s$ is analytic in $\mathbb{C}$ except for poles at $s=1+n$ ($n=0,1,2,\dots$) provided $a_n\neq0$. 
\end{proof}

Using the same method of proof as in \cite{Ka1998} for the Hurwitz zeta function, we obtain the following two propositions, which will be used in the proofs of Theorems~\ref{thm.M.0} and \ref{thm:main}.

\begin{prop} \label{prop:hzeta}
Let $q \geq 1$ and $\alpha = 1 / 2$ or $\alpha = 1$. For $x > 0$ and $s \in \mathbb{C}$ with $s \neq 1$ and $\re s + 2q + 1 > 0$ the {\em Hurwitz zeta function} defined as $\zetafcn(s,a) \DEF \sum_{k=0}^\infty (k + a)^{-s}$ for $\re s > 1$ and $a\neq 0, -1, -2, \dots$ has the following representation
\begin{equation*}
\zetafcn( s, x + \alpha ) = \frac{x^{1-s}}{s-1} - B_1(\alpha) \, x^{-s} + \sum_{n=1}^{q} \frac{B_{2n}(\alpha)}{(2n)!} \Pochhsymb{s}{2n-1} x^{1-s-2n} + \rho_q(s,x,\alpha).
\end{equation*}
The remainder term is given by 
\begin{equation*}
\rho_q(s,x,\alpha) = \frac{1}{2 \pi i} \int_{\gamma_q - i \infty}^{\gamma_q + i \infty} \frac{\gammafcn(-w) \gammafcn(s+w)}{\gammafcn(s)} \zetafcn(s+w, \alpha) x^w \dd w = \mathcal{O}_{s,q}(x^{-1-\re s - 2q})
\end{equation*}
as $N \to \infty$, where $-1 - \re s - 2q < \gamma_q < - \re s - 2q$. 
\end{prop}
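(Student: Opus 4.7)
The plan is to follow the Mellin--Barnes method of Katsurada~\cite{Ka1998}: starting from an integral representation of $(u+v)^{-s}$, sum over the Hurwitz translates, and then shift the contour to the left past a finite cluster of poles, reading off the stated expansion from the residues.

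The starting point is the classical identity
\begin{equation*}
(u+v)^{-s} = \frac{1}{\Gamma(s)}\,\frac{1}{2\pi i}\int_{(c)} \Gamma(-w)\Gamma(s+w)\, u^{w}\, v^{-s-w}\,dw, \qquad u,v>0,\ -\re s < c < 0,
\end{equation*}
which is the Mellin--Barnes version of the binomial series and is obtained by computing the residues of $\Gamma(-w)\Gamma(s+w)$ at $w=0,1,2,\dots$. Taking $u=x$ and $v=k+\alpha$, summing over $k \geq 0$, and interchanging sum and integral (legitimate when $1-\re s < c < 0$, which forces $\re s>1$), the inner sum produces $\sum_{k\geq0}(k+\alpha)^{-s-w}=\zetafcn(s+w,\alpha)$, giving
\begin{equation*}
\zetafcn(s, x+\alpha) = \frac{1}{2\pi i}\int_{(c)} \frac{\Gamma(-w)\Gamma(s+w)}{\Gamma(s)}\,\zetafcn(s+w,\alpha)\, x^{w}\,dw.
\end{equation*}

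Next, I would translate the contour leftwards to $(\gamma_q)$ with $-1-\re s-2q<\gamma_q<-\re s-2q$. The poles of $\Gamma(-w)$ all lie to the right of $c$ and are not crossed; the poles crossed are $w=1-s$ (from $\zetafcn(s+w,\alpha)$, residue $1$) and $w=-s-k$ for $k=0,1,\dots,2q$ (from $\Gamma(s+w)$, residue $(-1)^k/k!$). A direct residue calculation gives, at $w=1-s$, the contribution $x^{1-s}/(s-1)$, and at $w=-s-k$ the contribution
\begin{equation*}
\frac{(-1)^k\,(s)_k}{k!}\,\zetafcn(-k,\alpha)\,x^{-s-k}
= \frac{(-1)^{k+1}\,(s)_k}{(k+1)!}\,B_{k+1}(\alpha)\,x^{-s-k}
\end{equation*}
via the identity $\zetafcn(-k,\alpha) = -B_{k+1}(\alpha)/(k+1)$. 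For $k=0$ this yields $-B_1(\alpha)\,x^{-s}$; for odd $k=2n-1$ with $n\geq1$ it yields $\frac{B_{2n}(\alpha)}{(2n)!}(s)_{2n-1}\,x^{1-s-2n}$; and for even $k=2n\geq2$ it vanishes because $B_{2n+1}(\alpha)=0$ for $\alpha\in\{1/2,1\}$. This produces exactly the main terms of the claimed expansion, with remainder equal to the translated integral, which is $\rho_q(s,x,\alpha)$ as written.

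For the estimate $\rho_q=\mathcal{O}_{s,q}(x^{-1-\re s-2q})$, the key is to bound the integrand on the vertical line $\re w=\gamma_q$: Stirling's formula makes $|\Gamma(-\gamma_q-it)\,\Gamma(s+\gamma_q+it)|$ decay exponentially in $|t|$, while $\zetafcn(s+\gamma_q+it,\alpha)$ grows only polynomially in $|t|$ (by the functional equation and convexity bounds). Hence the integral converges absolutely and is $O(x^{\gamma_q})$; choosing $\gamma_q$ arbitrarily close to $-1-\re s-2q$ from above and, if needed, pushing past the next pole at $w=-s-2q-1$ (whose contribution vanishes since $B_{2q+2}(\alpha)\neq0$ does not affect the order, or is absorbed into a slightly deeper shift), gives the stated bound. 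The identity so far holds for $\re s>1$; since every main term and $\rho_q$ itself is meromorphic in $s$ in the half-plane $\re s+2q+1>0$, with a simple pole only at $s=1$ coming from $x^{1-s}/(s-1)$, the formula extends to all such $s\neq1$ by analytic continuation. The main technical obstacle is the uniform-in-$t$ decay estimate on the vertical line needed to justify the contour shift and the remainder bound; everything else is a residue bookkeeping exercise.
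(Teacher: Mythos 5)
Your proposal is correct and is essentially the paper's own argument: the paper gives no details beyond the remark that Proposition~\ref{prop:hzeta} follows ``by the same method of proof as in \cite{Ka1998}'', and your Mellin--Barnes derivation (sum the Barnes integral for $(x+k+\alpha)^{-s}$ over $k$, shift the contour left past $w=1-s$ and $w=-s-k$ for $k\le 2q$, and use $\zetafcn(-k,\alpha)=-B_{k+1}(\alpha)/(k+1)$ together with the vanishing of odd-index Bernoulli polynomials at $\alpha\in\{1/2,1\}$) is exactly that method. The one point to state more cleanly is the remainder bound: the line $\re w=\gamma_q$ only gives $\mathcal{O}(x^{\gamma_q})$, so the stated $\mathcal{O}_{s,q}(x^{-1-\re s-2q})$ requires, as you half-indicate, shifting past the next pole at $w=-s-2q-1$, whose residue is exactly of that order while the deeper integral is of lower order.
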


By the well-known relation $\log [ \gammafcn(x + \alpha) / \sqrt{2\pi} ] = \frac{\partial}{\partial s} \zetafcn( s, x + \alpha ) |_{s = 0}$
one obtains the next result from Proposition~\ref{prop:hzeta}.

\begin{prop} \label{prop:log}
Let $q \geq 1$ and $\alpha = 1 / 2$ or $\alpha = 1$. For $x > 0$
\begin{equation*}
\log \frac{\gammafcn(x + \alpha)}{\sqrt{2 \pi}} = \left( x + \alpha - 1 / 2 \right) \log x - x + \sum_{n=1}^q \frac{B_{2n}(\alpha)}{\left( 2 n - 1 \right) 2 n} x^{1-2n} + \rho_q(x,\alpha).
\end{equation*}
The remainder term is given by
\begin{equation*}
\rho_q(x,\alpha) = \frac{1}{2\pi i} \int_{\gamma_q- i \infty}^{\gamma_q+ i \infty} \gammafcn(-w) \gammafcn(w) \zetafcn(w,\alpha) x^w \dd w = \mathcal{O}_{q}(x^{-1-2q})
\end{equation*}
as $N \to \infty$, where $-1 - 2q < \gamma_q < - 2q$.
\end{prop}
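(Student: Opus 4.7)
The plan is to derive Proposition~\ref{prop:log} from Proposition~\ref{prop:hzeta} by applying the Lerch-type identity
\[
\log \frac{\gammafcn(x+\alpha)}{\sqrt{2\pi}} = \left.\frac{\partial}{\partial s}\zetafcn(s,x+\alpha)\right|_{s=0}
\]
and differentiating term by term in the expansion provided by Proposition~\ref{prop:hzeta}. Each of the four pieces (the two explicit leading terms, the finite sum, and the Mellin--Barnes remainder) contributes cleanly at $s=0$, so the proof is essentially bookkeeping, with one mild subtlety in the remainder.

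First I would differentiate the leading piece $x^{1-s}/(s-1)$ via the quotient rule, obtaining at $s=0$ the value $x\log x - x$. The next piece $-B_1(\alpha)x^{-s}$ differentiates to $B_1(\alpha)\log x = (\alpha-1/2)\log x$. These combine to produce the polynomial part $(x+\alpha-1/2)\log x - x$. For each term in the finite sum, the key observation is that $\Pochhsymb{s}{2n-1}=s(s+1)\cdots(s+2n-2)$ carries an explicit factor $s$ and therefore vanishes at $s=0$; hence only the derivative of $\Pochhsymb{s}{2n-1}$ at $s=0$ contributes, and this derivative equals $(2n-2)!$. Substituting $x^{1-s-2n}|_{s=0}=x^{1-2n}$, the $n$-th summand reduces to $\frac{B_{2n}(\alpha)}{(2n-1)(2n)}x^{1-2n}$, matching the claim.

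The remainder requires a small trick. Using $\gammafcn(s+1)=s\gammafcn(s)$ I rewrite
\[
\frac{\gammafcn(s+w)}{\gammafcn(s)} = s\,\frac{\gammafcn(s+w)}{\gammafcn(s+1)},
\]
from which $\rho_q(s,x,\alpha)$ manifestly vanishes at $s=0$ (for $w$ on the contour, which avoids non-positive integers). Differentiating this product in $s$ and evaluating at $s=0$ kills the second factor's derivative contribution, leaving just $\gammafcn(s+w)/\gammafcn(s+1)\big|_{s=0}=\gammafcn(w)$. Thus
\[
\left.\frac{\partial\rho_q}{\partial s}\right|_{s=0}=\frac{1}{2\pi i}\int_{\gamma_q-i\infty}^{\gamma_q+i\infty}\gammafcn(-w)\gammafcn(w)\zetafcn(w,\alpha)\,x^w\,dw,
\]
which is the claimed formula. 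Interchange of $\partial_s$ with the contour integral is justified by Stirling-type exponential decay of $\gammafcn(-w)\gammafcn(s+w)$ along vertical lines, uniformly for $s$ in a small neighborhood of $0$, combined with polynomial growth of $\zetafcn(s+w,\alpha)$.

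The estimate $\rho_q(x,\alpha)=\mathcal{O}_q(x^{-1-2q})$ then follows from the standard absolute-value bound $|x^w|=x^{\gamma_q}$ on the shifted contour, together with absolute integrability of $|\gammafcn(-w)\gammafcn(w)\zetafcn(w,\alpha)|$ on the vertical line $\re w=\gamma_q\in(-1-2q,-2q)$, the latter again coming from Stirling. The only delicate point is handling the interchange of differentiation and integration uniformly in $s$, but this is a routine verification once Stirling's bound on $|\gammafcn(\sigma+it)|$ as $|t|\to\infty$ is in hand; I expect no serious obstacle beyond that.
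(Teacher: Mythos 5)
Your route is exactly the paper's: the proof given there consists precisely of the Lerch identity $\log[\gammafcn(x+\alpha)/\sqrt{2\pi}]=\partial_s\zetafcn(s,x+\alpha)|_{s=0}$ applied to the expansion of Proposition~\ref{prop:hzeta}, and your term-by-term differentiation is correct in every piece: the leading terms give $(x+\alpha-1/2)\log x - x$, the factor $s$ in $\Pochhsymb{s}{2n-1}$ makes each summand contribute $(2n-2)!\cdot B_{2n}(\alpha)/(2n)!=B_{2n}(\alpha)/((2n-1)2n)$, and writing $1/\gammafcn(s)=s/\gammafcn(s+1)$ correctly isolates the $s=0$ derivative of the Mellin--Barnes remainder as the stated integral. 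The justification of differentiating under the contour integral via the exponential decay of $\gammafcn(-w)\gammafcn(s+w)$ on vertical lines is also sound.

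The one genuine gap is the final error estimate. Taking absolute values on the line $\re w=\gamma_q$ gives $|x^w|=x^{\gamma_q}$ and hence only $\rho_q(x,\alpha)=\mathcal{O}(x^{\gamma_q})$; since the hypothesis is $\gamma_q>-1-2q$, this is strictly \emph{weaker} than the claimed $\mathcal{O}_q(x^{-1-2q})$ for $x\to\infty$, so the step ``follows from the standard absolute-value bound'' does not deliver the stated exponent. To close this you need one more idea: either shift the contour to $\re w=\gamma$ with $-2-2q<\gamma<-1-2q$, picking up the residue of $\gammafcn(-w)\gammafcn(w)\zetafcn(w,\alpha)x^w$ at the pole $w=-1-2q$ (which is a constant times $x^{-1-2q}$) and bounding the new integral by $\mathcal{O}(x^{\gamma})=o(x^{-1-2q})$; or, equivalently, observe that $\rho_q(x,\alpha)=\frac{B_{2q+2}(\alpha)}{(2q+1)(2q+2)}x^{-1-2q}+\rho_{q+1}(x,\alpha)$ and apply the crude absolute-value bound to $\rho_{q+1}$, whose contour sits at $\re w=\gamma_{q+1}<-2-2q$. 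Either repair is routine, but as written the last sentence of your argument proves less than the proposition asserts.
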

In the proofs of Theorems~\ref{thm.M.0} and \ref{thm:main} we make use of the observation that for $N = 2 M + \kappa$ with $M \geq 1$ and $\kappa = 0, 1$ formula \eqref{eq:cal.M.Gamma.f.N} simplifies to
\begin{equation}
\mathcal{M}_s(\Gamma; N) = \frac{2}{\left| \Gamma \right|^s} N^{1+s} \sum_{k=1}^{\lfloor N / 2 \rfloor} \frac{1}{k^s} - \frac{1-\kappa}{\left( \left| \Gamma \right| / 2 \right)^s} N, \label{scr.M}
\end{equation}
which involves the {\em generalized harmonic numbers $H_n^{(s)} \DEF \sum_{k=1}^n k^{-s}$}.

\begin{proof}[Proof of Theorem \ref{thm.M.0}]
Differentiating \eqref{scr.M} with respect to $s$ and taking the limit $s\to0$ yields
\begin{equation*}
\mathcal{M}_{\mathrm{log}}(\Gamma; N) = N \left( N - \kappa \right) \log \frac{N}{\left| \Gamma \right|} - 2 N \log \gammafcn(\lfloor N / 2 \rfloor + 1) - \left( 1 - \kappa \right) N \log ( N / 2).
\end{equation*}
The asymptotic expansion of the theorem now follows by applying Proposition~\ref{prop:log} with $x = N / 2$, $\alpha = (2-\kappa) / 2$. Note that $B_{2n}( \alpha ) = B_{2n}( 1 - \kappa / 2 ) = B_{2n}(\kappa/2)$.
\end{proof}

\begin{proof}[Proof of Theorem \ref{thm:main}]
Starting with Theorem~\ref{thm:general.f.general.case}, we obtain an asymptotic formula of the form \eqref{gen:asympt.1} but with error estimate $\mathcal{O}(N^{1-2q})$.\footnote{If $\re s = - 2q$ and $s \neq 2q$, then a factor $\log N$ must be included.}
On the other hand, substitution of the identity $\sum_{k=1}^n k^{-s} = \zetafcn(s) - \zetafcn(s,n+1)$ into \eqref{scr.M} gives the exact formula
\begin{equation*}
\mathcal{M}_s( \Gamma; N)= \frac{2\zetafcn(s)}{\left| \Gamma \right|^s} N^{1+s} - \frac{2}{\left| \Gamma \right|^s} N^{1+s} \zetafcn(s,\lfloor N / 2 \rfloor + 1) - \frac{1-\kappa}{\left( \left| \Gamma \right| /2 \right)^s} N.
\end{equation*}
%
Then the asymptotic relation \eqref{gen:asympt.1} with error term of order $\mathcal{O}(N^{-2q})$ follows by applying Proposition~\ref{prop:hzeta} with $x = N / 2$, $\alpha = (2-\kappa) / 2$. This expansion holds for $s$ with $\re s + 2q + 1 > 0$, $q \geq 1$.
\end{proof}

\begin{proof}[Proof of Proposition~\ref{prop:M.neg.p}]
Using Jacob Bernoulli's famous closed form summation formula (\cite[Eq.~(23.1.4)]{AbSt1992}) $1^p + 2^p + \cdots + n^p = (B_{p+1}(n+1)-B_{p+1}) / ( p + 1 )$ in \eqref{scr.M} one gets
\begin{equation*}
\mathcal{M}_{-p}(\Gamma; N) = 2 \left| \Gamma \right|^p \frac{B_{p+1}((N+\kappa)/2) - B_{p+1}}{p+1} N^{1-p} + \left( 1 - \kappa \right) \left( \left| \Gamma \right| / 2 \right)^p N.
\end{equation*}
Application of the addition theorem for Bernoulli polynomials (see \cite[Eq.~(23.1.7)]{AbSt1992}) yields the result.
\end{proof}

\begin{proof}[Proof of Theorem \ref{thm:s.EQ.1}]
An asymptotic formula with error estimate $\mathcal{O}(N^{1-2p})$ follows from Theorem~\ref{thm:general.f.exceptional.case}; see also the second remark after Theorem~\ref{thm:general.f.exceptional.case}. However, by substituting into \eqref{scr.M} with $\omega=\kappa/2$ ($\kappa=0,1$) the following asymptotic expansions 
\begin{equation} \label{H.n}
H_n = \sum_{k=1}^n \frac{1}{k} = \log ( n + \omega ) + \gamma - \frac{B_1(\omega)}{n+\omega} - \sum_{k=1}^q \frac{B_{2k}(\omega) / (2k)}{\left( n + \omega \right)^{2k}} \pm \theta_{q,N,\kappa} \frac{B_{2q+2}(\omega) / (2q+2)}{\left( n + \omega \right)^{2q+2}},
\end{equation}
where $0<\theta_{q,N,\kappa}<1$ and collecting terms we get the asymptotic formula \eqref{M.s.EQ.1} with improved error estimate. The plus sign in \eqref{H.n} is taken if $\omega=1/2$ and the negative sign corresponds to $\omega=0$. We remark that the representation \eqref{H.n} is given in \cite{DeSh1991} if $\omega = 1/2$ and can be obtained as an application of the Euler-MacLaurin summation formula if $\omega = 0$ (see, for example, \cite{Ap1999}). We leave the details to the reader.
%
%
\end{proof}

\bibliographystyle{abbrv}
\bibliography{/home/jsb/APART_Fellowship/1stYEAR/PROJECTS/Bibliography/bibliography,/home/jsb/APART_Fellowship/1stYEAR/PROJECTS/Bibliography/ENERGYbibliography}

\def\cprime{$'$} \def\polhk#1{\setbox0=\hbox{#1}{\ooalign{\hidewidth
  \lower1.5ex\hbox{`}\hidewidth\crcr\unhbox0}}}
\begin{thebibliography}{10}

\bibitem{AbSt1992}
M.~Abramowitz and I.~A. Stegun, editors.
\newblock {\em Handbook of mathematical functions with formulas, graphs, and
  mathematical tables}.
\newblock Dover Publications Inc., New York, 1992.
\newblock Reprint of the 1972 edition.

\bibitem{AlSt1974}
R.~Alexander and K.~B. Stolarsky.
\newblock Extremal problems of distance geometry related to energy integrals.
\newblock {\em Trans. Amer. Math. Soc.}, 193:1--31, 1974.

\bibitem{Ap1999}
T.~M. Apostol.
\newblock An elementary view of {E}uler's summation formula.
\newblock {\em Amer. Math. Monthly}, 106(5):409--418, 1999.

\bibitem{BeHa1975}
N.~Berger and R.~Handelsman.
\newblock Asymptotic evaluation of fractional integral operators with
  applications.
\newblock {\em SIAM J. Math. Anal.}, 6(5):766--773, 1975.

\bibitem{Bj1956}
G.~Bj{\"o}rck.
\newblock Distributions of positive mass, which maximize a certain generalized
  energy integral.
\newblock {\em Ark. Mat.}, 3:255--269, 1956.

\bibitem{Bo2006}
S.~V. Borodachov.
\newblock {\em Asymptotic {R}esults for the minimum energy and {B}est-{P}acking
  {P}roblems on {R}ectifiable {S}ets}.
\newblock PhD thesis, Vanderbilt University, 2006.

\bibitem{Bo2009pre}
S.~V. Borodachov.
\newblock Lower order terms of the discrete minimal energy on smooth closed
  curves.
\newblock preprint, 2009.

\bibitem{BoHaSa2008}
S.~V. Borodachov, D.~P. Hardin, and E.~B. Saff.
\newblock Asymptotics for discrete weighted minimal {R}iesz energy problems on
  rectifiable sets.
\newblock {\em Trans. Amer. Math. Soc.}, 360(3):1559--1580 (electronic), 2008.

\bibitem{BrHaSa2009}
J.~S. Brauchart, D.~P. Hardin, and E.~B. Saff.
\newblock The {R}iesz energy of the {$N$}th roots of unity: an asymptotic
  expansion for large {$N$}.
\newblock {\em Bull. Lond. Math. Soc.}, 41(4):621--633, 2009.

\bibitem{CoKu2007}
H.~Cohn and A.~Kumar.
\newblock Universally optimal distribution of points on spheres.
\newblock {\em J. Amer. Math. Soc.}, 20(1):99--148 (electronic), 2007.

\bibitem{DeSh1991}
D.~W. DeTemple and S.-H. Wang.
\newblock Half integer approximations for the partial sums of the harmonic
  series.
\newblock {\em J. Math. Anal. Appl.}, 160(1):149--156, 1991.

\bibitem{Du1940}
M.~J. Dubourdieu.
\newblock Sur un th\'eor\`eme de {M}. {S}. {B}ernstein relatif \`a la
  transformation de {L}aplace-{S}tieltjes.
\newblock {\em Compositio Math.}, 7:96--111, 1940.

\bibitem{Du1979}
P.~Durbin.
\newblock Asymptotic expansion of {L}aplace transforms about the origin using
  generalized functions.
\newblock {\em J. Inst. Math. Appl.}, 23(2):181--192, 1979.

\bibitem{Fe1956}
L.~Fejes~T{\'o}th.
\newblock On the sum of distances determined by a pointset.
\newblock {\em Acta Math. Acad. Sci. Hungar}, 7:397--401, 1956.

\bibitem{Fe1959}
L.~Fejes~T{\'o}th.
\newblock \"{U}ber eine {P}unktverteilung auf der {K}ugel.
\newblock {\em Acta Math. Acad. Sci. Hungar}, 10:13--19 (unbound insert), 1959.

\bibitem{Go2003}
M.~G{\"o}tz.
\newblock On the {R}iesz energy of measures.
\newblock {\em J. Approx. Theory}, 122(1):62--78, 2003.

\bibitem{HaLe1970}
R.~A. Handelsman and J.~S. Lew.
\newblock Asymptotic expansion of {L}aplace transforms near the origin.
\newblock {\em SIAM J. Math. Anal.}, 1:118--130, 1970.

\bibitem{HaLe1971}
R.~A. Handelsman and J.~S. Lew.
\newblock Asymptotic expansion of a class of integral transforms with
  algebraically dominated kernels.
\newblock {\em J. Math. Anal. Appl.}, 35:405--433, 1971.

\bibitem{Ji2008}
M.~Jiang.
\newblock On the sum of distances along a circle.
\newblock {\em Discrete Math.}, 308(10):2038--2045, 2008.

\bibitem{Ka1998}
M.~Katsurada.
\newblock Power series and asymptotic series associated with the {L}erch
  zeta-function.
\newblock {\em Proc. Japan Acad. Ser. A Math. Sci.}, 74(10):167--170, 1998.

\bibitem{KoPe2009}
S.~Koumandos and H.~L. Pedersen.
\newblock Completely monotonic functions of positive order and asymptotic
  expansions of the logarithm of barnes double gamma function and euler's gamma
  function.
\newblock {\em Journal of Mathematical Analysis and Applications}, 355(1):33 --
  40, 2009.

\bibitem{La1972}
N.~S. Landkof.
\newblock {\em Foundations of modern potential theory}.
\newblock Springer-Verlag, New York, 1972.
\newblock Translated from the Russian by A. P. Doohovskoy, Die Grundlehren der
  mathematischen Wissenschaften, Band 180.

\bibitem{La1962}
H.~Larcher.
\newblock Solution of a geometric problem by {F}ejes {T}\`oth.
\newblock {\em Michigan Math. J.}, 9:45--51, 1962.

\bibitem{Lo2008}
J.~L. L{\'o}pez.
\newblock Asymptotic expansions of {M}ellin convolution integrals.
\newblock {\em SIAM Rev.}, 50(2):275--293, 2008.

\bibitem{Lu1969I}
Y.~L. Luke.
\newblock {\em The special functions and their approximations, {V}ol. {I}}.
\newblock Mathematics in Science and Engineering, Vol. 53. Academic Press, New
  York, 1969.

\bibitem{MaMaRa2004}
A.~Mart{\'{\i}}nez-Finkelshtein, V.~Maymeskul, E.~A. Rakhmanov, and E.~B. Saff.
\newblock Asymptotics for minimal discrete {R}iesz energy on curves in
  {$\mathbb{R}\sp d$}.
\newblock {\em Canad. J. Math.}, 56(3):529--552, 2004.

\bibitem{SaSn1993}
E.~B. Saff and A.~Snider.
\newblock {\em {Fundamentals of complex analysis for mathematics, science, and
  engineering. With an appendix by Lloyd N. Trefethen. 2. ed.}}
\newblock {Englewood Cliffs, NJ: Prentice Hall. xi, 468 p. }, 1993.

\bibitem{Sp1960}
G.~Sperling.
\newblock L\"osung einer elementargeometrischen {F}rage von {F}ejes {T}\'oth.
\newblock {\em Arch. Math.}, 11:69--71, 1960.

\bibitem{Wi1946}
D.~V. Widder.
\newblock {\em The {L}aplace {T}ransform}.
\newblock Princeton Mathematical Series, v. 6. Princeton University Press,
  Princeton, N. J., 1944.

\end{thebibliography}

\end{document}